\documentclass[11pt, a4paper]{article}
\usepackage[letterpaper, left=1in, right=1in, top=0.9in, bottom=0.9in]{geometry}

\AtBeginDocument{%
  }

\bibliographystyle{alpha}


\usepackage{amsmath}
\usepackage{amssymb}
\usepackage{amsfonts}
\usepackage{amsthm} 
\usepackage{graphicx}
\usepackage{comment}

\usepackage{hyperref}
\usepackage{here}

\usepackage{cite}

\newtheorem{theorem}{Theorem}[section]
\newtheorem{lemma}[theorem]{Lemma}

\newtheorem{claim}[theorem]{Claim}

\newtheorem{definition}[theorem]{Definition}
\newtheorem{remark}[theorem]{Remark}

\global\long\def\M{\mathcal{M}}%
\global\long\def\I{\mathcal{I}}%
\global\long\def\Ss{S_1, \dots, S_k}%
\global\long\def\tO{\tilde{O}}%

\usepackage[linesnumbered,noend,ruled,vlined]{algorithm2e}

\SetKwInput{KwInput}{Input}     
\SetKwInput{KwOutput}{Output}   
\SetKw{KwTo}{to}

\usepackage{algorithmic}

\newcommand{\algorithmicbreak}{\textbf{break}}
\newcommand{\Break}{\algorithmicbreak}

\usepackage{color}

\ifdefined\TeraoModifyRedColor
\newcommand{\tatsuya}[1]{\textcolor{red}{#1}}
\newcommand{\tatsuyadel}[1]{\tatsuya{\sout{{#1}}}}

\else
\newcommand{\tatsuya}[1]{#1}
\newcommand{\tatsuyadel}[1]{}

\fi






\title{Faster Matroid Partition Algorithms}
\author{Tatsuya Terao
\thanks{Kyoto University.
E-mail: ttatsuya@kurims.kyoto-u.ac.jp}
}

\date{}

\begin{document}






\maketitle

\begin{abstract}
In the matroid partitioning problem,
we are given $k$ matroids $\mathcal{M}_1 = (V, \mathcal{I}_1), \dots , \mathcal{M}_k = (V, \mathcal{I}_k)$ defined over a common ground set $V$ of $n$ elements,
and we need to find a partitionable set $S \subseteq V$ of largest possible cardinality, denoted by $p$.
Here, a set $S \subseteq V$ is called partitionable 
if there exists a partition $(S_1, \dots , S_k)$ of $S$ with $S_i \in \mathcal{I}_i$ for $i = 1, \ldots, k$.
In 1986, Cunningham~[SICOMP 1986] presented a matroid partition algorithm that uses $O(n p^{3/2} + k n)$ independence oracle queries,
which was the previously known best algorithm.
This query complexity is $O(n^{5/2})$ when $k \leq n$.


\tatsuya{Our main result is to present a matroid partition algorithm that uses $\tilde{O}(k'^{1/3} n p + k n)$ independence oracle queries, where $k' = \min\{k, p\}$.
This query complexity is $\tilde{O}(n^{7/3})$ when $k \leq n$, and this improves upon the one of previous Cunningham's algorithm.}
To obtain this, we present a new approach \emph{edge recycling augmentation}, which can be attained through new ideas:
an efficient utilization of the binary search technique by Nguy$\tilde{{\hat{\text{e}}}}$n~[2019] and Chakrabarty-Lee-Sidford-Singla-Wong~[FOCS 2019] and
a careful analysis of the independence oracle query complexity.
Our analysis differs significantly from the one for matroid intersection algorithms,
because of the parameter $k$.
We also present a matroid partition algorithm that uses $\tilde{O}((n + k) \sqrt{p})$ rank oracle queries.
\end{abstract}

\section{Introduction}

The \emph{matroid partitioning problem}\footnote{The \emph{matroid partitioning problem} is sometimes called simply \emph{matroid partition}. Matroid partition is also called \emph{matroid union} or \emph{matroid sum}.} is one of the most fundamental problems in combinatorial optimization.
The problem is sometimes introduced as an important matroid problem along with the \emph{matroid intersection problem}; see~\cite[Section 41--42]{schrijver2003combinatorial} and \cite[Section 13.5--6]{korte2011combinatorial}.
In the problem,
we are given $k$ matroids $\M_1 = (V, \mathcal{I}_1), \ldots, \M_k = (V, \mathcal{I}_k)$
defined over a common ground set $V$ of $n$ elements, 
and the objective is to find a partitionable set $S \subseteq V$ of largest possible cardinality, denoted by $p$.
Here, we call a set $S \subseteq V$ partitionable 
if there exists a partition $(S_1, \dots , S_k)$ of $S$ with $S_i \in \mathcal{I}_i$ for $i = 1, \ldots, k$.
This problem has a number of applications
such as
matroid base packing, 
packing and covering of trees and forests, 
and Shannon switching game.
There are much more applications; 
see~\cite[Section 42]{schrijver2003combinatorial}.


To design an algorithm for arbitrary matroids,
it is common to consider an oracle model:
an algorithm accesses a matroid through an oracle.
The most standard and well-studied oracle is an \emph{independence oracle},
which takes as input a set $S \subseteq V$ and outputs whether $S \in \mathcal{I}$ or not.
Some recent studies for fast matroid intersection algorithms also consider
a more powerful oracle called \emph{rank oracle},
which takes as input a set $S \subseteq V$ and outputs the size of the maximum cardinality independent subset of $S$.
In the design of efficient algorithms,
the goal is to minimize the number of such oracle accesses in a matroid partition algorithm.
We consider both independence oracle model and rank oracle model, and 
present the best query algorithms for both oracle models.


The matroid partitioning problem is closely related to the matroid intersection problem. 
Actually, the matroid partitioning problem and the matroid intersection problem are polynomially equivalent; 
see \cite{edmonds1970submodular,edmonds2003submodular}.

In the \emph{matroid intersection problem},
we are given two matroids $\M' = (V, \mathcal{I}')$ and $\M'' = (V, \mathcal{I}'')$
defined over a common ground set $V$ of $n$ elements, and
the objective is to find a common independent set $S \in \mathcal{I}' \cap \mathcal{I}''$ of largest possible cardinality,
denoted by $r$.

Starting the work of Edmonds \cite{edmonds1968matroid,edmonds1979matroid,edmonds2003submodular} in the 1960s,
algorithms with polynomial query complexity for the matroid intersection problem have been
studied \cite{aigner1971matching, lawler1975matroid, cunningham1986improved, lee2015faster, chakrabarty2019faster, blikstad2021breaking_STOC, blikstad2021breaking, blikstad2023fast}.
Nguy$\tilde{{\hat{\text{e}}}}$n~\cite{nguyen2019note} and Chakrabarty-Lee-Sidford-Singla-Wong~\cite{chakrabarty2019faster} independently presented a new excellent binary search technique 
that can find edges in the exchange graph
and presented a first combinatorial algorithm that uses $\tilde{O}(n r)$ independence oracle queries\footnote{The $\tO$ notation omits factors polynomial in $\log n$.}.
Chakrabarty et al.~\cite{chakrabarty2019faster} also presented a $(1 - \epsilon)$ approximation matroid intersection algorithm that uses $\tO(n^{1.5}/\epsilon^{1.5})$ independence oracle queries.
Blikstad-van den Brand-Mukhopadhyay-Nanongkai~\cite{blikstad2021breaking_STOC} developed a fast algorithm to solve a graph reachability problem, and
broke the $\tilde{O}(n^2)$-independence-oracle-query bound by combining this with previous exact and approximation algorithms.
Blikstad~\cite{blikstad2021breaking} improved the independence query complexity of the approximation matroid intersection algorithm.
This leads to 
a randomized matroid intersection algorithm that uses $\tilde{O}(n r^{3/4})$ independence oracle queries,
which is currently the best algorithm for the matroid intersection problem for the full range of $r$.
This also leads to 
a deterministic matroid intersection algorithm that uses $\tilde{O}(n r^{5/6})$ independence oracle queries,
which is currently the best deterministic algorithm for the matroid intersection problem for the full range of $r$.




We can solve the matroid partitioning problem by using a reduction to the matroid intersection problem.
A well-known reduction reduces the matroid partition to the matroid intersection
whose ground set size is $k n$.
Here, one of the input matroids of the obtained matroid intersection instance is the direct sum of $k$ matroids.
This leads to a matroid partition algorithm using too many independence oracle queries.
Even if we use the currently best algorithm for matroid intersection,
the naive reduction leads to a matroid partition algorithm that uses $\tO(k^2 n p^{3/4})$ independence oracle queries.
Since the matroid partition problem itself is an important problem with several applications,
it is meaningful to focus on the query-complexity of the matroid partitioning problem.

A direct algorithm for the matroid partitioning problem was first given by Edmonds in 1968 \cite{edmonds1968matroid}.
Algorithms with polynomial query complexity for the matroid partitioning problem have been
studied in the literature \cite{knuth1973matroid, greene1975some, cunningham1986improved, gabow1988forests, frank2012simple, blikstad2023fast, quanrud2023}.

Cunningham \cite{cunningham1986improved} designed a matroid partition algorithm that uses $O(n p^{3/2} + k n)$ independence oracle queries.
He uses a \emph{blocking flow} approach, 
which is similar to Hopcroft-Karp's bipartite matching algorithm or Dinic's maximum flow algorithm.
The independence query complexity of Cunningham's algorithm is $O(n^{5/2})$ when $k \leq n$.
Note that $p \leq n$ obviously holds.
This was the best algorithm for the matroid partitioning problem for nearly four decades.
We study faster matroid partition algorithms by using techniques that were recently developed for fast matroid intersection algorithms.

Our first result is the following theorem, which is obtained by combining Cunningham's technique
and the binary search technique by \cite{nguyen2019note, chakrabarty2019faster}.

\begin{theorem}[Details in Theorem \ref{thm:matroid_partition_ind_oracle}] \label{matroid_partition_ind_oracle}
There is an algorithm that uses $\tilde{O}(k n \sqrt{p})$ independence oracle queries and solves the matroid partitioning problem.
\end{theorem}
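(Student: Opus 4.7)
The plan is to follow Cunningham's blocking-flow framework for matroid partition and replace his edge-exploration routine with the binary search primitive of \cite{nguyen2019note,chakrabarty2019faster}, thereby reducing the per-phase query cost from $O(np)$ to $\tilde{O}(kn)$ while keeping the number of phases at $O(\sqrt{p})$.

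Concretely, given a current partition $(S_1, \dots, S_k)$ of a partitionable set $S$, I would work with the standard matroid-partition exchange graph on vertex set $V$, with a virtual source $s$ adjacent to every $z \in V \setminus S$, a virtual sink $t$ reachable from every $x$ admitting some $i$ with $x \notin S_i$ and $S_i + x \in \mathcal{I}_i$, and an internal edge $y \to x$ whenever there exists $i$ with $x \in S_i$, $y \notin S_i$, and $S_i - x + y \in \mathcal{I}_i$. An $s$-to-$t$ path corresponds to an augmenting sequence increasing $|S|$ by one. The algorithm proceeds in phases: in each phase, perform a BFS from $s$ to compute layer labels, then a DFS to extract a maximal collection of vertex-disjoint shortest augmenting paths (a blocking flow), and augment along each. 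By the Hopcroft--Karp style argument adapted to matroid partition by Cunningham \cite{cunningham1986improved}, the shortest augmenting-path length strictly increases across phases, so $O(\sqrt{p})$ phases suffice.

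The heart of the proof is to bound each phase by $\tilde{O}(kn)$ independence queries. The binary search of \cite{nguyen2019note,chakrabarty2019faster}, given $y \notin S_i$ and a candidate subset $A \subseteq S_i$, first determines in one query whether the circuit $C_{S_i}(y)$ meets $A$ (by checking independence of $(S_i \setminus A) + y$) and, if so, locates an element $x \in A \cap C_{S_i}(y)$ in $O(\log n)$ queries. For the BFS, at each visited $y$ and each $i \in [k]$ I restrict the search to the currently unvisited elements of $S_i$; successful calls discover new layered-graph vertices (at most $n$ in total, since each vertex is discovered at most once), and each $(y,i)$ pair produces at most one unsuccessful call (at most $kn$ total), giving $\tilde{O}(kn)$ queries. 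For the DFS I find forward layered edges on the fly by binary-searching over successors that are alive and at the next layer; a standard blocking-flow charging argument bounds successful calls by the number of vertices placed on augmenting paths or declared dead ($O(n)$) and unsuccessful calls by the number of $(y,i)$ pairs marked exhausted ($O(kn)$), again $\tilde{O}(kn)$ per phase. Multiplying by $O(\sqrt{p})$ phases gives the claimed bound.

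The main obstacle I anticipate is the DFS accounting. Unlike the BFS, the DFS may revisit vertices during backtracking, and in matroid partition a single vertex has out-edges spread over up to $k$ different matroid labels. To keep the per-phase cost at $\tilde{O}(kn)$ rather than inflating to $\tilde{O}(knp)$, I would maintain for each $(y,i)$ pair a monotone pointer into the still-live elements of $S_i$ that is not reset when DFS returns to $y$ after a backtrack, ensuring that unsuccessful binary searches for a given $(y,i)$ occur at most once per phase even as successors are pruned. Combined with verifying that restricting the binary search to a shrinking subset $A \subseteq S_i$ preserves the circuit-finding guarantee, and invoking the standard vertex-disjointness analysis of blocking flows, this completes the proof of the $\tilde{O}(kn\sqrt{p})$ bound.
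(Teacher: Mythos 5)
Your overall architecture is exactly the paper's: Cunningham's blocking-flow phases, $O(\sqrt{p})$ phases via the distance lower bound, and $\tilde{O}(kn)$ queries per phase by replacing edge enumeration with the binary-search primitive, with the same accounting (at most one unsuccessful \texttt{FindOutEdge} call per $(y,i)$ pair, successful calls charged to vertex discoveries and removals). The query bound and the phase count are correct.

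The one genuine gap is in the correctness of a phase. You describe a phase as ``extract a maximal collection of vertex-disjoint shortest augmenting paths (a blocking flow), and augment along each.'' In matroid partition this cannot be done as stated: a single augmentation changes the compressed exchange graph, so a collection of vertex-disjoint shortest paths found up front need not all remain augmenting after the earlier paths in the collection are applied. The paths must be found and applied one at a time, and what actually has to be proved is that the layer labels computed by the BFS at the start of the phase, and the ``$(y,i)$ exhausted'' markers accumulated by the DFS, remain valid guides after each intermediate augmentation even though the sets $S_i$ have changed. This is not the standard vertex-disjointness analysis of blocking flows from bipartite matching; it needs two matroid-specific facts: the monotonicity lemma (distances from $s$ and to the sinks never decrease under shortest-path augmentation), and Cunningham's lemma that every vertex on a shortest augmenting path has its distance from $s$ strictly increase after the augmentation. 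Together these show that a vertex with no surviving out-edge into the next layer, or a vertex already used on a path, can never again lie on an augmenting path of the current length --- which is what licenses both the layer pruning and the non-resetting of your monotone $(y,i)$ pointers across augmentations (the pointer being monotone does not by itself suffice, since $S_i$ itself changes). With those two lemmas imported from Cunningham and the phase reorganized as find-one-path-then-augment, your argument goes through and coincides with the paper's proof.
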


The independence query complexity of the algorithm given in Theorem~\ref{matroid_partition_ind_oracle} improves upon the one of Cunningham's algorithm~\cite{cunningham1986improved}
when $k$ is small.
However, when $k = \Theta(n)$, the independence query complexity of the algorithm given in Theorem~\ref{matroid_partition_ind_oracle} is $\tilde{O}(n^{5/2})$, which is not strictly less than the one in Cunningham's algorithm.



When $k \leq n$,
we sometimes bound the number of queries by a function on a single variable $n$, where we recall that $p \leq n$.
This makes it easy to compare the query complexity of different algorithms.

Our second result is to obtain an algorithm that 
is strictly better than Cunningham's algorithm even
when $k$ is large.

\begin{theorem}[Details in Theorem \ref{thm:faster_matroid_partition_ind_oracle_2}]  \label{faster_matroid_partition_ind_oracle}
\tatsuya{There is an algorithm that uses $\tilde{O}(k'^{1 / 3} n p + k n)$ independence oracle queries and solves the matroid partitioning problem, where $k' = min\{k, p\}$.}
\end{theorem}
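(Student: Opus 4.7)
The plan is to refine the augmenting-path framework behind Theorem~\ref{matroid_partition_ind_oracle} by introducing an \emph{edge recycling augmentation} strategy, and to split the computation into two regimes separated by a threshold $\ell$ to be optimized.

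In the first regime, while the shortest augmenting path in the exchange graph has length at most $\ell$, I would run the blocking-flow procedure underlying Theorem~\ref{matroid_partition_ind_oracle}: there are at most $\ell$ such phases (each strictly increases the shortest-path length), and each phase is implemented in $\tilde O(kn)$ queries by the BFS-with-binary-search analysis of \cite{nguyen2019note, chakrabarty2019faster}, contributing $\tilde O(\ell k n)$ queries in total. A Hopcroft--Karp-style layering argument then leaves at most $O(p/\ell)$ augmenting paths, each of length more than $\ell$. In the second regime, each remaining path is found by a fresh BFS with a key twist: rather than re-querying every potential edge, I would maintain a persistent pool of edges already certified by the binary-search subroutine, together with a record of which elements of each $S_i$ have been perturbed since the pool was last consulted, and recertify only those edges whose validity could have changed. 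Because a single augmentation alters $O(|P|)$ elements total (spread across potentially many matroids), most edges in the pool remain valid from one augmentation to the next.

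The crux is to show that the cumulative recertification cost over the $O(p/\ell)$ second-regime augmentations is $\tilde O(np \cdot (p/\ell) / k'^{2/3})$, so that balancing against $\tilde O(\ell k n)$ with $\ell \sim p / k'^{2/3}$ yields the claimed $\tilde O(k'^{1/3} n p + kn)$, with the additive $kn$ absorbing the initial placement of elements into matroids. The main obstacle will be this amortized argument: unlike the matroid intersection setting where only two matroids interact and every edge is simply valid or invalid, here a single augmentation can render edges stale across many matroid-specific subgraphs simultaneously, and one must track staleness matroid by matroid. I expect the proof to hinge on a potential-function argument counting stale (element, matroid) pairs, combined with a cube-root-type inequality arising when the per-recertification cost ($\tilde O(\log n)$ per new edge via binary search) is weighed against the number of matroids that can participate in any single augmenting path, controlled by $k' = \min\{k,p\}$. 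Pinning down why the exponent is $1/3$ rather than some other fraction is precisely the delicate counting step that the abstract refers to as the ``careful analysis of the independence oracle query complexity.''
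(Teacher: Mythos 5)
Your high-level architecture matches the paper's: a distance threshold splitting the algorithm into a blocking-flow regime and an edge-recycling regime, with recycling justified by the observation that edges entering $S_i$ survive an augmentation unless $S_i$ itself was touched. However, the step you flag as "the delicate counting step" is exactly the part that carries the theorem, and your sketch of it would not go through as stated. The paper does \emph{not} maintain one persistent pool across all $O(p/\ell)$ remaining augmentations; it groups the augmentations into super-phases (\texttt{EdgeRecyclingAugmentation}, Algorithm~\ref{alg:FasterFindingPaths}), each of which recomputes the entire edge set $E^*$ from scratch for $O(np)$ queries, resets $J=\emptyset$, and then runs BFSs only until the accumulated budget $\sum_t |J_t|$ reaches $2\bar p$ --- which caps the binary-search work inside one super-phase at $O(np)$ to match the recomputation cost. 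Without this periodic reset, the stale index set $J$ only grows, each successive BFS costs $\Theta(n|J|)$ \texttt{FindOutEdge} calls, and the total degrades to roughly $n\cdot(p/\ell)\cdot\min\{k,p\}$, which exceeds $k'^{1/3}np$ for the intended $\ell$. The crux the paper then proves (Lemma~\ref{lem:analysis_matroid_parition_for_large_k_2} via Claims~\ref{claim:1} and~\ref{claim:2}) is that the number of super-phases is $O(\sqrt{p/d})$: the budget condition forces $c_i^2 l_i\geq p$ (where $c_i$ is the number of paths found in super-phase $i$ and $l_i$ their length), Lemma~\ref{lem:total_bound_matroid_partition} gives that the number of remaining augmentations is $O(p/l_i)$, hence $c_i=\Omega(\sqrt{s_i})$, and a telescoping $\int dx/\sqrt{x}$ argument finishes. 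Your proposed potential function on stale (element, matroid) pairs is not an unreasonable direction, but you have not supplied the inequality that replaces this; moreover your target of $\tilde O\bigl(np\cdot(p/\ell)/k'^{2/3}\bigr)=\tilde O(np)$ total for the second regime is stronger than what the paper achieves ($\tilde O(k'^{1/3}np)$) and there is no evidence it is attainable.

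A second, smaller gap: you charge $\tilde O(kn)$ per blocking-flow phase and $\tilde O(\ell kn)$ for the first regime. With $\ell\sim p/k'^{2/3}$ and $k\gg p$ this is $\tilde O(knp^{1/3})$, which can exceed the claimed $\tilde O(k'^{1/3}np+kn)$ (e.g.\ $k=n$, $p=\sqrt n$ gives $n^{13/6}$ versus $n^{2}$). To obtain $k'=\min\{k,p\}$ rather than $k$ you must, as in Lemma~\ref{lem:analysis_matroid_parition_for_large_k_1}, take the better of the binary-search implementation ($O(knd\log p)$) and Cunningham's direct implementation of a phase ($O(npd+kn)$, Remark~\ref{rem:cunningham_blocking_flow}); your version only proves the theorem with $k$ in place of $k'$.
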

This is the main contribution of this paper.
The independence query complexity of the algorithm given in Theorem~\ref{faster_matroid_partition_ind_oracle} improves the one of the algorithm given in Theorem \ref{matroid_partition_ind_oracle}
when $k = \omega(p^{3/4})$.
The independence query complexity of the algorithm given in Theorem~\ref{faster_matroid_partition_ind_oracle} is $\tilde{O}(n^{7/3})$ when $k \leq n$.
This improves the algorithm by Cunningham \cite{cunningham1986improved} and our algorithm given in Theorem \ref{matroid_partition_ind_oracle}.
It should be emphasized here that this is the first improvement since 1986.
\tatsuya{The previous and new bounds on the independence query complexity for all range of $k$ are summarized in Figure \ref{fig:complexity}.
As in this figure, we obtain a matroid partition algorithm faster than Cunningham's algorithm for all range of $k$ except for the tight value part.}
We note that this algorithm requires \tatsuya{$O(k'^{2/3} n p)$} time complexity other than independence oracle queries.


\begin{figure}[h]
\centering
\includegraphics[width=12cm]{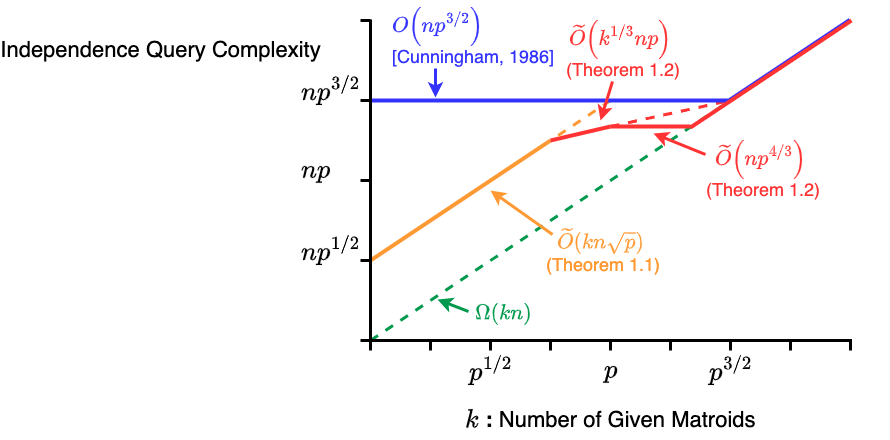}
\caption{State-of-the-art bounds on the independence query complexity for solving the matroid partitioning problem. 
The green dashed line represents an $\Omega(k n)$ lower bound;
see Section \ref{sec:conclusion} for details.}
\label{fig:complexity}
\end{figure}

We also consider the query complexity in the rank oracle model.
Note that the rank oracle is at least as powerful as the independence oracle.

\begin{theorem}[Details in Theorem \ref{thm:matroid_partition_rank_oracle}] \label{matroid_partition_rank_oracle}
There is an algorithm that uses $\tilde{O}((n + k) \sqrt{p})$ rank oracle queries and solves the matroid partitioning problem.
\end{theorem}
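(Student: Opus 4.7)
The plan is to combine Cunningham's Hopcroft--Karp style phase analysis for matroid partition~\cite{cunningham1986improved} with the rank-oracle binary-search primitive developed by~\cite{nguyen2019note,chakrabarty2019faster} for fast matroid intersection. I would maintain a partial partition $(S_1,\dots,S_k)$ with $S_i\in\I_i$ and repeatedly augment along shortest paths in the partition exchange graph $D$ whose vertices are the ground-set elements, whose internal arcs are the exchange pairs $u\to w$ with $u\in S_i$, $w\notin S_i$ and $S_i-u+w\in\I_i$, whose sources are the free elements (those outside $\bigcup_i S_i$), and whose sinks are those $w$ for which $S_i+w\in\I_i$ for some $i$. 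The standard Hopcroft--Karp argument, which Cunningham already applied in this exchange graph, guarantees that only $O(\sqrt{p})$ phases are required before reaching a partition of maximum size $p$; each phase consists of a layered BFS followed by extraction of a maximal collection of vertex-disjoint shortest augmenting paths (a blocking flow).

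I would implement each phase within $\tO(n+k)$ rank queries. For the BFS, when processing a frontier vertex $u\in S_i$, its unvisited out-neighbors in a candidate set $W$ are located by the binary-search primitive: the question ``does $u$ have an out-neighbor in $W'\subseteq W$?'' reduces to a single comparison of $\mathrm{rank}_{\M_i}((S_i-u)\cup W')$ with $|S_i|-1$, and hence a full enumeration of its new out-neighbors costs $O(\log n)$ rank queries of $\M_i$ per neighbor returned. Because every vertex enters the BFS at most once in a phase, this aggregate edge-finding cost telescopes to $\tO(n)$ rank queries per phase. The extra $O(k)$ per-phase cost is confined to the source and sink sides: one rank query of each $\M_i$ suffices to test whether some currently free element is directly attachable to $S_i$, and analogous per-matroid probes initialize the first layer of arcs emanating from free elements into each $\M_i$'s exchange structure. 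The DFS-style blocking-path extraction uses the same primitive and fits in the same $\tO(n+k)$ budget, so multiplying by the $O(\sqrt{p})$ phases yields $\tO((n+k)\sqrt{p})$ rank queries in total.

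The main obstacle I anticipate is arranging the accounting so that the $k$ matroids contribute only additively, not multiplicatively, to the per-phase cost. This is feasible because every $u\in\bigcup_i S_i$ lies in exactly one $S_i$, so its outgoing exchange arcs involve only $\M_i$ and can be charged to $u$ rather than to all $k$ matroids; the $k$ term is consequently restricted to per-matroid bookkeeping on the source/sink boundary, amortized over the phase. The delicate technical point is verifying that this amortization is compatible with the blocking-flow invariants --- preserved shortest-path layers and vertex-disjointness --- across successive augmentations within a single phase, so that Cunningham's $O(\sqrt{p})$ phase bound transports cleanly to the rank-oracle implementation. Once that compatibility is checked, the bound $\tO((n+k)\sqrt p)$ follows directly by combining the phase count with the per-phase query estimate.
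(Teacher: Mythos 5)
Your proposal is correct and takes essentially the same route as the paper's proof of Theorem \ref{thm:matroid_partition_rank_oracle}: $O(\sqrt{p})$ blocking-flow phases, each implemented in $\tilde{O}(n+k)$ rank queries via the binary-search primitive, with the same key accounting that the exchange arcs incident to an element $u\in S_i$ on its $S$-side involve only $\M_i$ and the $k$-dependence is confined to one probe per matroid at the boundary --- which is precisely why the paper runs the BFS and the blocking-flow extraction by traversing edges in reverse from $T$ using \texttt{FindInEdge}. (Your source/sink labels are swapped relative to the arc orientation you wrote down, but that is cosmetic; the compatibility with the layer/monotonicity invariants that you flag is handled in the paper by Lemmas \ref{lem:monotonicity_matroid_partition} and \ref{lem:augmenting_path_monotonicity}.)
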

The rank query complexity of the algorithm given in Theorem~\ref{matroid_partition_rank_oracle} is $\tilde{O}(n^{3/2})$ when $k \leq n$.




\subsection{Technical Overview}

\paragraph*{Cunningham's matroid partition algorithm.}
The auxiliary graph called \emph{exchange graph} plays an important role in almost all combinatorial algorithms for matroid intersection.
In matroid intersection algorithms, we begin with an empty set and repeatedly increase the size of the independent set by augmenting along shortest paths in the exchange graph.
In the same way, 
Knuth~\cite{knuth1973matroid} and Greene-Magnanti~\cite{greene1975some} gave matroid partition algorithms by using the auxiliary graph with $O(n p)$ edges,
which we call \emph{compressed exchange graph}.

To improve the running time,
Cunningham~\cite{cunningham1986improved} developed \emph{blocking flow} approach for matroid partition and intersection,
which is akin to bipartite matching algorithm by Hopcroft-Karp~\cite{hopcroft1973n}.
The blocking flow approach is applied in each phase of the algorithm.
In Hopcroft-Karp's bipartite matching algorithm,
we find a maximal set of vertex-disjoint shortest paths and augment along these paths simultaneously.
In contrast to this, in a matroid partition algorithm, the augmentations can not be done in parallel,
since one augmentation can change the compressed exchange graph.
Cunningham showed that we can find multiple augmenting paths of the same length and run all the augmentations
in one phase.
In Cunningham's matroid partition algorithm, one phase uses only $O(n p)$ independence oracle queries 
(each edge is queried only once in one phase).

Cunningham showed that the number of different lengths of shortest augmenting paths during the algorithm is $O(\sqrt{p})$ and
then the number of phases is $O(\sqrt{p})$.
Therefore, Cunningham's matroid partition algorithm uses $O(n p^{3/2} + k n)$ independence oracle queries in total
(enumerating all edges entering sink vertices uses $O(k n)$ independence oracle queries).
We note that this query complexity is $O(n^{5/2})$ when $k \leq n$.

\paragraph*{Combining blocking flow approach and binary search subroutine.}
To develop the matroid partition algorithm, given in Theorem \ref{matroid_partition_ind_oracle}, that uses $\tO(k n \sqrt{p})$ independence oracle queries,
we combine the blocking flow approach proposed by Cunningham \cite{cunningham1986improved} and the binary search procedure proposed by \cite{nguyen2019note, chakrabarty2019faster}.
By using the binary search procedure,
we obtain an algorithm that uses $\tilde{O}(kn)$ independence oracle queries and 
performs a breadth first search in the compressed exchange graph.
We also obtain an algorithm that uses $\tilde{O}(kn)$ independence oracle queries and 
runs all the augmentations in a single phase.
Since Cunningham showed that the number of phases is $O(\sqrt{p})$, we can easily obtain a matroid partition algorithm that 
uses $\tO(k n \sqrt{p})$ independence oracle queries.
Our algorithm does not contain technical novelty
in a sense that this algorithm is obtained by simply
combining Cunningham's technique 
and the binary search technique by \cite{nguyen2019note, chakrabarty2019faster}.
Nevertheless,
this result is important in a sense that we improve the independence query complexity of a matroid partition algorithm.

\paragraph*{Edge recycling augmentation.}
\tatsuya{
In our breadth first search, we need to check, for all vertices $v \in V$ and all indices $i \in [k]$, whether there exists an edge from a vertex $v$ to some vertex in $S_i$ in the compressed exchange graph.
Note that the partition of the current partitionable set $S$ is $(S_1, \ldots, S_k)$.
Then, independence query complexity of a breadth first search of the compressed exchange graph
seems to be $\Omega(\min\{k n, n p \})$,
even if we use the binary search procedure; see Section \ref{sec:blockingflow} for details.
It is not clear whether we can develop a matroid partition algorithm that runs a breadth first search $o(\sqrt{p})$ times,
and so, algorithms by the blocking flow approach are now stuck at $\Omega(\min\{k, p \} n \sqrt{p})$ independence oracle queries.
}
In the setting where $k = \Theta(n)$ and $p = \Theta(n)$, algorithms by the blocking flow approach are stuck at $\Omega(n^{5/2})$ independence oracle queries
even if we use the excellent binary search procedure.

In order to break this $O(n^{5/2})$-independence-oracle-query bound,
we introduce a new approach \emph{edge recycling augmentation}
and develop a matroid partition algorithm whose independence query complexity is sublinear in $k$.
Then we present a matroid partition algorithm that uses $\tO(n^{7/3})$ independence oracle queries when $k \leq n$.

Our new approach edge recycling augmentation is 
applied in each phase of the algorithm
in the same way as the blocking flow approach.
In one phase of edge recycling augmentation, we first compute the edge set $E^*$ in the compressed exchange graph,
which uses $O(n p)$ independence oracle queries.
Then we simply repeat to run a breadth first search and find a shortest path in the compressed exchange graph.
This breadth first search is performed by using the information of $E^*$.
The precomputation of $E^*$ may seem too expensive
since we have the excellent binary search tool to find edges in the compressed exchange graph.
However, we can \emph{recycle} some edges in $E^*$ during the repetition of breadth first searches,
which plays an important role in an analysis of our new approach.
Note that, edge recycling augmentation runs a breadth first search before every augmentation,
while the blocking flow approach runs a breadth first search only once in the beginning of each phase.

Our crucial observation is that all edges entering a vertex in $S_i$ are not changed unless $S_i$ was updated by the augmentation.
Recall that the partition of the current partitionable set $S$ is $(S_1, \ldots, S_k)$.
Then, even after some augmentations, we can use $E^*$ to find edges entering a vertex $u \in S_i$ such that $S_i$ was not updated by the augmentation.
This observation is peculiar to the matroid partition.
In a breadth first search of the edge recycling augmentation approach, 
we use the binary search procedure only to find edges entering $u \in S_i$ such that $S_i$ was updated by the augmentation.
In one phase,
we repeat to run a breadth first search so that the total number of the binary search procedure calls is $O(n p)$.

We combine the blocking flow approach algorithm and the edge recycling augmentation approach algorithm.
\tatsuya{By a careful analysis of the independence query complexity, we obtain a matroid partition algorithm that uses $\tO(k'^{1/3} n p + k n)$ independence oracle queries, where $k' = \min \{k, p\}.$}

Note that this edge recycling augmentation approach differs significantly from existing fast matroid intersection algorithms.
The key technical contribution of this paper is to introduce this new approach.

\subsection{Related Work}
Blikstad-Mukhopadhyay-Nanongkai-Tu \cite{blikstad2023fast} introduced a new oracle model called dynamic oracle and 
developed a matroid partitioning algorithm that uses $\tO((n + r\sqrt{r}) \cdot \text{poly}(k))$ dynamic rank queries,
where $r = \max_i \max_{S_i \in \mathcal{I}_i} |S_i|$.
They also obtained an algorithm to solve the $k$-fold matroid union problem in $\tO(n \sqrt{r})$ time and dynamic rank queries,
which is the special case of the matroid partitioning problem where all matroids $\M_1, \dots, \M_k$ are identical.
Quanrud \cite{quanrud2023} developed an algorithm that solves the $k$-fold matroid union problem and uses $\tO(n^{3/2})$ independence oracle queries for the full range of $r$ and $k$.
Quanrud also considered the $k$-fold matroid union problem in the more general settings where the elements have integral and real-valued capacities.

For certain special matroids, faster matroid partition algorithms are known.
For linear matroids, 
Cunningham~\cite{cunningham1986improved} presented an $O(n^3 \log n)$-time algorithm that solves the matroid partitioning problem
on $O(n)$ matrices that have $n$ columns and at most $n$ rows.
For graphic matroids,
the $k$-forest problem is a special case of the matroid partitioning problem.
In the problem, we are given an undirected graph and a positive integer $k$, and
the objective is to find a maximum-size union of $k$ forests.
Gabow-Westermann~\cite{gabow1988forests} presented an $O(\min \{ k^{3/2}\sqrt{nm(m + n \log n)}, k^{1/2}m\sqrt{m + n \log n},$ $kn^2 \log k, \frac{m^2}{k}\log k \} )$-time algorithm to solve the $k$-forest problem, 
where $n$ and $m$ denote the number of vertices and edges, respectively.
Blikstad et al.~\cite{blikstad2023fast} and Quanrud \cite{quanrud2023} independently obtained an $\tO(m + (k n)^{3/2})$ time algorithm to solve the $k$-forest problem.


Kawase-Kimura-Makino-Sumita~\cite{kawase2021optimal} studied matroid partitioning problems for various objective functions.

For the weighted matroid intersection,
Huang-Kakimura-Kamiyama~\cite{huang2016exact} developed a technique that transforms any unweighted matroid intersection algorithm into an algorithm that solves the weighted case with an $O(W)$ factor.
Huang et al. also presented a $(1 - \epsilon)$ approximation weighted matroid intersection algorithm that uses $\tO(nr^{3/2}/\epsilon)$ independence oracle queries.
Chekuri-Quanrud~\cite{chekuri2016fast} improved the independence query complexity and presented a $(1 - \epsilon)$ approximation weighted matroid intersection algorithm that uses $O(n r/\epsilon^2)$ independence oracle queries,
which can be improved by applying a recent faster approximation unweighted matroid intersection algorithm by Blikstad~\cite{blikstad2021breaking}.
Tu~\cite{u2022subquadratic} gave a weighted matroid intersection algorithm that uses $\tO(n r^{3/4} \log W)$ rank oracle queries,
which also uses the binary search procedure proposed by \cite{nguyen2019note, chakrabarty2019faster}.

For matroids of rank $n / 2$, Harvey~\cite{harvey2008matroid} showed a lower bound of $(\log_2 3)n - o(n)$ independence oracle queries for matroid intersection.
Blikstad et al.~\cite{blikstad2023fast} showed super-linear $\Omega(n \log n)$ query lower bounds for matroid intersection and partitioning problem in their dynamic-rank-oracle and the independence oracle models.

\subsection{Paper Organization}
In Section \ref{sec:preliminaries}, we introduce the notation and the known results for matroid partition and intersection.
Next, in Section \ref{sec:blockingflow}, we present our matroid partition algorithm using the blocking flow approach.
Then, in Section \ref{sec:faster_algorithm_for_large_k}, we present our new approach edge recycling augmentation and our faster matroid partition algorithm for large $k$.

\section{Preliminaries}\label{sec:preliminaries}
\subsection{Matroids}

\paragraph*{Notation.}
For a positive integer $a$, we denote $[a] = \{1, \dots , a\}$.
For a finite set $X$, let $\# X$ and $|X|$ denote the cardinality of $X$, which is also called the {\em size} of $X$. 
We will often write $A + x := A \cup \{ x \}$ and $A - x := A \setminus \{ x \}$.
We will also write $A + B := A \cup B$ and $A - B := A \setminus B$,
when no confusion can arise.



\paragraph*{Matroid.}
A pair $\M = (V, \mathcal{I})$ for a finite set $V$ and non-empty $\mathcal{I} \subseteq 2^{V}$ is called a {\em matroid} if the following property is satisfied.

(Downward closure) if $S \in \mathcal{I}$ and $S' \subseteq S$, then $S' \in \mathcal{I}$.

(Augmentation property) if $S, S' \in \mathcal{I}$ and $|S'| < |S|$, then there exists $x \in S \setminus S'$ such that $S' + x \in \mathcal{I}$.

A set $S \subseteq V$ is called {\em independent} if $S \in \mathcal{I}$ and {\em dependent} otherwise.

\paragraph*{Rank.}
For a matroid $\M = (V, \mathcal{I})$, we define the {\em rank} of $\M$ as ${\rm rank}(\M) = \max \{ |S| \mid S \in \mathcal{I} \}$.
In addition, for any $S \subseteq V$, we define the {\em rank} of $S$ as ${\rm rank}_{\M}(S) = \max \{ |T| \mid T \subseteq S, T \in \mathcal{I} \}$.



\paragraph*{Matroid Intersection.}
Given two matroids $\mathcal{M}' = (V, \mathcal{I}')$, $\mathcal{M}'' = (V, \mathcal{I}'')$, 
we define their \emph{intersection} by $(V, \mathcal{I}' \cap \mathcal{I}'' )$.
The \emph{matroid intersection problem} asks us to find the largest common independent set, whose cardinality we denote by $r$.
Note that the intersection of matroids is not a matroid in general and 
the problem to find a maximum common independent set of more than two matroids is {\rm NP}-hard.



\paragraph*{Matroid Partition (Matroid Union).}
Given $k$ matroids $\mathcal{M}_1 = (V, \mathcal{I}_1), \dots, \mathcal{M}_k = (V, \mathcal{I}_k)$, $S \subseteq V$ is called \emph{partitionable} 
if there exists a partition $(S_1, \dots , S_k)$ of $S$ such that $S_i \in \mathcal{I}_i$ for $i \in [k]$.
The \emph{matroid partitioning problem} asks us to find the largest partitionable set, whose cardinality we denote by $p$.
Let $\tilde{\mathcal{I}}$ be the family of partitionable subset of $V$.
Then, $(V, \tilde{\mathcal{I}})$ is called the {\em union} or {\em sum} of $k$ matroids $\mathcal{M}_1 \dots, \mathcal{M}_k$.
Note that Nash-Williams Theorem~\cite{nash1966application} states that the union $(V, \tilde{\mathcal{I}})$ of the $k$ matroids is also a matroid.

\paragraph*{Oracles.}
Throughout this paper, we assume that we can access a matroid $\M = (V, \I)$ only through an oracle.
Given a subset $S \subseteq V$, 
an \emph{independence} oracle outputs whether $S \in \I$ or not.
Given a subset $S \subseteq V$, 
a \emph{rank} oracle outputs ${\rm rank}_{\M}(S)$.
Since one query of the rank oracle can determine whether a given subset is independent,
the rank oracle is more powerful than the independence oracle.

\paragraph*{Binary Search Technique.}
Chakrabarty-Lee-Sidford-Singla-Wong~\cite{chakrabarty2019faster} showed that the following procedure can be implemented efficiently by using binary search in the independence oracle model.
(This was developed independently by Nguy$\tilde{{\hat{\text{e}}}}$n~\cite{nguyen2019note}.)
Given a matroid $\M = (V, \I)$, an independent set $S \in \I$, an element $v \in V \setminus S$, and $B \subseteq S$, the objective is to find
an element $u \in S$ that is exchangeable with $v$ (that is, $S + v - u \in \I$) or conclude there is no such an element.
We skip the proof in this paper; see \cite[Section 3]{chakrabarty2019faster} for a proof.



\begin{lemma}[Edge Search via Binary search, \cite{nguyen2019note,chakrabarty2019faster}]\label{lem:binary_search_matroid_partition_ind}
There exists an algorithm \textup{\texttt{FindOutEdge}} which, 
given a matroid $\M = (V, \mathcal{I})$, an independent set $S \in \mathcal{I}$, an element $v \in V \setminus S$, and $B \subseteq S$, finds an element $u \in B$ such that $S + v - u \in \mathcal{I}$ or otherwise determine that no such element exists, and uses $O(\log |B|)$ independence queries.
\tatsuya{Furthermore, if there exists no such an element, then the procedure \textup{\texttt{FindOutEdge}} uses only one independence query.\footnote{Since there exists such an element if and only if $S + v - B \in \mathcal{I}$.}}
\end{lemma}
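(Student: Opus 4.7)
The plan is to prove the key structural fact stated in the footnote and then use it to drive a standard halving argument. Write $S \in \mathcal{I}$, $v \in V \setminus S$, $B \subseteq S$, and, for any $B' \subseteq B$, consider the condition
\[
(\star)\quad S + v - B' \in \mathcal{I}.
\]
I would first prove the equivalence:
\[
(\star) \iff \exists\, u \in B' \text{ with } S + v - u \in \mathcal{I}.
\]
If $S+v \in \mathcal{I}$, both sides hold by downward closure and since every $u \in B'$ is already a valid exchange. Otherwise $S+v$ contains a unique circuit, the fundamental circuit $C(v,S)$, and an element $u \in S$ satisfies $S + v - u \in \mathcal{I}$ if and only if $u \in C(v,S)$. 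Moreover, $S + v - B'$ is independent iff the only circuit of $S+v$ is not contained in it, i.e.\ $C(v,S) \cap B' \neq \emptyset$, which is exactly the existence of a valid $u \in B'$.

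With this equivalence in hand, the algorithm and query bound are immediate. First issue a single query to test whether $(\star)$ holds with $B' = B$; if not, report that no such element exists, which uses exactly one independence query (this handles the second clause of the lemma). Otherwise, maintain the invariant $(\star)$ for a shrinking set $B'$: at each step split $B' = B'_1 \sqcup B'_2$ with $|B'_1|, |B'_2| \leq \lceil |B'|/2 \rceil$, query whether $S + v - B'_1 \in \mathcal{I}$, and recurse on $B'_1$ if the answer is yes, else on $B'_2$. The equivalence above ensures that in the latter case a valid exchange still lives in $B'_2$ and that the invariant $S + v - B'_2 \in \mathcal{I}$ is automatically restored, so only one independence query per halving is needed. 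After $O(\log |B|)$ rounds we have $|B'| = 1$, and by the invariant the unique remaining element $u$ satisfies $S + v - u \in \mathcal{I}$; return it.

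The only substantive step is the characterization in the first paragraph, which is a one-line consequence of the uniqueness of the fundamental circuit $C(v,S)$; the remainder is the textbook halving argument. There is no real obstacle beyond being careful that a single independence query suffices to preserve the invariant at each halving, which is exactly what the equivalence $(\star)$ provides.
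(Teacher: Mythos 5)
Your proof is correct and follows essentially the same argument as the one in Chakrabarty et al.\ (Section 3), which the paper cites in lieu of giving its own proof: the fundamental-circuit characterization $S+v-B'\in\mathcal{I}\iff C(v,S)\cap B'\neq\emptyset$ plus a one-query-per-halving binary search. The initial whole-set test also correctly yields the one-query guarantee for the negative case, so nothing is missing.
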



In the same way as Lemma~\ref{lem:binary_search_matroid_partition_ind}, Chakrabarty et al.~\cite{chakrabarty2019faster} also provided a technique which efficiently finds
an exchangeable element or conclude there is no such an element
in the rank oracle model; see \cite[Section 3]{chakrabarty2019faster} for a proof.
 
\begin{lemma}[Edge Search via Binary search, \cite{chakrabarty2019faster}]\label{lem:binary_search_matroid_partition_rank}
There exists an algorithm \textup{\texttt{FindInEdge}} which,
given a matroid $\M = (V, \mathcal{I})$, an independent set $S \in \mathcal{I}$, an element $u \in S$, and a subset $A \subseteq V \setminus S$, finds an element $v \in A$ such that $S + v - u \in \mathcal{I}$ or otherwise determine that no such an element exists, and uses $O(\log |A|)$ rank queries.
\end{lemma}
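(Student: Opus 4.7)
The plan is to reduce the problem to a binary search over $A$ guided by a single rank comparison at each step. First I would establish the following key equivalence: for any $A' \subseteq A$, there exists $v \in A'$ with $S + v - u \in \mathcal{I}$ if and only if ${\rm rank}_{\M}((S - u) \cup A') \geq |S|$. The backward direction is immediate, since any such $v$ gives $(S - u) + v \subseteq (S - u) \cup A'$ and this set has size $|S|$ once independent. For the forward direction, note that $S - u$ is independent of size $|S| - 1$; if ${\rm rank}_{\M}((S - u) \cup A') \geq |S|$, the matroid augmentation property produces an element of $(S - u) \cup A'$ extending $S - u$ to an independent set of size $|S|$, and this element must lie in $A'$.

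With this characterization in hand, the algorithm is straightforward. Spend one rank query on ${\rm rank}_{\M}((S - u) \cup A)$; if the value is less than $|S|$, report that no exchangeable element exists. Otherwise, maintain the invariant that the current candidate set $A' \subseteq A$ satisfies ${\rm rank}_{\M}((S - u) \cup A') \geq |S|$ and binary search as follows. Split $A'$ into two halves $A'_1, A'_2$ of roughly equal size and query ${\rm rank}_{\M}((S - u) \cup A'_1)$. If this rank is at least $|S|$, recurse on $A'_1$; otherwise the rank equals $|S| - 1$, meaning every element of $A'_1$ lies in the closure of $S - u$, so by the invariant on $A'$ a witness must lie in $A'_2$, and we recurse on $A'_2$. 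In both branches the invariant propagates, and when $|A'| = 1$ the surviving element is the desired $v$.

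The recursion depth is $O(\log |A|)$ and each level uses exactly one rank query, yielding $O(\log |A|)$ rank queries overall once we add the initial feasibility test. Structurally this is precisely the rank-oracle mirror of the independence-oracle subroutine \texttt{FindOutEdge} from Lemma \ref{lem:binary_search_matroid_partition_ind}: the asymmetry between $S - u$ (fixed, of rank $|S| - 1$) and the exchangeable pool $A$ lets rank comparisons of the form ``\,$\geq |S|$\, vs.\ $= |S| - 1$\,'' replace the boolean independence tests used there. The only conceptual obstacle is the forward direction of the key equivalence, but it is a direct application of the augmentation axiom, so no further matroid machinery is needed.
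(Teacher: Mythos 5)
Your proof is correct, and it is essentially the standard argument: the paper itself omits the proof and defers to Chakrabarty et al.\ \cite{chakrabarty2019faster}, whose rank-oracle subroutine is exactly this binary search driven by the test $\mathrm{rank}_{\M}((S-u)\cup A') \geq |S|$, justified via the augmentation axiom as you do. Nothing is missing.
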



\subsection{Techniques for Matroid Intersection}\label{sec:ape:matroid_intersection}

Here we provide known results about the matroid intersection.

\begin{definition}[Exchange Graph]
Consider a common independent set $S \in \mathcal{I}' \cap \mathcal{I}''$.
The {\em exchange graph} is defined as a directed graph $G(S) = (V \cup \{s, t \}, E)$, with $s, t \notin V$ and $E = E' \cup E'' \cup E_s \cup E_t$, where
\begin{align*}
E' = & \{(u, v) \mid u \in S, v \in V \setminus S, S - u + v \in \mathcal{I}' \} ,\\
E'' = & \{(v, u) \mid u \in S, v \in V \setminus S, S - u + v \in \mathcal{I}'' \} ,\\
E_s = & \{(s, v) \mid v \in V \setminus S, S + v \in \mathcal{I}' \} , \text{and} \\
E_t = & \{(v, t) \mid v \in V \setminus S, S + v \in \mathcal{I}'' \} .
\end{align*}
\end{definition}

\begin{lemma}[Shortest Augmenting Path {\cite[Theorem 41.2]{schrijver2003combinatorial}}]
Let $s, v_1, v_2, \dots , v_{l - 1}, t$ be a shortest $(s, t)$-path in the exchange graph $G(S)$ relative to a common independent set $S \in \I' \cap \I''$.
Then $S' = S + v_1 - v_2 + \cdots - v_{l - 2} + v_{l - 1} \in \I' \cap \I''$.
\end{lemma}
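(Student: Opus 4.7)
The approach is classical: exploit the alternating structure of the path, translate the shortest-path condition into a \emph{unique perfect matching} condition on an auxiliary bipartite exchange graph, and then apply the standard matroid lemma that asserts that a unique perfect matching of exchangeable pairs yields a legal simultaneous swap.

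First I would unpack the structure of the path. Because $(s, v_1) \in E_s$, we have $v_1 \in V \setminus S$; then the edges along the path must alternate between $E''$ and $E'$ (with the last edge $(v_{l-1}, t) \in E_t$), forcing $l$ to be even, the odd-indexed $v_i$'s to lie in $V \setminus S$, and the even-indexed $v_i$'s to lie in $S$. Set $Y = \{v_1, v_3, \dots, v_{l-1}\}$ and $X = \{v_2, v_4, \dots, v_{l-2}\}$, so $S' = (S \setminus X) \cup Y$ with $|Y| = |X| + 1$.

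Next I would verify $S' \in \I'$. The $\I'$-exchange edges along the path are $(s, v_1)$ (giving $S + v_1 \in \I'$) and $(v_{2i}, v_{2i+1})$ for $i = 1, \dots, l/2 - 1$ (giving $S - v_{2i} + v_{2i+1} \in \I'$). Working in the independent set $\hat{S} := S + v_1$ of the matroid $\M'$, these exchanges give a perfect matching between $X$ and $Y \setminus \{v_1\}$ in the bipartite exchange graph on $(\hat{S} \cap X) \cup (Y \setminus \hat{S})$. I then invoke the standard matroid lemma (see e.g.\ Schrijver, Thm.~39.13): if this matching is the \emph{unique} perfect matching in that bipartite exchange graph, then $\hat{S} \triangle (X \cup (Y \setminus \{v_1\})) = S' \in \I'$. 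Uniqueness is exactly where the shortest-path hypothesis is used: any other matching would correspond to an alternative pairing $v_{2i} \leftrightarrow v_{2j+1}$ with $j \ne i$, which would require an $\I'$-exchange edge $(v_{2i}, v_{2j+1}) \in E'$ with $j > i$; but such an edge is a chord that shortcuts the path, contradicting minimality of $l$. A symmetric argument, starting from $\hat{S} := S + v_{l-1} \in \I''$ and using the $\I''$-exchange edges $(v_{2i-1}, v_{2i})$ and $(v_{l-1}, t)$, together with the absence of $E''$-chords of the form $(v_{2j-1}, v_{2i})$ with $j > i$, shows $S' \in \I''$.

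The only genuine obstacle is the uniqueness-of-matching argument, where one must check carefully that every possible chord in $E'$ or $E''$ between vertices of the path would strictly shorten the shortest $(s,t)$-path; this is a routine but slightly fiddly parity/indexing case analysis, and cites the classical matroid exchange lemma. Everything else is bookkeeping on the alternation of the path.
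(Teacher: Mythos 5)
Your plan is, in substance, the classical proof of this lemma: the paper itself gives no proof and simply cites Schrijver, Theorem 41.2, whose argument is exactly the reduction to the unique-perfect-matching exchange lemma that you describe. The decomposition of the path, the identification of the candidate matching, and the observation that uniqueness comes from the absence of path-shortening chords are all correct, and the chord argument is indeed one of the places where minimality of the path is used.

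Two concrete points need repair before the plan becomes a proof. First, the chord direction for $\I''$ is reversed: the chords that would shorten the path are the edges $(v_{2j-1}, v_{2i}) \in E''$ with $j < i$ (an earlier odd vertex jumping forward to a later even vertex); chords with $j > i$ point backward along the path, are \emph{not} excluded by minimality, and may genuinely exist. Uniqueness still holds because excluding only the forward chords already makes the bipartite adjacency matrix triangular with the path matching on the diagonal, so the fix is just to flip the inequality. Second, the exchanges supplied by the path are exchanges relative to $S$ (namely $S - v_{2i} + v_{2i+1} \in \I'$), not relative to $\hat S = S + v_1$, so they are not, as stated, edges of the exchange graph of $\hat S$, and the unique-matching lemma cannot be applied to $\hat S$ without further justification. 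The standard route is to apply the lemma to $S$ itself, concluding $J := (S - X) \cup (Y - v_1) \in \I'$ with $|J| = |S|$, and only then to add $v_1$. That last step uses a second, separate consequence of minimality which your plan omits: $S + v_{2i+1} \notin \I'$ for every $i \geq 1$, since otherwise $(s, v_{2i+1}) \in E_s$ would shortcut the path; hence each $v_{2i+1}$ with $i \geq 1$ lies in $\mathrm{span}_{\M'}(S)$, so $\mathrm{span}_{\M'}(J) = \mathrm{span}_{\M'}(S)$ and $v_1 \notin \mathrm{span}_{\M'}(J)$, giving $J + v_1 = S' \in \I'$. (The same observation is what lets you transfer the exchange edges to $\hat S$ if you insist on your formulation.) Symmetrically, for $\I''$ you need $S + v_{2i-1} \notin \I''$ for $i \leq l/2 - 1$, i.e., the absence of $E_t$-shortcuts from interior odd vertices. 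So minimality is used twice — once to kill chords, once to kill $E_s$/$E_t$ shortcuts — and only the first use appears in your plan.
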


In a matroid intersection algorithm, 
we begin with an empty set $S$.
Then we repeat to find an augmenting path in the exchange graph $G(S)$ and to update the current set $S$.
If there is no $(s, t)$-path in the exchange graph $G(S)$, then $S$ is a common independent set of maximum size.
If there is an $(s, t)$-path in the exchange graph $G(S)$, then we pick a shortest path and obtain a common independent set $S' \in \I' \cap \I''$ of $|S| + 1$ elements.

Cunningham's matroid intersection algorithm \cite{cunningham1986improved} and recent faster matroid intersection algorithms \cite{chakrabarty2019faster,nguyen2019note,blikstad2021breaking_STOC,blikstad2021breaking}
rely on the following two lemmas.

\begin{lemma}[\cite{cunningham1986improved}]\label{lem:total_bound_matroid_intersection}
For any two matroids $\M' = (V, \mathcal{I}')$ and $\M'' = (V, \mathcal{I}'')$, 
if the length of the shortest augmenting path in exchange graph $G(S)$ relative to a common independent set $S \in \mathcal{I}' \cap \mathcal{I}''$ is at least $d$, then $|S| \geq (1 - \frac{O(1)}{d}) \cdot r$, where $r$ is the size of a largest common independent set.
\end{lemma}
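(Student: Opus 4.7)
The plan is to compare $S$ with a fixed maximum common independent set $S^\star \in \I'\cap\I''$ of size $r$, set $q := r - |S|$, and bound $q \leq O(r)/d$ via a BFS-layer argument in $G(S)$.

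First I would run BFS in $G(S)$ from $s$ and assign each vertex $v$ its level $\phi(v) = \mathrm{dist}_{G(S)}(s,v)$. The edge structure of $G(S)$ (arcs in $E' \cup E_s$ go from $S\cup\{s\}$ into $V\setminus S$, while arcs in $E''\cup E_t$ go from $V\setminus S$ into $S\cup\{t\}$) forces every $s$-$t$ path to alternate between $V\setminus S$ on odd levels and $S$ on even levels. By hypothesis $\phi(t) \geq d$, so each $s$-$t$ path uses at least $d-1$ internal vertices in $V$.

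Next I would apply the classical exchange decomposition for matroid intersection to $S$ and $S^\star$. Krogdahl's/Brualdi's theorem, applied once for $\I'$ and once for $\I''$, gives two bijections between $S\setminus S^\star$ and $S^\star\setminus S$ whose exchange arcs lie in $E'$ and $E''$ respectively. Their union, together with the $E_s$-edges leaving $s$ toward each $v \in S^\star\setminus S$ with $S+v \in \I'$ and the $E_t$-edges entering $t$ from each such $v$ with $S+v\in \I''$, decomposes in $G(S)$ into $q$ internally vertex-disjoint $(s,t)$-walks; a standard shortcutting argument promotes them to $q$ vertex-disjoint $(s,t)$-paths, as in Cunningham's original proof.

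Finally I would count: each of these $q$ vertex-disjoint paths has length $\geq d$ (since $d$ lower-bounds the $s$-$t$ distance in $G(S)$), so they use at least $q(d-1)$ vertices from $V$ in total. All such vertices lie in $S \cup S^\star$, whose size is at most $|S| + |S^\star| \leq 2r$. Thus $q(d-1) \leq 2r$, which rearranges to $|S| \geq \bigl(1 - O(1)/d\bigr) r$. The main obstacle is the exchange-decomposition step: unlike bipartite matching, $S\triangle S^\star$ is not a disjoint union of alternating paths in $G(S)$, and one must use the two-matroid exchange axioms to justify that $q$ vertex-disjoint augmenting paths can be extracted. Once this structural fact is in place, the pigeonhole/counting step is routine.
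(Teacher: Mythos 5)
The paper does not prove this lemma at all: it is imported verbatim from Cunningham \cite{cunningham1986improved} (and, for the partition version, transferred via the reduction to Lemma \ref{lem:total_bound_matroid_partition}), so there is no in-paper proof to compare against. Your argument is essentially the standard proof from the cited source, and it is correct in outline. One imprecision worth fixing: the two exchange maps cannot be \emph{bijections} between $S\setminus S^\star$ and $S^\star\setminus S$, since these sets differ in size by $q$. The clean statement is obtained by first extending $S$ inside $S\cup S^\star$ to an independent set of size $r$ of $\M'$ (respectively of $\M''$) using the augmentation axiom; Brualdi's lemma then yields a perfect matching of $S\setminus S^\star$ with all but exactly $q$ elements of $S^\star\setminus S$ via arcs of $E'$ (valid in $G(S)$ by downward closure), the remaining $q$ elements receiving $E_s$-arcs from $s$ — and symmetrically for $\M''$ with $E''$ and $E_t$. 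In the union every vertex of $S\triangle S^\star$ then has in-degree and out-degree at most one, with $s$ of out-degree $q$ and $t$ of in-degree $q$, so the edge set decomposes directly into $q$ internally vertex-disjoint simple $(s,t)$-paths plus cycles; no shortcutting step is actually needed. Your final count ($q(d-1)\le |S\cup S^\star|\le 2r$, hence $|S|\ge (1-\frac{2}{d-1})r$) is fine.
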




\begin{lemma}[Monotonicity Lemma~\cite{cunningham1986improved, haselmayr2008, price2015,chakrabarty2019faster}]\label{lem:monotonicity_matroid_intersection}
For any two matroids $\M' = (V, \mathcal{I}')$ and $\M'' = (V, \mathcal{I}'')$, 
suppose we obtain a common independent set $S' \in \mathcal{I}' \cap \mathcal{I}''$ by augmenting $S \in \mathcal{I}' \cap \mathcal{I}''$ along a shortest augmenting path in $G(S)$.
Note that $|S'| > |S|$.
Let $d$ denote the distance in $G(S)$ and $d'$ denote the distance in $G(S')$.
Then for all $v \in V$,
\begin{itemize}
 \item[(i)] If $d(s, v) < d(s, t)$, then $d'(s, v) \geq d(s, v)$. If $d(v, t) < d(s, t)$, then $d'(v, t) \geq d(v, t)$.
 \item[(ii)] If $d(s, v) \geq d(s, t)$, then $d'(s, v) \geq d(s, t)$. If $d(v, t) \geq d(s, t)$, then $d'(v, t) \geq d(s, t)$.
\end{itemize}
\end{lemma}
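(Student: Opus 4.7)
The plan is to fold both (i) and (ii) into a single uniform bound: for every $v \in V$, $d'(s,v) \ge \min\{d(s,v),\, d(s,t)\}$, together with the symmetric statement $d'(v,t) \ge \min\{d(v,t),\, d(s,t)\}$. Write $P = s, v_1, \ldots, v_{l-1}, t$ for the shortest augmenting path used to go from $S$ to $S'$, so $l = d(s,t)$, and let $\ell(u) := d(s,u)$ denote the BFS level of $u$ in $G(S)$, with the convention $\ell(u) = \infty$ if $u$ is unreachable. Since $P$ is a shortest path, $\ell(v_i) = i$, and the symmetric difference $S \triangle S'$ is exactly $\{v_1,\ldots,v_{l-1}\}$: the odd-indexed vertices enter $S$, the even-indexed ones leave it.

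The main technical claim I would establish is the following ``no-shortcut'' property of $G(S')$: every directed edge $(x,y) \in E(G(S'))$ satisfies $L(y) \le L(x) + 1$, where $L(u) := \min\{\ell(u), l\}$. Given this claim, a trivial induction on path length shows that any directed $s$-to-$v$ walk in $G(S')$ has at least $L(v)$ edges, so $d'(s,v) \ge L(v) = \min\{d(s,v),\, d(s,t)\}$, which specializes to both bullets of part~(i) and part~(ii) for the $d(s,v)$ direction.

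For edges already present in $G(S)$, the bound $L(y) \le L(x)+1$ is immediate from the definition of BFS level. The substantive work is therefore with \emph{new} edges $(x,y) \in E(G(S')) \setminus E(G(S))$. For such an edge the membership of $x$ or $y$ in the independent set must have flipped, so at least one of them is some $v_i$ on $P$, which pins its level. I would then split by the type of edge among $E', E'', E_s, E_t$ and, in each case, combine two ingredients: first, the matroid exchange axiom applied to the pair $S, S' \in \mathcal{I}'$ (or $\mathcal{I}''$) to produce a witness element whose exchangeability certifies an edge of $G(S)$ incident to $y$; second, the fact that $P$ is a \emph{shortest} augmenting path in $G(S)$, which rules out any chord of $P$ that would shortcut across levels. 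Combining these forces the exchange witness to lie at a level no larger than $\ell(x)$, yielding $L(y) \le L(x)$ for every new edge, which is strictly stronger than needed. The symmetric statement for $d'(v,t)$ follows by running the identical argument in the ``reverse'' exchange graph, interchanging the roles of $s$ and $t$ and of $\mathcal{I}'$ and $\mathcal{I}''$.

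The main obstacle will be the edge-by-edge case analysis for new edges: the matroid exchange axiom only guarantees the existence of \emph{some} exchangeable element, and one must argue that a valid choice sits on $P$ at an appropriate level. Ruling out a chord of $P$ in the relevant matroid — which would contradict the shortest-path hypothesis — is exactly the lever that pins the exchange partner into the correct level range. Once the no-shortcut inequality is checked across all four edge types, the decomposition $L(v) < l$ versus $L(v) = l$ recovers the statement verbatim.
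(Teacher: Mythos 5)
First, a note on scope: the paper never proves this lemma --- it is imported verbatim from \cite{cunningham1986improved,haselmayr2008,price2015,chakrabarty2019faster} --- so your proposal has to be judged on its own merits rather than against an in-paper argument. The skeleton you chose is the standard one and is sound as a skeleton: setting $L(u)=\min\{d(s,u),d(s,t)\}$, proving the no-shortcut property $L(y)\le L(x)+1$ for every edge $(x,y)$ of $G(S')$, and concluding $d'(s,v)\ge L(v)$ by induction on walk length does recover both (i) and (ii), and the symmetric argument handles the $d'(v,t)$ half.

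The gap is in the one sentence that carries all the weight: ``for such an edge the membership of $x$ or $y$ must have flipped, so at least one of them is some $v_i$ on $P$.'' This is false. An edge can be present in $G(S')$ and absent from $G(S)$ even though neither endpoint changed membership, because exchangeability is a statement about fundamental circuits, and an augmentation changes fundamental circuits globally, not only at the exchanged elements. Concretely, take a graphic matroid on vertices $\{1,2,3\}$ with edges $a=\{1,2\}$, $b=\{2,3\}$ and parallel edges $c,d$ between $1$ and $3$; for $S=\{a,c\}$ the pair $(a,d)$ is not exchangeable (since $\{c,d\}$ is dependent), but after swapping $c$ for $b$ to get $S'=\{a,b\}$ it is, and neither $a$ nor $d$ lies in $S\triangle S'$. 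These new edges between untouched vertices are precisely the hard case of the monotonicity lemma: ruling out shortcuts among them is where the actual content lies, and it requires the exchange-chain / unique-perfect-matching machinery (one shows that a shortcut edge would yield a strictly shorter augmenting path in $G(S)$, contradicting that $P$ is shortest). By restricting the case analysis to edges with an endpoint on $P$, your plan skips this case entirely, so the no-shortcut claim is not established. A smaller point: the strengthened assertion $L(y)\le L(x)$ for all new edges is also not available in general; only $L(y)\le L(x)+1$ should be expected.
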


 

\subsection{Compressed Exchange Graph for Matroid Partition}\label{subsec:compressed_exchange_graph}
The matroid partitioning problem can be solved by a matroid intersection algorithm.
Let $\hat{V} = V \times [k]$, and define

\begin{align*}
\hat{\mathcal{I}}' = & \{ \hat{I} \subseteq \hat{V} \mid \forall v \in V, \#\{ i \in [k] \mid (v, i) \in \hat{I} \} \leq 1 \}, \\
\hat{\mathcal{I}}'' = & \{ \hat{I} \subseteq \hat{V} \mid \forall i \in [k], \{ v \in V \mid (v, i) \in \hat{I} \} \in \mathcal{I}_i \} .
\end{align*}

Then, $\hat{\M}' = (\hat{V}, \hat{\mathcal{I}}')$ is a partition matroid.
Since $\hat{\M}'' = (\hat{V}, \hat{\mathcal{I}}'')$ is the direct sum of matroids $(V, \mathcal{I}_i)$ for all $i \in [k]$, it is also a matroid.
Then, the family of partitionable subsets of $V$ can be represented as 

\begin{equation*}
\{ S \subseteq V \mid \exists \pi \colon S \to [k], \{(v, \pi(v)) \mid v \in S \} \in \hat{\mathcal{I}}' \cap \hat{\mathcal{I}}'' \} .
\end{equation*}

Therefore, we can solve the matroid partitioning problem by computing a common independent set of maximum size in $\mathcal{I}'$ and $\mathcal{I}''$.
However, we might use too many independence oracle queries when solving the matroid partitioning problem by using this reduction to the matroid intersection problem.
This is due to the following reasons.
When solving the matroid intersection problem that was reduced by the matroid partitioning problem,
the size of the ground set of that matroid intersection problem is $O(k n)$, and then
the number of edges in the exchange graph is $O(k n p)$, 
which depends heavily on $k$.
Furthermore, since we consider the total query complexity of the independence oracle of each matroid $\M_i = (V, \mathcal{I}_i)$ for all $i \in [k]$,
the query complexity of the independence query of the matroid $\hat{\M}'' = (\hat{V}, \hat{\mathcal{I}}'')$ also depends heavily on $k$.

Then, to improve the running time, Knuth~\cite{knuth1973matroid} and Greene-Magnanti~\cite{greene1975some} gave a matroid partition algorithm that uses the following auxiliary graph with $O(n p)$ edges,
which we call {\em compressed exchange graph}.

\begin{definition}[Compressed Exchange Graph~\cite{knuth1973matroid,greene1975some, cunningham1986improved}]\label{def:compressed_exchange_graph}
Consider a partition $(S_1, \dots , S_k)$ of $S \subseteq V$ such that $S_i \in \mathcal{I}_i$ for all $i \in [k]$.
The \emph{compressed exchange graph} is defined as a directed graph $G(S_1, \dots , S_k) = (V \cup \{s, t_1, \dots, t_k\}, E)$, with $s, t_1, \dots, t_k \notin V$ and $E = E' \cup E_s \cup E_t$, where
\begin{align*}
E' = & \{(v, u) \mid \exists i \in [k], u \in S_i, S_i + v \notin \mathcal{I}_i , S_i + v - u \in \mathcal{I}_i \} ,\\
E_s = & \{(s, v) \mid v \in V \setminus S \} , \text{and} \\
E_t = & \bigcup_{i = 1}^{k} \{(v, t_i) \mid v \in V \setminus S_i , S_i + v \in \mathcal{I}_i \} .
\end{align*}
We denote $T = \{t_1, \dots, t_k\}$.
\end{definition}

In the matroid partition algorithm, 
we begin with an empty set $S$ and initialize $S_i = \emptyset$ for all $i \in [k]$.
If there is no vertex in $T$ which is reachable from $s$ in the compressed exchange graph $G(S_1, \dots, S_k)$, then $S$ is a partitionable set of maximum size.
If there is a path from $s$ to $T$ in the compressed exchange graph, then we pick a shortest path $s, v_1, \dots, v_{l - 1}, t_j$. 
Then we can obtain a partitionable set $S' = S + v_1$ and a partition $(S'_1, \dots , S'_k)$ of $S'$ such that $S'_i \in \mathcal{I}_i$ for all $i \in [k]$.
The validity of the algorithm follows from the following two lemmas, which we use throughout this paper.
Cunningham~\cite{cunningham1986improved} showed these lemmas by using the equivalence of the compressed exchange graph for the matroid partition and the exchange graph for the reduced matroid intersection;
see~\cite[Theorem 42.4]{schrijver2003combinatorial} for a direct proof that does not use the reduction to the matroid intersection.

\begin{lemma}\label{lem:augmenting_path_matroid_partition_2}
Given a partition $(S_1, \dots, S_k)$ of $S$ such that $S_i \in \mathcal{I}_i$ for all $i \in [k]$, 
there exists a partitionable set $S'$ whose size is at least $|S| + 1$
if and only if there is a vertex $t_j \in T$ that is reachable from $s$ in the compressed exchange graph $G(S_1, \dots, S_k)$.
\end{lemma}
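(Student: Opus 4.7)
The plan is to derive this lemma from the analogous shortest-augmenting-path theorem for matroid intersection via the reduction recalled in Subsection~\ref{subsec:compressed_exchange_graph}. Let $\hat{\mathcal{M}}' = (\hat{V}, \hat{\mathcal{I}}')$ and $\hat{\mathcal{M}}'' = (\hat{V}, \hat{\mathcal{I}}'')$ be the partition matroid and direct-sum matroid defined there, and associate the given partition with the common independent set $\hat{S} = \{(v, i) : v \in S_i\} \in \hat{\mathcal{I}}' \cap \hat{\mathcal{I}}''$. Projecting away the second coordinate gives a bijection between partitionable subsets of $V$ and common independent sets in $\hat{\mathcal{I}}' \cap \hat{\mathcal{I}}''$ of the same cardinality, so $\hat{S}$ has cardinality $|S|$ and a partitionable set of size $|S| + 1$ exists if and only if $\hat{S}$ can be augmented in $\hat{\mathcal{I}}' \cap \hat{\mathcal{I}}''$. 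By the classical matroid-intersection augmenting-path criterion, this is in turn equivalent to the existence of an $s$-$t$ path in the (uncompressed) exchange graph $\hat{G}(\hat{S})$.

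It then remains to show that reachability from $s$ to the sinks is preserved by the ``compression'' that collapses $\hat{G}(\hat{S})$ to $G(S_1, \ldots, S_k)$. The plan is to define a projection that sends each $(v, i) \in \hat{V}$ to $v$ (so the $k$ copies of each $v \in V \setminus S$ are identified), fixes $s$, and splits the single sink $t$ into $\{t_j\}_{j \in [k]}$ according to the summand index $j$ used by the edge entering $t$. I would then check edge-by-edge that this projection maps directed $s$-$t$ paths in $\hat{G}(\hat{S})$ to directed $s$-$T$ walks in $G(S_1, \ldots, S_k)$, and conversely that each edge of the compressed graph lifts to an edge or two-step walk of $\hat{G}(\hat{S})$.

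The main obstacle is the edge-type case analysis. The $E'$-edges of $\hat{G}(\hat{S})$ split into a ``reassignment'' case ($u = v$ with $i \neq j$, corresponding to moving $v$ between summands) and a ``placement'' case ($v \notin S$), while the $E''$-edges split into a same-summand swap ($i = j$ with $S_i + v - u \in \mathcal{I}_i$) and a cross-summand addition ($i \neq j$ with $S_j + v \in \mathcal{I}_j$). The same-summand swap corresponds cleanly to an $E'$-edge of the compressed graph, whereas the cross-summand case has to be shortcut directly to $t_j$ via a compressed $E_t$-edge, and the reassignment case must be accounted for so that no extra reachability is introduced or destroyed. Carrying out this bookkeeping carefully is the delicate step; once the correspondence is established, the equivalence of the two reachability questions, and hence the lemma, follows immediately. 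A direct proof avoiding this reduction is given in \cite[Theorem~42.4]{schrijver2003combinatorial}.
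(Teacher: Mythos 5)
Your proposal follows the same route the paper takes: the paper does not prove this lemma directly but attributes it to Cunningham's equivalence between the compressed exchange graph and the exchange graph of the reduced matroid intersection instance $(\hat{\M}', \hat{\M}'')$, together with the explicit path correspondence $s, v_1, \dots, v_{l-1}, t_j \leftrightarrow s, (v_1,\pi(v_2)), (v_2,\pi(v_2)), (v_2,\pi(v_3)), \dots, (v_{l-1}, j), t$ described in Section~\ref{subsec:compressed_exchange_graph}, which is exactly the reduction-plus-compression argument you outline. Your case analysis of the edge types (including the need to shortcut to $t_j$ when $S_j + v \in \mathcal{I}_j$, which is why the compressed $E'$ requires $S_i + v \notin \mathcal{I}_i$) is the right bookkeeping, and the fallback citation to \cite[Theorem~42.4]{schrijver2003combinatorial} matches the paper's own reference.
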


\begin{lemma}[Shortest Augmenting Path]\label{lem:augmenting_path_matroid_partition}
Let $s, v_1, v_2, \dots, v_{l - 1}, t_j$ be a shortest $(s, T)$-path in the compressed exchange graph $G(S_1, \dots , S_k)$. 
Then $S' = S + v_1$ is a partitionable set.
\end{lemma}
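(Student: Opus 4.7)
The plan is to reduce to the matroid intersection case via the construction of Section~\ref{subsec:compressed_exchange_graph}: the partition $(\Ss)$ of $S$ corresponds to the common independent set $\hat{S} = \{(v, i) : v \in S_i,\ i \in [k]\} \in \hat{\mathcal{I}}' \cap \hat{\mathcal{I}}''$ of the reduced matroid intersection instance $(\hat{\M}', \hat{\M}'')$. I would then lift the shortest $(s, T)$-path $P = (s, v_1, \dots, v_{l-1}, t_j)$ in $G(\Ss)$ to a shortest $(s, t)$-path in $G(\hat{S})$, apply the matroid intersection shortest augmenting path lemma, and project back to $V$.

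Concretely, let $a_m \in [k]$ for $m = 1, \dots, l-2$ be the index witnessing the edge $(v_m, v_{m+1}) \in E'$ of $P$, so that $v_{m+1} \in S_{a_m}$ and $S_{a_m} + v_m - v_{m+1} \in \mathcal{I}_{a_m}$, and set $a_{l-1} = j$. Since $v_{m+1} \in S_{a_m}$ forces $v_{m+1} \notin S_{a_{m+1}}$, consecutive indices automatically satisfy $a_m \neq a_{m+1}$. The lift is
\[
\hat{P} = \bigl(s,\ (v_1, a_1),\ (v_2, a_1),\ (v_2, a_2),\ (v_3, a_2),\ \dots,\ (v_{l-1}, a_{l-2}),\ (v_{l-1}, a_{l-1}),\ t\bigr),
\]
whose inter-vertex edges $(v_m, a_m) \to (v_{m+1}, a_m)$ are $\hat{\M}''$-exchange edges encoding ``$v_m$ replaces $v_{m+1}$ in $S_{a_m}$'', whose intra-vertex edges $(v_m, a_{m-1}) \to (v_m, a_m)$ are $\hat{\M}'$-exchange edges (valid in the partition matroid because $a_{m-1} \neq a_m$), and whose endpoint edges correspond to $(s, v_1) \in E_s$ and $(v_{l-1}, t_j) \in E_t$. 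Once $\hat{P}$ is known to be shortest in $G(\hat{S})$, applying the matroid intersection shortest augmenting path lemma yields $\hat{S}' \in \hat{\mathcal{I}}' \cap \hat{\mathcal{I}}''$ of size $|\hat{S}| + 1$; projecting to $V$ and observing that each $v_m$ for $m \geq 2$ merely shifts its index from $a_{m-1}$ to $a_m$ while $v_1$ is newly added, we obtain that $S + v_1$ is partitionable with partition $S'_{a_m} = S_{a_m} + v_m - v_{m+1}$ for $m < l-1$, $S'_j = S_j + v_{l-1}$, and $S'_i = S_i$ for $i \notin \{a_1, \dots, a_{l-1}\}$.

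The main obstacle is verifying that $\hat{P}$ is a shortest $(s, t)$-path in $G(\hat{S})$, because each edge of $P$ corresponds to two alternating edges of $\hat{P}$ and the passage between the two exchange graphs must be handled with care. The natural approach is to project an arbitrary $(s, t)$-path of $G(\hat{S})$ down to $V$ via $(v, i) \mapsto v$, contract consecutive duplicates to obtain an $(s, T)$-walk (and hence an $(s, T)$-path) in $G(\Ss)$, and bound the original length in terms of this projected length; the shortestness of $P$ then transfers to $\hat{P}$. Should this reduction prove cumbersome, the fallback is the direct proof of Schrijver~\cite[Theorem 42.4]{schrijver2003combinatorial}, which defines the candidate partition as above, derives each $S'_{a_m} \in \mathcal{I}_{a_m}$ directly from the $E'$ and $E_t$ edge conditions, and completes the argument with a matroid-exchange-based no-shortcut lemma showing $a_1, \dots, a_{l-1}$ are pairwise distinct so that the $S'_{a_m}$ are well-defined.
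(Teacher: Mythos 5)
Your proposal follows essentially the same route the paper relies on: the paper gives no direct proof of this lemma but invokes Cunningham's equivalence between the compressed exchange graph and the exchange graph of the reduced matroid intersection instance $(\hat{\M}', \hat{\M}'')$ (with Schrijver's Theorem 42.4 cited as the direct alternative), and your lifted path $\hat{P}$ coincides with the correspondence $s, (v_1, \pi(v_2)), (v_2, \pi(v_2)), (v_2, \pi(v_3)), \dots, (v_{l-1}, j), t$ that the paper itself writes down in Section~\ref{subsec:compressed_exchange_graph}. Your plan for transferring shortestness (lift to bound $d_{G(\hat{S})}(s,t)$ from above, project and contract to bound it from below) is exactly how the distance identity $d_{G(\Ss)}(s, T) = 1 + \frac{1}{2} d_{G(\hat{S})}(s, t)$ asserted in the paper is established, so the argument is correct and matches the intended proof.
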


We can construct a partition $(S'_1, \dots , S'_k)$ of $S'$ from a partition $(S_1, \dots , S_k)$ of $S$ and an augmenting path in the compressed exchange graph 
by using the following procedure \texttt{Update} (Algorithm \ref{alg:Update}).

\begin{algorithm}[H]
    \KwInput{a partition $(S_1, \dots , S_k)$ of $S$ $(\subseteq V)$ such that $S_i \in \mathcal{I}_i$ for all $i \in [k]$, and
    an augmenting path $s, v_1, \dots, v_{l - 1}, t_j$ }
    \KwOutput{a partition $(S'_1, \dots , S'_k)$ of $S'$ $(\subseteq V)$ such that $S'_i \in \mathcal{I}_i$ for all $i \in [k]$ and $S' = S + v_1$}
    For all $i \in [k]$, set $S'_i \gets S_i$ \\
    For all $v \in S$, denote by $\pi(v)$ the index such that $v \in S_{\pi(v)}$ \\
    \For{$i \in [l - 2]$} {
        $S'_{\pi(v_{i + 1})} \gets S'_{\pi(v_{i + 1})} + v_i - v_{i + 1}$ \\
    }
    $S'_{j} \gets S'_j + v_{l - 1}$ \\
    \Return a partition $(S'_1, \dots , S'_k)$ of $S'$ 
    \caption{\texttt{Update}}\label{alg:Update}
\end{algorithm}


Cunningham~\cite{cunningham1986improved} observes that the equivalence between the exchange graph for the matroid intersection of two matroids $\hat{\M}' = (\hat{V}, \hat{\mathcal{I}}')$ and $\hat{\M}'' = (\hat{V}, \hat{\mathcal{I}}'')$ and
the compressed exchange graph for the matroid partition of $k$ matroids $(V, \mathcal{I}_1), \dots, (V, \mathcal{I}_k)$
to prove Lemmas \ref{lem:total_bound_matroid_partition} and \ref{lem:monotonicity_matroid_partition} and
to develop an efficient algorithm for matroid partition that employs the blocking flow approach.
For a fixed partition $(S_1, \dots, S_k)$ of $S$ and an element $v \in S$, let $\pi(v)$ be the index such that $v \in S_{\pi(v)}$.
We also denote by $\hat{S}$ the set $\{(v, \pi(v)) \in \hat{V} \mid v \in S \}$.
A path $s, v_1, v_2, \dots, v_{l - 1}, t_j$ in the compressed exchange graph for the matroid partition corresponds to a path $s, (v_1, \pi(v_2)), (v_2, \pi(v_2)), (v_2, \pi(v_3)), \dots,$ $ (v_{l - 1}, \pi(v_{l - 1})), (v_{l - 1}, j), t$ in the exchange graph for the matroid intersection.
Then, for all elements $v \in S$, we have $d_{G(\Ss)}(s, v) = 1 + \frac{1}{2} d_{G(\hat{S})}(s, (v, \pi(v)))$ and
$d_{G(\Ss)}(v, T) = \frac{1}{2} d_{G(\hat{S})}((v, \pi(v)), t)$.
We also have $d_{G(\Ss)}(s, T) = 1 + \frac{1}{2} d_{G(\hat{S})}(s, t)$.



Cunningham~\cite{cunningham1986improved} uses the following two lemmas to develop an efficient matroid partition algorithm by using blocking flow approach.
These lemmas can be shown from the correspondence between the exchange graph and the compressed exchange graph.
Note that Lemmas \ref{lem:total_bound_matroid_partition} and \ref{lem:monotonicity_matroid_partition} correspond to Lemmas \ref{lem:total_bound_matroid_intersection} and \ref{lem:monotonicity_matroid_intersection}, respectively.
We also use these two lemmas in our fast matroid partition algorithms.

\begin{lemma}[\cite{cunningham1986improved}]\label{lem:total_bound_matroid_partition}
Given a partition $(S_1, \dots , S_k)$ of $S$ such that $S_i \in \mathcal{I}_i$ for all $i \in [k]$.
If the length of a shortest augmenting path in the compressed exchange graph $G(S_1, \dots , S_k)$ is at least $d$, then $|S| \geq (1 - \frac{O(1)}{d}) \cdot p$, where $p$ is the size of a largest partitionable set.
\end{lemma}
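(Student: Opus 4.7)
The plan is to leverage the reduction from matroid partition to matroid intersection described in Section~\ref{subsec:compressed_exchange_graph} and then invoke Lemma~\ref{lem:total_bound_matroid_intersection} on the reduced instance. Given the partition $(\Ss)$ of $S$, form the encoded set $\hat{S} = \{(v, \pi(v)) \mid v \in S\} \subseteq \hat{V}$, which is a common independent set of $\hat{\M}' = (\hat{V}, \hat{\mathcal{I}}')$ and $\hat{\M}'' = (\hat{V}, \hat{\mathcal{I}}'')$ with $|\hat{S}| = |S|$. Conversely, every common independent set of $\hat{\M}'$ and $\hat{\M}''$ encodes a partitionable subset of $V$ of the same cardinality, so the size $r$ of a largest common independent set in the reduced instance equals $p$.

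Next, I would use the distance correspondence stated just before the lemma: for every $v \in S$,
\begin{equation*}
d_{G(\Ss)}(s, T) = 1 + \tfrac{1}{2}\, d_{G(\hat{S})}(s, t).
\end{equation*}
Thus, if the shortest augmenting path in the compressed exchange graph $G(\Ss)$ has length at least $d$, the shortest $(s,t)$-path in the exchange graph $G(\hat{S})$ for the matroid intersection instance has length at least $2(d-1) = 2d-2$. Applying Lemma~\ref{lem:total_bound_matroid_intersection} to $(\hat{\M}', \hat{\M}'')$ and $\hat{S}$ yields
\begin{equation*}
|S| = |\hat{S}| \geq \Bigl(1 - \tfrac{O(1)}{2d - 2}\Bigr) \cdot r = \Bigl(1 - \tfrac{O(1)}{d}\Bigr) \cdot p,
\end{equation*}
which is the desired bound (absorbing the constant factor into the $O(1)$).

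There is essentially no technical obstacle once the reduction is in place: the proof is an almost mechanical translation. The only point that requires a short verification is the distance identity $d_{G(\Ss)}(s, T) = 1 + \tfrac{1}{2} d_{G(\hat{S})}(s, t)$, which follows from the one-to-one correspondence between a compressed path $s, v_1, v_2, \dots, v_{l-1}, t_j$ and the expanded path $s, (v_1, \pi(v_2)), (v_2, \pi(v_2)), (v_2, \pi(v_3)), \dots, (v_{l-1}, \pi(v_{l-1})), (v_{l-1}, j), t$, which doubles the number of internal edges (plus the single $s$-edge). The minor bookkeeping issue — ensuring that \emph{shortest} paths correspond on both sides (not just some path) — is handled by noting that any $(s,t)$-path in $G(\hat{S})$ projects to an $(s, T)$-walk in $G(\Ss)$ of the same "compressed length," so the inequality on distances holds in both directions. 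With this, the statement is immediate.
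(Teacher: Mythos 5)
Your proposal is correct and follows essentially the same route the paper takes: the paper proves this lemma (following Cunningham) precisely by invoking the correspondence between the compressed exchange graph and the exchange graph of the reduced matroid intersection instance $(\hat{\M}', \hat{\M}'')$, together with the distance identity $d_{G(\Ss)}(s,T) = 1 + \tfrac{1}{2} d_{G(\hat{S})}(s,t)$ and Lemma~\ref{lem:total_bound_matroid_intersection}. Your additional remark about checking that distances (not just individual paths) correspond in both directions is exactly the right point to verify, and it holds as you describe.
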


\begin{lemma}[Monotonicity Lemma \cite{cunningham1986improved, haselmayr2008, price2015, chakrabarty2019faster}]\label{lem:monotonicity_matroid_partition}
Suppose we obtain a partition $(S'_1, \dots, S'_k)$ of $S'$ by augmenting a partition $(\Ss)$ of $S$ along a shortest augmenting path in $G(\Ss)$.
Note that $|S'| > |S|$.
Let $d$ denote the distance in $G(\Ss)$ and $d'$ denote the distance in $G(S'_1, \dots, S'_k)$.
Then for all $v \in V$,
\begin{itemize}
 \item[(i)] If $d(s, v) < d(s, T)$, then $d'(s, v) \geq d(s, v)$. If $d(v, T) < d(s, T)$, then $d'(v, T) \geq d(v, T)$.
 \item[(ii)] If $d(s, v) \geq d(s, T)$, then $d'(s, v) \geq d(s, T)$. If $d(v, T) \geq d(s, T)$, then $d'(v, T) \geq d(s, T)$.
\end{itemize}
\end{lemma}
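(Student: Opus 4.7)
The plan is to reduce the claim to the matroid intersection monotonicity lemma (Lemma~\ref{lem:monotonicity_matroid_intersection}) applied to the reduced instance $\hat{\M}', \hat{\M}''$ constructed in Section~\ref{subsec:compressed_exchange_graph}, using the explicit path correspondence recorded just above the lemma statement together with the distance identities
\[
d_{G(\Ss)}(s, v) = 1 + \tfrac{1}{2}\, d_{G(\hat{S})}\bigl(s, (v, \pi(v))\bigr), \quad d_{G(\Ss)}(v, T) = \tfrac{1}{2}\, d_{G(\hat{S})}\bigl((v, \pi(v)), t\bigr),
\]
and $d_{G(\Ss)}(s, T) = 1 + \tfrac{1}{2}\, d_{G(\hat{S})}(s, t)$, where $\hat{S} = \{(v, \pi(v)) : v \in S\}$.

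First I would verify that applying \texttt{Update} to the shortest $(s, T)$-path in $G(\Ss)$ yields a partition $(S_1', \ldots, S_k')$ whose associated set $\hat{S}' = \{(v, \pi'(v)) : v \in S'\}$ is exactly the result of augmenting $\hat{S}$ along the shortest $(s, t)$-path in $G(\hat{S})$ corresponding (via the path translation of Section~\ref{subsec:compressed_exchange_graph}) to the chosen path in $G(\Ss)$. Consequently, the distance identities above also hold for the augmented pair with $\pi$ replaced by $\pi'$. A key bookkeeping observation is that $\pi(v) \neq \pi'(v)$ only for vertices $v$ lying on the augmenting path $v_1, \ldots, v_{l-1}$, and any such $v = v_m$ satisfies $d(s, v) = m \leq l-1$ and $d(v, T) = l - m \leq l - 1$, strictly below $l = d(s, T)$.

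Next I would translate the four claims in turn. For ``$d(s, v) < d(s, T) \Rightarrow d'(s, v) \geq d(s, v)$'', the hypothesis gives $d_{G(\hat{S})}(s, (v, \pi(v))) < d_{G(\hat{S})}(s, t)$, so Lemma~\ref{lem:monotonicity_matroid_intersection}(i) yields $d_{G(\hat{S}')}(s, (v, \pi(v))) \geq d_{G(\hat{S})}(s, (v, \pi(v)))$. If $\pi(v) = \pi'(v)$, applying the distance identity directly gives the desired bound. Otherwise $(v, \pi'(v)) \in \hat{S}'$ and $(v, \pi(v)) \in \hat{V} \setminus \hat{S}'$, so the ``index-swap'' edge $(v, \pi'(v)) \to (v, \pi(v))$ lives in the $E'$-part of $G(\hat{S}')$, producing $d_{G(\hat{S}')}(s, (v, \pi(v))) \leq d_{G(\hat{S}')}(s, (v, \pi'(v))) + 1$, and the parity of distances to vertices of $\hat{S}'$ absorbs the off-by-one loss, yielding $d_{G(\hat{S}')}(s, (v, \pi'(v))) \geq d_{G(\hat{S})}(s, (v, \pi(v)))$. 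For case (ii), the bookkeeping observation above forces $\pi(v) = \pi'(v)$ (any altered vertex has $d(s, v) < d(s, T)$), so the translation of Lemma~\ref{lem:monotonicity_matroid_intersection}(ii) is immediate.

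The symmetric direction for $d(v, T)$ is analogous in spirit but is the main obstacle: the natural $E'$-swap edge only goes from $\hat{S}'$ to $\hat{V} \setminus \hat{S}'$, so the one-step argument used for the $s$-side is not directly available. To close this, one either exploits the parity of $d_{G(\hat{S}')}(\cdot, t)$ combined with the fact that an index-changed vertex necessarily lies on the old augmenting path (so its ``$v$-to-$t$'' distance is controlled by position along the path), or invokes the tighter monotonicity form already used in the proofs of the cited references \cite{cunningham1986improved, haselmayr2008, price2015, chakrabarty2019faster}. Once the index-mismatch case is handled, dividing the intersection-graph bounds by $2$ and translating via the distance identity finishes both halves of~(i) and~(ii).
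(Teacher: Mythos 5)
The paper gives no proof of this lemma: it merely notes that it ``can be shown from the correspondence between the exchange graph and the compressed exchange graph'' and defers to the cited references, so your reduction to Lemma~\ref{lem:monotonicity_matroid_intersection} via the distance identities is exactly the paper's (implicit) argument, carried out in considerably more detail than the paper itself provides. The one point you leave open --- the bound on $d'(v,T)$ for a vertex $v = v_m$ whose index changes --- does close along the first route you name: instead of starting from $(v_m,\pi(v_m))$, apply Lemma~\ref{lem:monotonicity_matroid_intersection}(i) directly to $(v_m,\pi'(v_m)) = (v_m,\pi(v_{m+1}))$, which lies on the old augmenting path in $G(\hat{S})$ at distance exactly $2(l-m)-1$ from $t$ (where $l = d(s,T)$), so monotonicity keeps its distance to $t$ at least $2(l-m)-1$ after the augmentation; since $(v_m,\pi'(v_m)) \in \hat{S}'$, every path from it to $t$ in $G(\hat{S}')$ alternates sides and hence has even length, so the distance is in fact at least $2(l-m)$, giving $d'(v_m,T) \geq l-m = d(v_m,T)$. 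With that step filled in, your argument is complete and consistent with the proof the paper intends.
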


As we will see later, we use the binary search technique given in Lemma \ref{lem:binary_search_matroid_partition_ind}
to find edges in the compressed exchange graph under the independence oracle model.
Note that the procedure \texttt{FindOutEdge}$(\M_i, S_i, v, B)$ gives us an efficient way to find edges from the vertex $v$ to a vertex $u \in B (\subseteq S_i)$ in the compressed exchange graph.

We also use the binary search technique given in Lemma \ref{lem:binary_search_matroid_partition_rank} 
to find edges in the compressed exchange graph under the rank oracle model.
Note that the procedure \texttt{FindInEdge}$(\M_i, S_i, u, A)$ gives us an efficient way to find edges from a vertex $v \in A$ to the vertex $u \in S_i$ in the compressed exchange graph.

In the same way as Lemma \ref{lem:binary_search_matroid_partition_rank},
given a matroid $\M = (V, \mathcal{I})$, an independent set $S \in \mathcal{I}$ and a subset $A \subseteq V \setminus S$, we can find an element $v \in A$ such that $S + v \in \mathcal{I}$ or otherwise determine that no such element exists by using $O(\log |A|)$ rank queries.
Note that the procedure \texttt{FindInEdge}$(\M_j, S_j, \emptyset, A)$ gives us an efficient way to find edges from $v \in A$ to the vertex $t_j \in T$ in the compressed exchange graph.

\section{Blocking Flow Algorithm} \label{sec:blockingflow}

In this section, we provide our matroid partition algorithms in both independence oracle  and rank oracle models, 
which is obtained by simply combining the \emph{blocking flow} approach proposed by Cunningham \cite{cunningham1986improved} and the binary search search procedure proposed by \cite{nguyen2019note, chakrabarty2019faster}.
In Sections \ref{subsec:blockingflow_ind} and \ref{subsec:blockingflow_rank},
we provide our matroid partition algorithms in independence oracle and rank oracle models, respectively.



\subsection{Blocking Flow Algorithm using Independence Oracle} \label{subsec:blockingflow_ind}

In this subsection we present our matroid partition algorithm using the blocking flow approach in the independence oracle model.
We show the following theorem, which implies Theorem \ref{matroid_partition_ind_oracle}.

\begin{theorem}\label{thm:matroid_partition_ind_oracle}
\tatsuya{There is an algorithm that uses $O\left((k n + p \log p) \sqrt{p}\right)$ independence oracle queries and solves the matroid partitioning problem.}
\end{theorem}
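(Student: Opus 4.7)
The plan is to follow Cunningham's blocking-flow framework, replacing the explicit enumeration of edges in the compressed exchange graph with on-demand edge discovery via the binary-search procedure \texttt{FindOutEdge} of Lemma~\ref{lem:binary_search_matroid_partition_ind}. The algorithm organizes augmentations into phases, each performing a maximal collection of vertex-disjoint augmentations along shortest $(s,T)$-paths in the current compressed exchange graph. A standard Hopcroft-Karp style argument bounds the number of phases by $O(\sqrt{p})$: Lemma~\ref{lem:monotonicity_matroid_partition} implies that the shortest $(s,T)$-distance is strictly monotone across phases, so at most $O(\sqrt{p})$ phases can have shortest length at most $\sqrt{p}$; and Lemma~\ref{lem:total_bound_matroid_partition} ensures that once the shortest length exceeds $\sqrt{p}$, at most $O(\sqrt{p})$ further augmentations remain. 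It therefore suffices to implement a single phase in $O(kn + p\log p)$ independence queries.

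For one phase, first run a BFS from $s$ on the compressed exchange graph. When processing a vertex $v$, for each $i \in [k]$ with $v \notin S_i$: query whether $S_i + v \in \mathcal{I}_i$; if yes, record the edge $v \to t_i$ and move on, since any exchange edge $(v,u)$ with $u \in S_i$ requires $S_i + v \notin \mathcal{I}_i$; otherwise, repeatedly invoke \texttt{FindOutEdge}$(\M_i, S_i, v, B)$ with $B$ restricted to the undiscovered elements of $S_i$, where each successful call finds a new neighbor in $O(\log p)$ queries and the terminating failed call costs a single query (Lemma~\ref{lem:binary_search_matroid_partition_ind}). Each processed vertex incurs $O(k)$ membership-and-failure queries, for a total of $O(kn)$; and since each element of $S$ can be discovered at most once in a BFS, successful calls total at most $|S| \leq p$, contributing $O(p\log p)$ queries. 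After BFS, perform the blocking-flow augmentations by a Hopcroft-Karp style DFS in the BFS-layered graph, advancing via \texttt{FindOutEdge} applied to not-yet-visited next-layer candidates and backtracking by permanently removing dead vertices. Because augmenting paths within a phase are vertex-disjoint and Lemma~\ref{lem:monotonicity_matroid_partition} guarantees that the BFS layering remains valid throughout the phase, the additional queries are again dominated by $O(kn)$ failures plus $O(p\log p)$ successes, giving per-phase cost $O(kn + p\log p)$ and total $O((kn + p\log p)\sqrt{p})$.

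The hard part will be the amortized analysis of the blocking-flow step, because augmentations performed within the same phase modify the sets $S_i$ and can change the valid edges of the compressed exchange graph; one must argue that restricted \texttt{FindOutEdge} queries on the updated sets continue to correctly find augmenting-path edges without redoing the BFS. Cunningham's blocking-flow argument for matroid partition, combined with Lemma~\ref{lem:monotonicity_matroid_partition}, provides exactly this guarantee: the BFS layering computed at the start of a phase remains a valid layering after each augmentation within the phase. Implementing this cleanly---maintaining, for each matroid, the set of candidate neighbors at each layer after each augmentation, and charging each successful \texttt{FindOutEdge} call either to a BFS discovery or to a vertex that is permanently removed from the layered graph---requires care but closely parallels Cunningham's original treatment.
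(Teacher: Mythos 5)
Your proposal is correct and follows essentially the same route as the paper: a Cunningham-style blocking-flow phase structure with \texttt{FindOutEdge} used for on-demand edge discovery, the per-phase query count split into $O(kn)$ failed/membership queries plus $O(p\log p)$ for successful binary searches, and the $O(\sqrt{p})$ phase bound obtained from Lemma~\ref{lem:monotonicity_matroid_partition} together with Lemma~\ref{lem:total_bound_matroid_partition}. The subtlety you flag about the layering surviving intra-phase augmentations is handled in the paper exactly as you suggest, via Cunningham's Lemma~\ref{lem:augmenting_path_monotonicity}.
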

This result improves upon the previously known matroid partition algorithm by Cunningham~\cite{cunningham1986improved}
when $k = o(p)$.

For the proof, 
we first provide the procedure \texttt{GetDistanceIndependence} (Algorithm \ref{alg:getdistanceind}) that efficiently finds distances from $s$ to every vertex in the compressed exchange graph.
This algorithm simply runs a breadth first search by using the procedure \texttt{FindOutEdge}.

\begin{lemma}[Breadth First Search using Independence Oracle]\label{lem:getdistanceind}
Given a partition $(S_1, \dots , S_k)$ of $S$ $(\subseteq V)$ such that $S_i \in \mathcal{I}_i$ for all $i\in [k]$, 
the procedure $\textup{\texttt{GetDistanceIndependence}}$ (Algorithm\ref{alg:getdistanceind}) outputs $d \in \mathbb{R}^{V \cup \{s\} \cup T}$ such that, for $v \in V \cup \{s\} \cup T$, 
$d_v$ is the distance from $s$ to $v$ in the compressed exchange graph $G(S_1, \dots , S_k)$.
\tatsuya{The procedure $\textup{\texttt{GetDistanceIndependence}}$ uses $O\left(k n + p \log p\right)$ independence oracle queries.}
\end{lemma}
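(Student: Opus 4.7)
The plan is to implement a standard breadth first search from $s$ in the compressed exchange graph, using Lemma~\ref{lem:binary_search_matroid_partition_ind} to enumerate outgoing edges one at a time. I initialize $d_s=0$, $d_v=1$ for every $v\in V\setminus S$ (the edges in $E_s$ are known from the partition and require no queries), and $d_u=\infty$ for every remaining vertex; the BFS queue is seeded with $V\setminus S$. When a vertex $v\in V$ is popped I iterate over every index $i\in[k]$ with $v\notin S_i$ and do two things. First, I spend one independence query to test whether $S_i+v\in\mathcal{I}_i$; by the definition of $E_t$ this exactly decides whether the edge $(v,t_i)$ exists, so if the answer is yes and $t_i$ is still unvisited I set $d_{t_i}=d_v+1$. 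Second, if the answer was no, I repeatedly call \texttt{FindOutEdge}$(\M_i,S_i,v,B)$ with $B$ the currently unvisited part of $S_i$, and for each returned element $u$ I set $d_u=d_v+1$, delete $u$ from $B$, and enqueue $u$; the loop stops as soon as the procedure reports no remaining exchange partner.

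Correctness follows from a standard BFS invariant once I verify that for every popped $v$ the two steps jointly discover every outgoing edge of $v$ to a still-unvisited vertex. The first step handles the edges in $E_t$, and by the definition of $E'$ an edge $(v,u)$ with $u\in S_i$ can exist only when $S_i+v\notin\mathcal{I}_i$, which is precisely the case in which the second step is executed; within that case Lemma~\ref{lem:binary_search_matroid_partition_ind} guarantees that iterating \texttt{FindOutEdge} on the shrinking unvisited set $B$ exhausts all exchange partners of $v$ in $S_i$.

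For the query count I would split the total into three buckets. Since the number of pairs $(v,i)$ with $v\notin S_i$ is at most $nk$, the first-step checks contribute $O(nk)$ queries. Each execution of the second step terminates with a call of \texttt{FindOutEdge} that returns ``none'', and by the second half of Lemma~\ref{lem:binary_search_matroid_partition_ind} each such terminating call costs only one independence query, adding another $O(nk)$ across all pairs. The remaining cost comes from successful \texttt{FindOutEdge} calls: each such call returns a previously unvisited element $u\in S$, marks it visited, and removes it from every future search set, so across the entire BFS the total number of successes is at most $|S|\le p$. Since each successful call costs $O(\log p)$ queries, this bucket contributes $O(p\log p)$, and summing the three buckets gives $O(nk+p\log p)$. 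The delicate step I expect to be the main obstacle is exactly this charging: without invoking the one-query failure shortcut of Lemma~\ref{lem:binary_search_matroid_partition_ind}, a naive analysis would charge $\log p$ queries to every $(v,i)$ pair and give only the weaker $O(nk\log p)$ bound, so keeping the $\log p$ factor attached to the $p$ successful finds rather than to the $nk$ pairs is what makes the claimed bound tight.
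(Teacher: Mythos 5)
Your proposal is correct and follows essentially the same route as the paper: a BFS over the compressed exchange graph in which the $E_t$ edges are tested directly ($O(kn)$ queries), the failing \texttt{FindOutEdge} calls are charged one query each via the second half of Lemma~\ref{lem:binary_search_matroid_partition_ind} ($O(kn)$ total), and the successful calls are bounded by $|S|\le p$ because each returned element is permanently removed from the search sets, giving the $O(p\log p)$ term. The charging step you flag as delicate is exactly the accounting the paper uses.
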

\begin{proof}
The procedure \texttt{GetDistanceIndependence} simply performs a breadth first search in the compressed exchange graph $G(S_1, \dots , S_k)$.
Thus, the procedure \texttt{GetDistanceIndependence} correctly computes distances from $s$ in $G(S_1, \dots , S_k)$.
Note that each vertex $v \in V$ is added to $Q$ at most once and each vertex $v \in S$ is removed from $B_{\pi(v)}$ at most once.
Thus, the number of independence oracle queries used in Line \ref{code:BFS_ind_1} is $O(k n)$.
The number of \texttt{FindOutEdge} calls that do not output $\emptyset$ is $O(p)$, and the number of \texttt{FindOutEdge} calls that output $\emptyset$ is $O(k n)$.
\tatsuya{Hence, by Lemma~\ref{lem:binary_search_matroid_partition_ind}, the number of independence oracle queries used in Line \ref{code:BFS_ind_2} is $O\left(k n + p \log p\right)$, which completes the proof.}
\end{proof}

\begin{algorithm}[H]
    \KwInput{a partition $(S_1, \dots , S_k)$ of $S$ $(\subseteq V)$ such that $S_i \in \mathcal{I}_i$ for all $i \in [k]$}
    \KwOutput{$d \in \mathbb{R}^{V \cup \{s\} \cup T}$ such that for $v \in V \cup \{s\} \cup T$, $d_v$ is the distance from $s$ to $v$ in $G(S_1, \dots , S_k)$}
    $d_s \gets 0$ \\
    For all $v \in V \setminus S$ let $d_v \gets 1$ \\
    For all $v \in S$ let $d_v \gets \infty$ \\
    For all $i \in [k]$ let $d_{t_i} \gets \infty$ \\
    $Q \gets \{ v \in V \setminus S \}$ \tcp{$Q :$ queue} 
    For all $i \in [k]$ let $B_i \gets S_i$ \\
    \While{$Q \neq \emptyset$}{
        Let $v$ be the element added to $Q$ earliest \\
        $Q \gets Q - v$ \\
        \For{$i \in [k]$ with $d_{t_i} = \infty$} {
            \If{$v \notin S_i$ and $S_i + v \in \mathcal{I}_i$}{ \label{code:BFS_ind_1}
                $d_{t_i} \gets d_v + 1$ \\
            }
        }
        \For{$i \in [k]$ with $v \notin S_i$}{
            \While{$u$ = \textup{\texttt{FindOutEdge}}($\M_i, S_i, v, B_i$) satisfies $u \neq \emptyset$} { \label{code:BFS_ind_2}
                $Q \gets Q + u$ \\
                $d_u \gets d_v + 1$ \\
                $B_i \gets B_i - u$
            }
        }
    }
    \Return {$d$}
    \caption{\texttt{GetDistanceIndependence}}\label{alg:getdistanceind}
\end{algorithm}

Next we provide our augmentation subroutine for our faster matroid partition algorithm.
We implement Cunningham's~\cite{cunningham1986improved} blocking flow approach for matroid partition 
by using the binary search procedure proposed by \cite{nguyen2019note, chakrabarty2019faster}.
This algorithm is similar to the matroid intersection algorithm of \cite{chakrabarty2019faster} in the rank oracle model.
The implementation is described as \texttt{BlockFlowIndependence} (Algorithm \ref{alg:blockflowind}).

\begin{algorithm}[H]
    \KwInput{a partition $(S_1, \dots , S_k)$ of $S$ $(\subseteq V)$ such that $S_i \in \mathcal{I}_i$ for all $i \in [k]$}
    \KwOutput{a partition $(S'_1, \dots , S'_k)$ of $S'$ $(\subseteq V)$ such that $S'_i \in \mathcal{I}_i$ for all $i \in [k]$, $|S'| > |S|$, and $d_{G(S_1', \dots, S_k')}(s, T) \geq d_{G(S_1, \dots, S_k)}(s, T) + 1$, or a partition $(S_1, \dots , S_k)$ of $S$ if no such $S'$ exists.}
    $d \gets \text{\texttt{GetDistanceIndependence}}(S_1, \dots , S_k)$ \label{line:blockflowind1} \\
    \lIf{$d_T = \infty$} {
        \Return $(S_1, \dots , S_k)$
    }
    For all $v \in V$ let $I_v \gets [k]$ \\
    For all $i \in [d_T - 1]$ let $L_i \gets \{ v \in V \mid d_v = i \}$ \\
    \While{$L_i \neq \emptyset$ for all $i \in [d_T - 1]$} {
        $l \gets 0$ , $a_{d_T} \gets \emptyset$ \\
        \While{$l < d_{T}$} {
            $a_{l + 1} \gets \emptyset$ \\
            \If{$l < d_T - 1$} {
                \If{$l = 0$} {
                    Pick arbitrary $v \in L_1$ \\
                    $a_1 \gets v$ \\
                } \Else {
                    \While{$I_{a_{l}} \neq \emptyset$}{
                        Pick arbitrary $i \in I_{a_{l}}$ \\
                        $a_{l + 1} \gets $ \texttt{FindOutEdge}($\M_i, S_i, a_l, L_{l + 1} \cap S_i$) \label{line:blockflowind3} \\
                        \lIf{$a_{l + 1} = \emptyset$} {
                            $I_{a_l} \gets I_{a_l} - i$
                        }
                        \lElse {
                            \Break 
                        }
                    }
                }
            } \Else {
                \For{$i \in [k]$} {
                    \If{$a_l \notin S_i$ and $S_i + a_l \in \mathcal{I}_i$} { \label{line:blockflowind2}
                        $a_{l + 1} \gets t_i$ \\
                        \Break \\
                    }
                }
            }

            \If{$a_{l + 1} = \emptyset$} {
                $L_l \gets L_l - a_l$ \\
                \lIf{$L_l = \emptyset$} {
                    \Break
                }
                $l \gets l - 1$ \\
            } \Else {
                $l \gets l + 1$ \\
            }
        }

        \If{$a_{d_T} \neq \emptyset$} {
            Denote by $P$ the path $s, a_1, \ldots , a_{d_T - 1}, a_{d_T}$ \\
            $(S_1, \dots , S_k) \gets \text{\texttt{Update}}((S_1, \dots , S_k), P)$ \label{line:blockflowind4} \\
            For all $i \in [d_T - 1]$ let $L_i \gets L_i - a_i$ \\
        }
    }
    \Return $(S_1, \dots , S_k)$
    \caption{\texttt{BlockFlowIndependence}}\label{alg:blockflowind}
\end{algorithm}

In the procedure \texttt{BlockFlowIndependence}, given a partition $(\Ss)$ of $S$, 
we first compute the distances from $s$ to every vertex in the compressed exchange graph $G(\Ss)$ using \texttt{GetDistanceIndependence} (Algorithm \ref{alg:getdistanceind}).
By using these distances, we divide $V$ into sets $L_1, L_2, \dots$, where each $L_i$ has all vertices $v$ such that the distance from $s$ to $v$ is $i$ in the compressed exchange graph $G(\Ss)$.
Then we search a path $s, a_1, a_2, \dots , a_{d_T - 1}, a_{d_T}$ in the compressed exchange graph $G(\Ss)$, where $a_i \in L_i$ for all $i \in [d_T - 1]$.
Note that we write $d_T = \min (d_{t_1}, \dots , d_{t_k})$.
If we find such a path, we augment a partition $(\Ss)$ of $S$ and remove $a_i$ from $L_i$ for all $i \in [d_T - 1]$.
Then we search a new path again
until no $(s, T)$-path of length $d_T$ can be found.
During the search for such a path, if the procedure concludes that some vertex in $L_i$ is not on such a path, then it removes the vertex from $L_i$.

\begin{lemma}[Blocking Flow using Independence Oracle] \label{lem:blockflowind}
Given a partition $(S_1, \dots , S_k)$ of $S$ $(\subseteq V)$ such that $S_i \in \mathcal{I}_i$ for all $i\in [k]$, 
the procedure \textup{\texttt{BlockFlowIndependence}} (Algorithm \ref{alg:blockflowind}) outputs 
a partition $(S'_1, \dots , S'_k)$ of $S'$ $(\subseteq V)$ such that $S'_i \in \mathcal{I}_i$ for all $i \in [k]$, $|S'| > |S|$, and $d_{G(S_1', \dots, S_k')}(s, T) \geq d_{G(S_1, \dots, S_k)}(s, T) + 1$, or a partition $(S_1, \dots , S_k)$ of $S$ if no such $S'$ exists.
\tatsuya{The procedure \textup{\texttt{BlockFlowIndependence}} uses $O(k n + p \log p)$ independence oracle queries.}
\end{lemma}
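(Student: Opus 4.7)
The plan is to establish two things: (i) \texttt{BlockFlowIndependence} correctly simulates a blocking-flow phase in the compressed exchange graph, so the returned partition either equals the input (when no augmentation is possible) or satisfies $|S'| > |S|$ together with the strict distance increase $d_{G(S_1',\ldots,S_k')}(s,T) \geq d_T + 1$; and (ii) the total cost is $O(kn + p\log p)$ queries, split among four sources.

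For correctness, I would first invoke Lemma \ref{lem:getdistanceind} to conclude that the initial call produces the correct BFS levels $\{L_i\}_{i\in[d_T-1]}$. If $d_T = \infty$, Lemma \ref{lem:augmenting_path_matroid_partition_2} gives optimality of the input. Otherwise, the inner loop is a DFS on the level graph that, whenever it reaches level $d_T$, produces an augmenting path to $T$; Lemma \ref{lem:augmenting_path_matroid_partition} together with \texttt{Update} then yields a valid partition of $S+v_1$. The crucial invariant I need across augmentations is that, after each augmentation along a shortest $(s,T)$-path of length $d_T$, (a) every $v$ still in $L_l$ satisfies $d(s,v)\geq l$ in the updated graph, and (b) no new edge from a previously-explored $a_l$ to any vertex in $L_{l+1}\cap S_i$ can appear for an index $i$ already removed from $I_{a_l}$. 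Part (a) follows from Lemma \ref{lem:monotonicity_matroid_partition}(i). Part (b) follows from observing that, after the augmentation, the only elements newly added to $S_i$ are drawn from the augmenting path itself, and all such path vertices are simultaneously deleted from their respective $L$-levels, so $L_{l+1}\cap S_i$ can only shrink. When the outer loop terminates, no $(s,T)$-path of length $\leq d_T$ survives, and a final application of Lemma \ref{lem:monotonicity_matroid_partition} gives $d'(s,T)\geq d_T + 1$.

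For the query bound, I would decompose the cost as follows. First, Line \ref{line:blockflowind1} contributes $O(kn + p\log p)$ by Lemma \ref{lem:getdistanceind}. Second, each failed \texttt{FindOutEdge} call in Line \ref{line:blockflowind3} costs a single query (Lemma \ref{lem:binary_search_matroid_partition_ind}) and permanently removes an index from some $I_v$; since $|I_v|\leq k$ and $|V|=n$, this totals $O(kn)$ queries. Third, each successful call costs $O(\log p)$ queries; every success fills a fresh $a_{l+1}$ at some level $l+1\geq 2$, and that vertex is removed from $L_{l+1}$ (via a dead-end or the post-augmentation deletion) before it can be picked again. Because vertices at distance $\geq 2$ from $s$ in the compressed exchange graph lie inside $S$, we have $\sum_{l\geq 2}|L_l^{\mathrm{init}}| \leq |S| \leq p$, bounding the successes by $p$ and their cost by $O(p\log p)$. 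Fourth, Line \ref{line:blockflowind2} uses at most $k$ single queries per pick at level $d_T-1$, and there are at most $|L_{d_T-1}^{\mathrm{init}}| \leq n$ such picks, contributing $O(kn)$. Summing yields $O(kn + p\log p)$.

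The main obstacle is invariant (b) above: once an index $i$ has been discarded from $I_{a_l}$, I must argue that no subsequent augmentation secretly introduces an unexamined edge from $a_l$ using matroid $i$. This is where the interaction between the \texttt{Update} procedure and the level structure matters, and it is the only place where the proof is not routine blocking-flow bookkeeping.
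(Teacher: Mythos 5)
Your query accounting coincides with the paper's: the failed \texttt{FindOutEdge} calls are charged to the pairs $(v,i)$ removed from the lists $I_v$ (at one query each, giving $O(kn)$), the successful calls are charged to the vertices of $\bigcup_{l\geq 2}L_l\subseteq S$ (at $O(\log p)$ each, giving $O(p\log p)$), and Lines \ref{line:blockflowind1} and \ref{line:blockflowind2} are handled exactly as in the paper. The correctness skeleton (levels from Lemma \ref{lem:getdistanceind}, Monotonicity Lemma for dead-end removals, Lemma \ref{lem:augmenting_path_monotonicity} for path vertices) also matches.

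The gap is in your invariant (b), and it sits exactly where you said the proof is ``not routine.'' Your justification --- that the only elements newly added to $S_i$ come from the augmenting path and are deleted from their levels, so $L_{l+1}\cap S_i$ only shrinks --- controls the \emph{candidate set} but not the \emph{exchange relation}. Whether $(a_l,u)$ is an edge is the condition $S_i+a_l-u\in\mathcal{I}_i$ (with $S_i+a_l\notin\mathcal{I}_i$), and this is relative to the entire current $S_i$; an augmentation that swaps one element of $S_i$ for another can change the fundamental circuit of $a_l$ in $S_i+a_l$ and thereby \emph{create} an exchange edge from $a_l$ to an element that was in $S_i\cap L_{l+1}$ all along and survives the swap. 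Concretely, take a graphic matroid on vertices $\{1,2,3\}$ with parallel edges $e_{12},e_{12}'$ and edges $e_{13},e_{23}$, and $S_i=\{e_{12},e_{13}\}$: the only exchange partner of $a_l=e_{12}'$ is $e_{12}$, so \texttt{FindOutEdge} on $B=\{e_{13}\}$ fails; after the swap $S_i\mapsto S_i+e_{23}-e_{12}$ the circuit of $e_{12}'$ becomes the triangle, and $(e_{12}',e_{13})$ is now an edge even though $e_{13}$ never left $S_i$ or $B$. So ``no new edge can appear for a discarded index $i$'' is false as stated, and invariant (b) is not established by your argument. The paper avoids asserting anything about individual edges: its proof works purely at the level of distances, showing via Lemma \ref{lem:monotonicity_matroid_partition} that every length-$d_T$ shortest path in any post-augmentation graph must have its $l$-th vertex in the original level set $D_l$, and that every vertex removed from $L_l$ (whether as a dead end, where its distance to $T$ exceeds $d_T-l$, or as a path vertex, where Lemma \ref{lem:augmenting_path_monotonicity} forces its distance from $s$ to increase) can never again serve as the $l$-th vertex of such a path. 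To repair your proof you would need to replace (b) by this distance-based argument (or by Cunningham's consistency lemma), rather than by monotonicity of the sets $L_{l+1}\cap S_i$.
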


Cunningham \cite{cunningham1986improved} showed the following lemma; see \cite[Section 5]{cunningham1986improved} for a proof.
We use this lemma for a proof of Lemma \ref{lem:blockflowind}.

\begin{lemma}[\cite{cunningham1986improved}]\label{lem:augmenting_path_monotonicity}
For all vertices $v \in V$ on a shortest augmenting path $P$, the distance from $s$ to $v$ must strictly increase after the augmentation along the path $P$.
\end{lemma}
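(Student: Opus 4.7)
The plan is to prove the claim by contradiction, using the Monotonicity Lemma (Lemma~\ref{lem:monotonicity_matroid_partition}) as a starting point and then producing a strictly shorter $(s, T)$-path in $G(\Ss)$ via a matroid-exchange/symmetric-difference argument, which contradicts that $P$ is a shortest augmenting path. Let $P = s, v_1, \ldots, v_{l-1}, t_j$, and write $d$ and $d'$ for the distances in $G(\Ss)$ and $G(S'_1, \ldots, S'_k)$, respectively. Lemma~\ref{lem:monotonicity_matroid_partition} already gives weak monotonicity $d'(s, v_i) \geq d(s, v_i) = i$ for every $i \in [l-1]$, so the task is to rule out equality.

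The case $i = 1$ is handled directly: the edge $(s, v_1)$ exists in $G(\Ss)$ because $v_1 \in V \setminus S$, but after the augmentation $v_1 \in S'_{\pi(v_2)} \subseteq S'$, so $(s, v_1)$ disappears from $G(S'_1, \ldots, S'_k)$, forcing $d'(s, v_1) \geq 2 > 1$. For $i \in \{2, \ldots, l-1\}$, suppose for contradiction that $d'(s, v_i) = i$, and let $i^*$ be the smallest such index. Pick a shortest $(s, v_{i^*})$-path $Q = s, u_1, \ldots, u_{i^*-1}, v_{i^*}$ in $G(S'_1, \ldots, S'_k)$. By the minimality of $i^*$ together with the case $i=1$, the predecessor $u_{i^*-1}$ of $v_{i^*}$ on $Q$ differs from the predecessor $v_{i^*-1}$ on $P$: otherwise the prefix $s, u_1, \ldots, v_{i^*-1}$ would give $d'(s, v_{i^*-1}) = i^*-1$, contradicting minimality.

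The main step is a ``pullback'' of $Q$ into $G(\Ss)$: every edge of $Q$ that is not already an edge of $G(\Ss)$ must be incident to some element $v_a \in \{v_2, \ldots, v_{l-1}\}$ whose containing part $S_{\pi(v_a)}$ was modified by the augmentation, and by applying the matroid exchange axiom to $\mathcal{I}_{\pi(v_a)}$ on $S_{\pi(v_a)}$ and $S'_{\pi(v_a)}$ one can substitute such an edge with a detour through $v_a$'s original position on $P$. Concatenating the pullback of $Q$ with the tail $v_{i^*}, v_{i^*+1}, \ldots, t_j$ of $P$ yields an $(s, t_j)$-walk in $G(\Ss)$ of length at most $l$; and because $u_{i^*-1} \neq v_{i^*-1}$, this walk enters $v_{i^*}$ via a different edge than $P$ does, so taking the symmetric difference with $P$ and invoking the exchange axiom once more extracts a genuine $(s, T)$-path in $G(\Ss)$ of length strictly less than $l$, contradicting that $P$ is shortest.

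The main obstacle is executing the pullback-and-shortcut step precisely: one must classify each edge of $Q$ by which modified element $v_a$ it depends on, bound the length inflation caused by substituting detours, and then run a careful symmetric-difference argument on $Q$ and $P$ to locate the chord that shortens the walk by at least one. This is the technical heart of the proof and is essentially a matroid-exchange calculation on $\mathcal{I}_{\pi(v_a)}$ carried out simultaneously for all modified elements $v_a$ along $P$.
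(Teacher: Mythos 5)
Your $i=1$ case is correct: $v_1\notin S$ gives $d(s,v_1)=1$, while after the augmentation $v_1\in S'$, so $d'(s,v_1)\geq 2$. For $i\geq 2$, however, you do not give a proof; you describe a plan whose central ``pullback-and-shortcut'' step you yourself label the ``main obstacle'' and ``technical heart'' and leave undone. That step would not go through as sketched. The symmetric-difference-and-chord trick works for bipartite matching and network flow because the symmetric difference of two paths decomposes into alternating paths and cycles, but the compressed exchange graph is not a flow network: the symmetric difference of two $(s,T)$-walks in $G(\Ss)$ need not contain any valid augmenting path, let alone a strictly shorter one, and ``invoking the exchange axiom once more'' does not produce one. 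The ``detours'' are equally unresolved: you assert the pulled-back walk has length at most $l$, yet each substitution that routes an edge of $Q$ through $v_a$'s original position on $P$ a priori lengthens the walk, and you do not bound the inflation. Starting from Lemma~\ref{lem:monotonicity_matroid_partition} and trying to upgrade $\geq$ to $>$ by exchange-graph surgery essentially amounts to re-proving a sharpened monotonicity lemma from scratch; this is the part of the argument that needs care, and it is exactly the part missing.

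The actual proof (following Cunningham, and sketched in the paper's commented source) avoids any surgery by lifting to the matroid-intersection exchange graph $G(\hat{S})$ over $\hat{V}=V\times[k]$ from Section~\ref{subsec:compressed_exchange_graph}. The key observation is that the augmentation reindexes $v_i$: $v_i$ leaves $S_{\pi^{\rm old}(v_i)}$ and replaces $v_{i+1}$ in $S'_{\pi^{\rm old}(v_{i+1})}$, so $\pi^{\rm new}(v_i)=\pi^{\rm old}(v_{i+1})$. In $G(\hat{S})$ the vertex $(v_i,\pi^{\rm old}(v_{i+1}))$ lies on the lifted shortest path at distance $2i-1$, one more than $(v_i,\pi^{\rm old}(v_i))$; by Lemma~\ref{lem:monotonicity_matroid_intersection} its distance does not decrease, and after the augmentation it belongs to $\hat{S}^{\rm new}$, hence has even distance from $s$, so its new distance is at least $2i$. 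Translating back via $d_{G(\Ss)}(s,v)=1+\tfrac12 d_{G(\hat{S})}(s,(v,\pi(v)))$ yields $d'(s,v_i)\geq 1+i>i$. The strict increase comes entirely from the forced reindexing and the parity structure of the lifted graph, with no path-cutting at all.
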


\begin{proof}[Proof of Lemma \ref{lem:blockflowind}]
We denote by $D_i$ the set of vertices $v \in V$ such that the distance from $s$ to $v$ is $i$ in the compressed exchange graph $G(\Ss)$,
where $(\Ss)$ is the input partition of $S$.
Suppose we obtain a partition $(S_1^{*}, \dots, S_k^{*})$ of $S^*$ by augmenting a partition $(\Ss)$ of $S$ along shortest augmenting paths of length $d_T$.
By Lemma \ref{lem:monotonicity_matroid_partition},
for any shortest path $s, v_1, v_2 \dots, v_{d_T - 1}, t$ of length $d_T$ in the compressed exchange graph $G(S_1^{*}, \dots, S_k^{*})$, 
we have $v_i \in D_i$ for all $i \in [d_T - 1]$.

Now we analyze the number of independence oracle queries used in the procedure \texttt{BlockFlowIndependence}.
By Lemma \ref{lem:getdistanceind}, the number of independence oracle queries used in Line \ref{line:blockflowind1} is \tatsuya{$O(k n + p \log p)$}.
The number of independence oracle queries used in Line \ref{line:blockflowind2} is $O(k |D_{d_T - 1}|)$.
In Line \ref{line:blockflowind3},
the number of \texttt{FindOutEdge} calls that output $\emptyset$ is $O(k \sum_{i = 1}^{d_T - 2} |D_i|)$ 
and the number of \texttt{FindOutEdge} calls that do not output $\emptyset$ is $O(p)$.
Thus, by Lemma \ref{lem:binary_search_matroid_partition_ind}, the number of independence oracle queries used in Line~\ref{line:blockflowind3} is \tatsuya{$O(k n + p \log p)$}.

The procedure \texttt{BlockFlowIndependence} simply repeats to find a shortest path from $s$ to $T$ through the set $L_i$ and to augment the partitionable set and to
remove those vertices from $L_i$ until there is no such path.
Hence, all paths used in \texttt{Update} are shortest augmenting paths.
Then, by Lemma \ref{lem:augmenting_path_matroid_partition_2}, we have $S'_i \in \mathcal{I}_i$ for all $i \in [k]$.
Furthermore, 
there must exist at least one augmenting path of length $d_T$ if $d_T \neq \infty$.
Then, we have $|S'| > |S|$.



Next we show that $d_{G(S_1', \dots, S_k')}(s, T) \geq d_{G(S_1, \dots, S_k)}(s, T) + 1$.
To this end, we prove that any element removed from $L_i$ can no longer be on any shortest augmenting path of length $d_T$.
When a element $a_i \in L_i$ has no outgoing edge into $L_{i + 1}$, the distance from $a_i \in L_i$ to $T$ is greater than $d_T - i$.
Then, by Lemma \ref{lem:monotonicity_matroid_partition}, such $a_i$ can no longer be on any augmenting path of length $d_T$.
Furthermore, 
by Lemma \ref{lem:augmenting_path_monotonicity}, for the vertices $v$ on the augmenting path, the distances from $s$ to $v$ must strictly increase after the augmentation.
Then Lemma \ref{lem:monotonicity_matroid_partition} implies that the element on an augmenting path can not be on any augmenting path of length $d_T$ anymore, which completes the proof.
\end{proof}

Now we provide a proof of Theorem~\ref{thm:matroid_partition_ind_oracle}
by using Lemma \ref{lem:total_bound_matroid_partition}.
In our matroid partition algorithm,
we simply apply \texttt{BlockFlowIndependence} repeatedly until no $(s, T)$-path can be found.

\begin{proof}[Proof of Theorem~\ref{thm:matroid_partition_ind_oracle}]
In our algorithm, we start with $S = \emptyset$ and initialize $S_i = \emptyset$ for all $i \in [k]$.
Then we apply \texttt{BlockFlowIndependence} repeatedly to augment the current partition $(\Ss)$ of $S$
until no $(s, T)$-path can be found in the compressed exchange graph $G(S_1, \dots, S_k)$.

Since each execution of \texttt{BlockFlowIndependence} strictly increases $d_{G(\Ss)}(s, T)$,  we have $d_{G(\Ss)}(s, T) = \Omega(\sqrt{p})$ after $O(\sqrt{p})$ executions of \texttt{BlockFlowIndependence}.
Lemma \ref{lem:total_bound_matroid_partition} implies that,
if $d_{G(\Ss)}(s, T) = \Omega(\sqrt{p})$, then $|S| \geq p - O(\sqrt{p})$.
Then the total number of \texttt{BlockFlowIndependence} executions is $O(\sqrt{p}) + O(\sqrt{p}) = O(\sqrt{p})$ in the entire matroid partition algorithm.
Lemma \ref{lem:blockflowind} implies that one execution of \texttt{BlockFlowIndependence} uses \tatsuya{$O(k n + p\log p)$} independence oracle queries,
which completes the proof.
\end{proof}

In the same way as the matroid intersection algorithm of \cite{chakrabarty2019faster} in the rank oracle model,
we easily obtain the following theorem.

\begin{theorem} \label{thm:matroid_partition_ind_oracle_approx}
For any $\epsilon > 0$,
there is an algorithm that uses \tatsuya{$O\left((k n + p \log p) \epsilon^{-1}\right)$} independence oracle queries and finds a $(1 - \epsilon)$ approximation of the largest partitionable set of $k$ matroids.
\end{theorem}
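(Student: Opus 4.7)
The plan is to truncate the exact algorithm of Theorem~\ref{thm:matroid_partition_ind_oracle} after only $O(1/\epsilon)$ phases instead of running it all the way to optimality. Concretely, I would start with $S = \emptyset$ and $S_i = \emptyset$ for all $i \in [k]$, and then invoke \texttt{BlockFlowIndependence} repeatedly on the current partition $(S_1,\dots,S_k)$ of $S$. I would halt as soon as either the procedure fails to augment (in which case $S$ is already optimal, and we are done), or a counter of $\lceil c/\epsilon \rceil$ successful phases has been reached, where $c$ is the absolute constant hidden inside the $O(1)$ of Lemma~\ref{lem:total_bound_matroid_partition}.

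To bound the quality of the output, I would combine Lemma~\ref{lem:blockflowind} with Lemma~\ref{lem:total_bound_matroid_partition} exactly as in the proof of Theorem~\ref{thm:matroid_partition_ind_oracle}. Every successful call to \texttt{BlockFlowIndependence} strictly increases $d_{G(S_1,\dots,S_k)}(s,T)$ by at least one, so after $\lceil c/\epsilon \rceil$ successful calls we have $d_{G(S_1,\dots,S_k)}(s,T) \geq c/\epsilon$. Lemma~\ref{lem:total_bound_matroid_partition} then gives $|S| \geq (1 - O(1)/(c/\epsilon)) \cdot p \geq (1 - \epsilon)\, p$, so the final $S$ is a $(1-\epsilon)$-approximation of the largest partitionable set. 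If instead the procedure halts earlier because no $(s,T)$-path exists, then by Lemma~\ref{lem:augmenting_path_matroid_partition_2} the set $S$ is already of maximum size and the approximation guarantee holds trivially.

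For the query complexity, the algorithm performs at most $O(1/\epsilon)$ executions of \texttt{BlockFlowIndependence}, each costing $O(kn + p \log p)$ independence oracle queries by Lemma~\ref{lem:blockflowind}, so the total is $O((kn + p\log p)\epsilon^{-1})$, as claimed. The only thing to verify with any care is that the constant $c$ appearing in the stopping rule must be chosen in terms of the absolute constant hidden in Lemma~\ref{lem:total_bound_matroid_partition}, and that \texttt{BlockFlowIndependence} indeed guarantees a strict increase of $d(s,T)$ whenever augmentation succeeds — both points are already established in Section~\ref{subsec:blockingflow_ind}, so there is no genuine obstacle beyond bookkeeping.
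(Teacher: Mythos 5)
Your proposal is correct and follows essentially the same route as the paper: truncate the exact algorithm after $O(\epsilon^{-1})$ calls to \texttt{BlockFlowIndependence}, use the strict increase of $d(s,T)$ per phase together with Lemma~\ref{lem:total_bound_matroid_partition} to get the $(1-\epsilon)$ guarantee, and multiply the per-phase cost from Lemma~\ref{lem:blockflowind} by the number of phases. Your explicit handling of the hidden constant $c$ is slightly more careful than the paper's statement but changes nothing substantive.
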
 

\begin{proof}
Similarly to the proof of Theorem~\ref{thm:matroid_partition_ind_oracle}, we start with $S = \emptyset$ and initialize $S_i = \emptyset$ for all $i \in [k]$ and 
apply \texttt{BlockFlowIndependence} repeatedly to augment the current partition $(\Ss)$ of $S$.
The only difference is that we apply \texttt{BlockFlowIndependence} only $\epsilon^{-1}$ times,
which uses \tatsuya{$O\left((k n + p \log p) \epsilon^{-1} \right)$} independence oracle queries.

Each execution of \texttt{BlockFlowIndependence} strictly increases $d_{G(\Ss)}(s, T)$.
Thus, 
after $\epsilon^{-1}$ executions of \texttt{BlockFlowIndependence}, we have $d_{G(\Ss)}(s, T) = \Omega(\epsilon^{-1})$.
Lemma \ref{lem:total_bound_matroid_partition} implies that,
if $d_{G(\Ss)}(s, T) = \Omega(\epsilon^{-1})$, then $|S| \geq p - O(p \epsilon)$, 
which completes the proof.
\end{proof}

In contrast to our implementation provided in Algorithm \ref{alg:blockflowind},
Cunningham \cite{cunningham1986improved} presented another implementation of the procedure \texttt{BlockFlowIndependence}, which does not use the binary search technique.
Recall that, given a partition $(S_1, \dots , S_k)$ of $S$ $(\subseteq V)$ such that $S_i \in \mathcal{I}_i$ for all $i\in [k]$, the procedure \texttt{BlockFlowIndependence} outputs 
a partition $(S'_1, \dots , S'_k)$ of $S'$ $(\subseteq V)$ such that $S'_i \in \mathcal{I}_i$ for all $i \in [k]$, $|S'| > |S|$, and $d_{G(S_1', \dots, S_k')}(s, T) \geq d_{G(S_1, \dots, S_k)}(s, T) + 1$, or a partition $(S_1, \dots , S_k)$ of $S$ if no such $S'$ exists.


We now evaluate the query complexity of Cunningham's implementation.
Since compressed exchange graph has $O(n p)$ edges that are not adjacent to $T$, in a single execution of \texttt{BlockFlowIndependence}, $O(n p)$ independence oracle queries are required for such edges.
The edges adjacent to $T$ are efficiently maintained during the algorithm.
Indeed, we can initially compute such edges with $O(k n)$ independence oracle queries, and update them using $O(n)$ queries after each augmentation.
Thus, we obtain the following remark.

\begin{remark}[Blocking flow algorithm of Cunningham {\cite[Theorem 5.1]{cunningham1986improved}}] \label{rem:cunningham_blocking_flow}
There is an algorithm that performs $d$ executions of the procedure \textup{\texttt{BlockFlowIndependence}} and uses $O(n p d + k n)$ independence oracle queries.
\end{remark}


By Remark~\ref{rem:cunningham_blocking_flow}, Cunningham obtained a matroid partition algorithm that uses $O(n p^{3/2} + k n)$ independence oracle queries, which is better than the algorithm given in Theorem~\ref{thm:matroid_partition_ind_oracle} when $k = \omega(p)$.
Recall that the total number of \texttt{BlockFlowIndependence} executions is $O(\sqrt{p})$ in the entire matroid partition algorithm.
As we will see later in Section \ref{sec:faster_algorithm_for_large_k}, we present an algorithm that is strictly better than Cunningham's algorithm even when $k$ is large.

\subsection{Blocking Flow Algorithm using Rank Oracle} \label{subsec:blockingflow_rank}

In this subsection, we present our matroid partition algorithm using the blocking flow approach in the rank oracle model.
Our objective is to show the following theorem, which implies Theorem \ref{matroid_partition_rank_oracle}.

\begin{theorem} \label{thm:matroid_partition_rank_oracle}
There is an algorithm that uses $O((n + k) \sqrt{p} \log n)$ rank oracle queries and solves the matroid partitioning problem.
\end{theorem}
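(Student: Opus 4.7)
The plan is to mirror the blocking-flow algorithm of Section \ref{subsec:blockingflow_ind} in the rank oracle model, replacing \texttt{FindOutEdge} with \texttt{FindInEdge} (Lemma \ref{lem:binary_search_matroid_partition_rank}) as the basic edge-search primitive. Concretely, I would first design a rank-oracle BFS subroutine \texttt{GetDistanceRank} in the spirit of Algorithm \ref{alg:getdistanceind}: at each layer $l$ and for every still-undetermined sink $x \in S \cup T$, I would test whether $x$ enters the next layer by calling \texttt{FindInEdge}$(\M_{\pi(x)}, S_{\pi(x)}, x, L_l \cap (V \setminus S_{\pi(x)}))$ when $x = u \in S$, or \texttt{FindInEdge}$(\M_i, S_i, \emptyset, L_l \cap (V \setminus S_i))$ when $x = t_i$; each such call costs $O(\log n)$ rank queries by Lemma \ref{lem:binary_search_matroid_partition_rank}, with no asymmetry between the ``successful'' and ``unsuccessful'' cases.

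Next I would adapt \texttt{BlockFlowIndependence} (Algorithm \ref{alg:blockflowind}) into \texttt{BlockFlowRank} by plugging in \texttt{GetDistanceRank} for the distance computation and otherwise leaving the DFS for augmenting paths essentially unchanged, since \texttt{FindOutEdge} remains available in the rank oracle model. Correctness of a single phase follows verbatim from the independence-oracle case via Lemmas \ref{lem:augmenting_path_matroid_partition_2}, \ref{lem:augmenting_path_matroid_partition}, and the Monotonicity Lemma \ref{lem:monotonicity_matroid_partition}, and the outer loop repeats until no $(s,T)$-path exists. By Lemma \ref{lem:total_bound_matroid_partition} combined with the strict increase of $d_{G(\Ss)}(s,T)$ per invocation (Lemma \ref{lem:augmenting_path_monotonicity}), the number of phases is $O(\sqrt p)$, so it suffices to bound one phase by $O((n+k)\log n)$ rank queries.

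The main obstacle is precisely this per-phase bound. The augmentation-side cost follows from the standard argument: the vertex-disjointness of the shortest augmenting paths within a phase, together with the removal of exhausted vertices from the layer sets $L_i$, charges every \texttt{FindOutEdge} call either to a vertex that never reappears or to an edge on a completed augmenting path, giving $O((n+k)\log n)$ rank queries. The BFS side is more delicate, because the naive layer-by-layer scheme would spend $O((n+k)L\log n)$ per call with $L = d_T$, which can be $\Theta(\sqrt p)$, and would only yield a global $\tilde O((n+k)p)$ bound. My plan to close the gap is to process sinks in order of increasing current distance and to invoke each \texttt{FindInEdge} on a search set ordered by the already-assigned distance labels, so that the internal binary search of Lemma \ref{lem:binary_search_matroid_partition_rank} directly returns a predecessor of minimum distance; each sink is then charged essentially one successful \texttt{FindInEdge} call, plus at most one unsuccessful one to certify unreachability, yielding $O((n+k)\log n)$ rank queries per BFS and hence $O((n+k)\sqrt p\,\log n)$ rank queries overall.
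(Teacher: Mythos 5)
There is a genuine gap, and it lies exactly where you flagged the difficulty. Your plan keeps the BFS (and the path search) oriented \emph{forward}, from $s$ toward $T$, and tries to repair the $O((n+k)\,d_T\log n)$ per-phase cost by ordering each \texttt{FindInEdge} search set by distance labels so that the binary search returns a minimum-distance predecessor. But in a forward BFS those distance labels are precisely what you are computing: when you want to settle a vertex $u\in S_i$, the distances of its candidate predecessors in $S\setminus S_i$ are not yet known, so you cannot present \texttt{FindInEdge} with a distance-ordered set without already having run the BFS. The scheme is circular, and falling back to one call per (layer, undetermined sink) pair puts you back at $\tilde O((n+k)p)$ overall. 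The paper's resolution is different and is the key idea you are missing: it reverses the direction of traversal. \texttt{GetDistanceRank} computes distances \emph{to} $T$, starting the BFS at the sinks $t_1,\dots,t_k$ and repeatedly calling \texttt{FindInEdge}$(\M_{\pi(u)},S_{\pi(u)},u,A)$ on the shrinking set $A$ of undiscovered vertices; every successful call permanently removes a vertex from $A$ (at most $n$ such calls) and every processed head $u$ incurs exactly one final unsuccessful call (at most $n+k$ of those), giving $O((n+k)\log n)$ rank queries with no dependence on the number of layers. This works because \texttt{FindInEdge} fixes the \emph{head} of an edge and searches over tails, which matches backward traversal.

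The same issue infects your augmentation step. Keeping the forward DFS with \texttt{FindOutEdge} does not give $O((n+k)\log n)$ per phase: a vertex $a_l$ can generate a failed \texttt{FindOutEdge} call for each of the $k$ indices $i$ before being discarded from $L_l$, and the last layer tests $S_i+a_l\in\mathcal I_i$ for all $i\in[k]$, so the forward blocking-flow step inherently costs $\Theta(kn)$ queries per phase (this is exactly why the independence-oracle version is $O(kn+p\log p)$ per phase). The paper's \texttt{BlockFlowRank} therefore also walks augmenting paths backwards, from some $t_j$ toward $s$, issuing a single \texttt{FindInEdge} per visited vertex per step; failed calls are charged to removing a vertex from its layer or an index from the surviving sink set $I$, yielding $O(n+k+p)$ calls per phase. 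With both subroutines reversed, the $O(\sqrt p)$ phase bound you correctly invoke (via Lemmas \ref{lem:total_bound_matroid_partition} and \ref{lem:monotonicity_matroid_partition}) completes the proof.
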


Note that we can check, for each vertex $u \in S$, whether there exists an edge entering the vertex $u$
by one call of the binary search procedure \texttt{FindInEdge} in the rank oracle model.
This yields that the matroid partitioning problem can be solved by using fewer queries under the rank oracle model than under the independence oracle model.

The rank query complexity of the algorithm given in Theorem~\ref{thm:matroid_partition_rank_oracle} is $\tilde{O}(n^{3/2})$ when $k \leq n$.
As we will see later, the independence query complexity of our matroid partition algorithm given in Theorem \ref{thm:faster_matroid_partition_ind_oracle_2} is $\tO(n^{7/3})$, which is currently the best query complexity under the independence oracle model.

The argument is similar to the algorithm given in Theorem \ref{thm:matroid_partition_ind_oracle}.
However, unlike the algorithm in the independence oracle model, the algorithm in the rank oracle model performs
a breadth first search and an augmentation procedure by the blocking flow approach
by traversing edges in reverse.

We first provide the procedure \texttt{GetDistanceRank} (Algorithm \ref{alg:getdistancerank}) that efficiently finds distances from every vertex to $T$ in the compressed exchange graph.
This algorithm simply runs breadth first search by using the procedure \texttt{FindInEdge}.

\begin{lemma}[Breadth First Search using Rank Oracle]\label{lem:getdistancerank}

Given a partition $(S_1, \dots , S_k)$ of $S$ $(\subseteq V)$ such that $S_i \in \mathcal{I}_i$ for all $i\in [k]$, 
the procedure $\textup{\texttt{GetDistanceRank}}$ (Algorithm \ref{alg:getdistancerank}) outputs $d \in \mathbb{R}^{V \cup \{s\} \cup T}$ such that, for $v \in V \cup \{s\} \cup T$, 
$d_v$ is the distance from $v$ to $T$ in the compressed exchange graph $G(S_1, \dots , S_k)$.
The procedure $\textup{\texttt{GetDistanceRank}}$ uses $O((n + k) \log n)$ rank queries.
\end{lemma}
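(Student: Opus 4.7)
The plan is to design \texttt{GetDistanceRank} as a breadth-first search in the reverse of the compressed exchange graph $G(S_1,\dots,S_k)$, rooted at the set $T$ of sinks, using the procedure \texttt{FindInEdge} to enumerate in-edges one at a time. I would initialize $d_{t_i}\gets 0$ for $i\in[k]$, $d_v\gets\infty$ for $v\in V\cup\{s\}$, and a FIFO queue containing $t_1,\dots,t_k$. When I dequeue $t_j$, I repeatedly call \texttt{FindInEdge}$(\M_j,S_j,\emptyset,A)$ with $A$ equal to the currently unvisited elements of $V\setminus S_j$, each successful call returning an unvisited $v$ with $S_j+v\in\mathcal{I}_j$; I set $d_v\gets 1$ and enqueue $v$, stopping after the first failing call. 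When I dequeue $u\in S_i$ (for the unique $i$ with $u\in S_i$), I similarly loop \texttt{FindInEdge}$(\M_i,S_i,u,A)$ with $A$ the unvisited elements of $V\setminus S_i$, setting $d_v\gets d_u+1$ and enqueueing each returned $v$. Vertices $u\in V\setminus S$ have no in-edges in $E'\cup E_t$, so no queries are needed when they are dequeued. Finally I set $d_s\gets 1+\min_{v\in V\setminus S}d_v$ with no extra queries.

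For correctness, the usual BFS invariant applies once I verify that each pair returned by \texttt{FindInEdge} is genuinely an edge of $G(S_1,\dots,S_k)$. The subtle point is that when $u\in S_i$ is dequeued, \texttt{FindInEdge}$(\M_i,S_i,u,A)$ returns some $v\in A$ with $S_i+v-u\in\mathcal{I}_i$, but membership in $E'$ additionally requires $S_i+v\notin\mathcal{I}_i$. I would argue that any $v\in V\setminus S_i$ with $S_i+v\in\mathcal{I}_i$ is a direct predecessor of $t_i$ via $E_t$, hence is placed into layer $1$ during the first pass of the BFS and removed from $A$ before any $u\in S_i$ is ever processed. Therefore the returned $v$ automatically satisfies $S_i+v\notin\mathcal{I}_i$, and the claimed edge $(v,u)\in E'$ really exists. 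Combined with a standard inductive argument over BFS layers, this shows $d_v$ equals the distance from $v$ to $T$ in the compressed exchange graph, and the formula for $d_s$ follows from $E_s=\{(s,v):v\in V\setminus S\}$.

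For the query complexity, I split the \texttt{FindInEdge} calls into successful and failing ones. Each successful call discovers a new element of $V$, so there are at most $n$ such calls overall. Each failing call can be charged to the vertex being dequeued, and the only dequeued vertices that issue calls are the $k$ elements of $T$ and the elements of $S$, yielding at most $|S|+k\le p+k$ failing calls. Since Lemma~\ref{lem:binary_search_matroid_partition_rank} bounds every call by $O(\log n)$ rank queries, the total cost is $O((n+k)\log n)$, matching the claim.

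The main obstacle I foresee is the edge-validity subtlety noted above: it is essential that the reverse BFS process all $T$-predecessors \emph{strictly} before performing any $S_i$-predecessor lookup, since otherwise \texttt{FindInEdge} may report pairs $(v,u)$ that are not edges of $E'$ and the resulting distance labels could be wrong. As long as BFS is implemented with a FIFO queue that respects layer order—which my description does—this issue is handled cleanly and the query accounting above is unaffected.
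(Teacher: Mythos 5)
Your proposal is correct and follows essentially the same route as the paper: a reverse BFS from the sinks $T$ driven by \texttt{FindInEdge}, with the same charging scheme (at most $n$ successful calls, at most $|S|+k$ failing calls, each costing $O(\log n)$ rank queries). You additionally spell out why pairs returned during the $S_i$-phase genuinely lie in $E'$ (the $S_i+v\notin\mathcal{I}_i$ condition holds because all $E_t$-predecessors are removed from $A$ in the first pass), a point the paper's proof leaves implicit under ``simply performs a breadth first search.''
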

 

\begin{algorithm}[H]
    \KwInput{a partition $(S_1, \dots , S_k)$ of $S$ $(\subseteq V)$ such that $S_i \in \mathcal{I}_i$ for all $i \in [k]$}
    \KwOutput{$d \in \mathbb{R}^{V \cup \{s\} \cup T}$ such that for $v \in V \cup \{s\} \cup T$, $d_v$ is the distance from $v$ to $T$ in $G(S_1, \dots , S_k)$}
    For $v \in S$ denote by $\pi(v)$ the index such that $v \in S_{\pi(v)}$ \\
    For all $v \in V \cup \{ s\}$ let $d_v \gets \infty$ \\
    $Q \gets \emptyset$ \tcp{$Q :$ queue} 
    $A \gets V$ \\
    \For{$i \in [k]$} {
        $d_{t_i} \gets 0$ \\
        \While{$u$ = \textup{\texttt{FindInEdge}}($M_{i}, S_{i}, \emptyset, A \setminus S_{i}$) satisfies $u \neq \emptyset$} { \label{line:getdistancerank_1}
            $d_u \gets 1$ \\
            $Q \gets Q + u$ \\
            $A \gets A - u$ \\
        }
    }
    \While{$Q \neq \emptyset$}{
        Let $v$ be the element added to $Q$ earliest \\
        $Q \gets Q - v$ \\
        \If{$v \notin S$} {
            \If{$d_s = \infty$} {
                $d_s \gets d_v + 1$ \\
            }
        } \Else {
            \While{$u$ = \textup{\texttt{FindInEdge}}($\M_{\pi(v)}, S_{\pi(v)}, v, A \setminus S_{\pi(v)}$) satisfies $u \neq \emptyset$} { \label{line:getdistancerank_2}
                $d_u \gets d_v + 1$ \\
                $Q \gets Q + u$ \\
                $A \gets A - u$ \\
            }
        }
    }
    \Return {$d$}
    \caption{\texttt{GetDistanceRank}}\label{alg:getdistancerank}
\end{algorithm}

\begin{proof}
The procedure \texttt{GetDistanceRank} simply performs a breadth first search in the compressed exchange graph $G(S_1, \dots , S_k)$.
Thus, the procedure \texttt{GetDistanceRank} correctly computes distances to $T$ in $G(S_1, \dots , S_k)$.
Note that each vertex $v \in V$ is added to $Q$ at most once and removed from $A$ at most once.
Hence, 
the number of \texttt{FindInEdge} calls that do not output $\emptyset$ is $O(n)$, and the number of \texttt{FindInEdge} calls that output $\emptyset$ is $O(n + k)$.
Thus, by Lemma \ref{lem:binary_search_matroid_partition_rank}, the number of rank oracle queries used in Lines \ref{line:getdistancerank_1} and \ref{line:getdistancerank_2} is $O((n + k) \log n)$, which completes the proof.
\end{proof}

Next, in a similar way to the procedure \texttt{BlockFlowIndependence} in the independence oracle model,
we provide our augmentation subroutine for a fast matroid partition algorithm in the rank oracle model.
The implementation is described as \texttt{BlockFlowRank} (Algorithm \ref{alg:blockflowrank}).
In the procedure \texttt{BlockFlowRank}, we 
denote by $I$ the set of indices $j$ such that the sink vertex $t_j \in T$ is not removed.

\begin{algorithm}[H]
    \KwInput{a partition $(S_1, \dots , S_k)$ of $S$ $(\subseteq V)$ such that $S_i \in \mathcal{I}_i$ for all $i \in [k]$}
    \KwOutput{a partition $(S'_1, \dots , S'_k)$ of $S'$ $(\subseteq V)$ such that $S'_i \in \mathcal{I}_i$ for all $i \in [k]$, $|S'| > |S|$, and $d_{G(S_1', \dots, S_k')}(s, T) \geq d_{G(S_1, \dots, S_k)}(s, T) + 1$, or a partition $(S_1, \dots , S_k)$ of $S$ if no such $S'$ exists.}
    $d \gets \text{\texttt{GetDistanceRank}}(S_1, \dots , S_k)$ \label{line:blockflowrank1} \\
    \lIf{$d_s = \infty$} {
        \Return $(S_1, \dots , S_k)$
    }
    $I \gets [k]$ \\
    For all $i \in [d_s - 1]$ let $L_i \gets \{ v \in V \mid d_v = i \}$ \\
    For $v \in S$ denote by $\pi(v)$ the index such that $v \in S_{\pi(v)}$ \\
    \While{$L_i \neq \emptyset$ for all $i \in [d_s - 1]$ and $I \neq \emptyset$} {
        $l \gets 0$ , $a_0 \gets \emptyset$ , $a_{d_s - 1} \gets \emptyset$ \\
        \While{$l < d_{s} - 1$} {
            $a_{l + 1} \gets \emptyset$ \\
            \If{$l = 0$} {
                \While{$I \neq \emptyset$} {
                    Pick arbitrary $j \in I$ \\
                    $a_1 \gets $ \texttt{FindInEdge}($\M_j, S_j, \emptyset, L_1 \setminus S_j$) \label{line:blockflowrank2} \\
                    \If{$a_1 \neq \emptyset$} {
                        $a_0 \gets t_j$ \\
                        \Break
                    } \Else {
                        $I \gets I - j$ \\
                    }
                }
                \lIf{$a_1 = \emptyset$} {
                    \Break
                }
            } \Else {
                $a_{l + 1} \gets $ \texttt{FindInEdge}($\M_{\pi(a_l)}, S_{\pi(a_l)}, a_l, L_{l + 1} \setminus S_{\pi(a_l)}$) \label{line:blockflowrank3} \\
            }

            \If{$a_{l + 1} = \emptyset$} {
                $L_l \gets L_l - a_l$ \\
                \lIf{$L_l = \emptyset$} {
                    \Break
                }
                $l \gets l - 1$ \\
            } \Else {
                $l \gets l + 1$ \\
            }
        }

        \If{$a_{d_s - 1} \neq \emptyset$} {
            Denote by $P$ the path $s, a_{d_s - 1}, \ldots , a_{1}, a_{0}$ \\
            $(S_1, \dots , S_k) \gets \text{\texttt{Update}}((S_1, \dots , S_k), P)$ \\
            For all $i \in [d_s - 1]$ let $L_i \gets L_i - a_i$ \\
        }
    }
    \Return $(S_1, \dots , S_k)$
    \caption{\texttt{BlockFlowRank}}\label{alg:blockflowrank}
\end{algorithm}


\begin{lemma}[Blocking Flow using Rank Oracle] \label{lem:blockflowrank}
Given a partition $(S_1, \dots , S_k)$ of $S$ $(\subseteq V)$ such that $S_i \in \mathcal{I}_i$ for all $i\in [k]$, 
the procedure \textup{\texttt{BlockFlowRank}} (Algorithm \ref{alg:blockflowrank}) outputs 
a partition $(S'_1, \dots , S'_k)$ of $S'$ $(\subseteq V)$ such that $S'_i \in \mathcal{I}_i$ for all $i \in [k]$, $|S'| > |S|$, and $d_{G(S_1', \dots, S_k')}(s, T) \geq d_{G(S_1, \dots, S_k)}(s, T) + 1$, or a partition $(S_1, \dots , S_k)$ of $S$ if no such $S'$ exists.
The procedure \textup{\texttt{BlockFlowRank}} uses $O((n + k) \log n)$ rank oracle queries.
\end{lemma}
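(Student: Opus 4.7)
The plan is to follow the structure of the proof of Lemma \ref{lem:blockflowind}, adapted to the rank-oracle DFS that traverses the compressed exchange graph in reverse. The key algorithmic advantage is that, to explore in-edges of a vertex $a_l$ with $l \geq 1$, the procedure invokes \texttt{FindInEdge} only once in the single matroid $\M_{\pi(a_l)}$ rather than once per matroid index, and at the sink level $l = 0$ it invokes \texttt{FindInEdge} at most once per remaining index $j \in I$.

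For correctness, the layering $L_1, \ldots, L_{d_s - 1}$ produced by \texttt{GetDistanceRank} (Lemma \ref{lem:getdistancerank}) ensures that every path $s, a_{d_s - 1}, \ldots, a_1, t_j$ passed to \texttt{Update} is a shortest $(s, T)$-path of length $d_s$ in the current compressed exchange graph. Lemma \ref{lem:augmenting_path_matroid_partition_2} (applied inductively to each call of \texttt{Update}) then guarantees that the updated family $(S_1, \ldots, S_k)$ is a valid partition of a partitionable set of strictly larger size, so $|S'| > |S|$ whenever $d_s \neq \infty$. For the monotonicity claim $d_{G(S'_1, \ldots, S'_k)}(s, T) \geq d_{G(S_1, \ldots, S_k)}(s, T) + 1$, I will show that the outer while loop terminates only after every vertex that could still lie on a length-$d_s$ path has been removed from its layer. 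A failed \texttt{FindInEdge} at $a_l$ means $a_l$ has no in-edge from $L_{l+1}$, so its distance to $T$ has grown and by Lemma \ref{lem:monotonicity_matroid_partition} it cannot lie on any length-$d_s$ path in the final graph. Vertices removed by an augmentation are handled by Lemma \ref{lem:augmenting_path_monotonicity} combined with Lemma \ref{lem:monotonicity_matroid_partition}, exactly as in the proof of Lemma \ref{lem:blockflowind}.

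The main obstacle, and the reason the rank-oracle bound improves, is the query accounting. The initial \texttt{GetDistanceRank} call costs $O((n+k)\log n)$ by Lemma \ref{lem:getdistancerank}. For the DFS phase I will classify every \texttt{FindInEdge} invocation as failing or successful. Every failing call removes one element permanently: a vertex from $L_l$ at level $l \geq 1$, or an index from $I$ at level $l = 0$; hence the total number of failed calls is at most $\sum_l |L_l| + |I| = O(n + k)$. For successful calls I will use a standard blocking-flow charging argument: each up-move from level $l$ to $l+1$ is either productive (contributing to an eventually completed augmenting path) or wasted (the DFS later backtracks from $l+1$ down to $l$). Wasted up-moves can be injectively charged to the failure at level $l+1$ responsible for the decrement, so their count is at most the total failures at levels $\geq 1$, which is $O(n)$. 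If $P$ augmentations occur, each contributes $d_s - 1$ productive up-moves and consumes one vertex from each of $L_1, \ldots, L_{d_s - 1}$; the latter yields $(d_s - 1)P \leq n$, so the productive up-moves also total $O(n)$. Combining, the DFS phase makes $O(n + k)$ \texttt{FindInEdge} calls, each costing $O(\log n)$ rank queries by Lemma \ref{lem:binary_search_matroid_partition_rank}, which gives the overall $O((n + k) \log n)$ bound.
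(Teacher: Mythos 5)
Your proposal is correct and takes essentially the same route as the paper: the same correctness and distance-monotonicity argument via Lemmas \ref{lem:monotonicity_matroid_partition} and \ref{lem:augmenting_path_monotonicity}, and the same query accounting in which failed \texttt{FindInEdge} calls are charged to permanent removals from the layers $L_l$ and from $I$ (total $O(n+k)$) while successful calls are bounded by $O(n)$ through a blocking-flow charging argument. One minor slip: a failed call at $a_l$ witnesses that the distance from $s$ to $a_l$ (not from $a_l$ to $T$) has grown beyond $d_s - l$, but the conclusion you draw and the lemma you invoke are the correct ones.
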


\begin{proof}
We denote by $D_i$ the set of vertices $v \in V$ such that the distance from $v$ to sink vertices $T$ is $i$ in the compressed exchange graph $G(\Ss)$,
where partition $(\Ss)$ is the input partition of $S$.
We first analyze the number of rank oracle queries used in the procedure \texttt{BlockFlowRank}.
By Lemma \ref{lem:getdistancerank}, the number of rank oracle queries used in Line \ref{line:blockflowrank1} is $O((n + k) \log n)$.
In Line \ref{line:blockflowrank2},
the number of \texttt{FindInEdge} calls that output $\emptyset$ is $O(k)$, and
the number of \texttt{FindInEdge} calls that do not output $\emptyset$ is $O(|D_1|)$.
Furthermore,
in Line \ref{line:blockflowrank3},
the number of \texttt{FindInEdge} calls that output $\emptyset$ is $O(p)$, and
the number of \texttt{FindInEdge} calls that do not output $\emptyset$ is $O(\sum_{i = 2}^{d_s - 1} |D_i|)$.
Hence, by Lemma \ref{lem:binary_search_matroid_partition_rank}, the number of rank oracle queries used in Lines \ref{line:blockflowrank2} and \ref{line:blockflowrank3} is $O((n + k) \log n)$.

By the same argument as the proof of Lemma \ref{lem:blockflowrank},
we have $S'_i \in \mathcal{I}_i$ for all $i \in [k]$ and $|S'| > |S|$.

Next, we show that $d_{G(S_1', \dots, S_k')}(s, T) \geq d_{G(S_1, \dots, S_k)}(s, T) + 1$.
To this end, we prove that any element removed from $L_i$ can no longer be on any shortest augmenting path of length $d_T$.
When an element $a_i \in L_i$ has no incoming edge from $L_{i + 1}$, the distance from $s$ to $a_i$ is larger than $d_T - i$.
Then, by Lemma \ref{lem:monotonicity_matroid_partition}, such $a_i$ can no longer be on any augmenting path of length $d_T$.
The vertex $t_j \in T$ such that the element $j$ is removed from $I$ also can not be on an augmenting path of length $d_T$ anymore.
By Lemma \ref{lem:augmenting_path_monotonicity}, for every vertex $v$ on an augmenting path, the distance from $s$ to $v$ must strictly increase after the augmentation.
Thus Lemma \ref{lem:monotonicity_matroid_partition} implies that the element on an augmenting path can not be on an augmenting path of length $d_T$ anymore, which completes the proof.
\end{proof}

Now we give a proof of Theorem~\ref{thm:matroid_partition_rank_oracle}.

\begin{proof}[Proof of Theorem~\ref{thm:matroid_partition_rank_oracle}]
In the same way as the proof of Theorem \ref{thm:matroid_partition_ind_oracle},
we start with $S = \emptyset$ and initialize $S_i = \emptyset$ for all $i \in [k]$.
Then we apply \texttt{BlockFlowRank} repeatedly to augment the current partition $(\Ss)$ of $S$
until no $(s, T)$-path can be found in the compressed exchange graph $G(S_1, \dots, S_k)$.

By the same argument as Theorem \ref{thm:matroid_partition_ind_oracle}, 
the total number of \texttt{BlockFlowRank} executions is $O(\sqrt{p})$ in the entire matroid partition algorithm.
By Lemma \ref{lem:blockflowrank}, the proof is complete.
\end{proof}

We obtain the following theorem by the same argument as Theorem \ref{thm:matroid_partition_ind_oracle_approx}. 

\begin{theorem}\label{thm:matroid_partition_rank_oracle_approx}
For any $\epsilon > 0$,
there is an algorithm that uses $O((n + k) \epsilon^{-1} \log n)$ rank oracle queries and finds a $(1 - \epsilon)$ approximation to the largest partitionable set of $k$ matroids.
\end{theorem}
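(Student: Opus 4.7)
The plan is to mimic the proof of Theorem \ref{thm:matroid_partition_ind_oracle_approx} almost verbatim, replacing the independence-oracle blocking flow subroutine with its rank-oracle analog \texttt{BlockFlowRank}. Concretely, I would initialize $S = \emptyset$ and $S_i = \emptyset$ for all $i \in [k]$, and then iteratively invoke \texttt{BlockFlowRank} on the current partition $(\Ss)$ of $S$, but crucially only $\lceil \epsilon^{-1} \rceil$ times rather than until no $(s,T)$-path exists.

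For the query-complexity bound, Lemma \ref{lem:blockflowrank} tells us each call to \texttt{BlockFlowRank} uses $O((n+k)\log n)$ rank queries. Running the subroutine $O(\epsilon^{-1})$ times therefore uses $O((n+k)\epsilon^{-1}\log n)$ rank queries in total, which matches the claimed bound.

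For the approximation guarantee, I would use the monotonicity of shortest-augmenting-path lengths: by Lemma \ref{lem:blockflowrank}, each execution of \texttt{BlockFlowRank} that does not already certify optimality strictly increases $d_{G(\Ss)}(s,T)$ by at least one (and the case where it returns the input unchanged means $|S| = p$ already, which trivially gives a $(1-\epsilon)$ approximation). Hence after $\lceil \epsilon^{-1} \rceil$ executions either we have already found an optimal partitionable set or $d_{G(\Ss)}(s,T) = \Omega(\epsilon^{-1})$. In the latter case, Lemma \ref{lem:total_bound_matroid_partition} applied with $d = \Omega(\epsilon^{-1})$ gives $|S| \geq (1 - O(\epsilon)) p$, which is the desired $(1-\epsilon)$ approximation after rescaling $\epsilon$ by a constant factor.

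There is no real obstacle here; the proof is a direct transcription of the independence-oracle argument with the per-phase cost $O(kn + p\log p)$ replaced by $O((n+k)\log n)$, since both Lemma \ref{lem:total_bound_matroid_partition} and the distance-monotonicity consequence of \texttt{BlockFlowRank} are oracle-agnostic. The only minor care needed is to note explicitly that if \texttt{BlockFlowRank} ever returns its input unchanged (i.e., $d_s = \infty$), we can stop early with an exactly optimal $S$, so the termination after $\lceil \epsilon^{-1}\rceil$ iterations is well-defined and the analysis remains valid.
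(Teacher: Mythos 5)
Your proposal is correct and matches the paper's own proof essentially verbatim: the paper likewise runs \texttt{BlockFlowRank} from the empty partition $\epsilon^{-1}$ times, charges $O((n+k)\log n)$ rank queries per call via Lemma \ref{lem:blockflowrank}, and invokes the strict increase of $d_{G(\Ss)}(s,T)$ together with Lemma \ref{lem:total_bound_matroid_partition} for the $(1-\epsilon)$ guarantee. Your explicit remark about stopping early when the subroutine returns its input unchanged is a harmless (and correct) extra detail.
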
 

\begin{proof}
In the same way as the proof of Theorem \ref{thm:matroid_partition_ind_oracle_approx}, 
we start with $S = \emptyset$ and initialize $S_i = \emptyset$ for all $i \in [k]$ and
apply \texttt{BlockFlowRank} only $\epsilon^{-1}$ times.
By Lemma \ref{lem:blockflowrank}, the entire algorithm uses $O((n + k) \epsilon^{-1} \log n)$ rank oracle queries.
\end{proof}

\section{Faster Algorithm for Large \texorpdfstring{$k$}{TEXT}} \label{sec:faster_algorithm_for_large_k}

\tatsuya{
In this section, we present an algorithm that uses $o(\min\{k, p\} n \sqrt{p})$ independence oracle queries
when $k$ is large.
In Subsection \ref{subsec:blockingflow_ind}, we have presented the algorithm (Algorithm \ref{alg:getdistanceind}), 
which runs a breadth first search in the compressed exchange graph and uses $\tilde{O}(k n)$ independence oracle queries. 
Recall that the compressed exchange graph has $O(n p)$ edges, and then we can construct the compressed exchange graph by using $O(n p)$ independence oracle queries.
Thus, there is an algorithm that uses $\tO(\min\{ k n, n p \})$ independence oracle queries and runs a breadth first search in the compressed exchange graph.
In the evaluation of the independence query complexity of the matroid partition algorithm by the \emph{blocking flow} approach given in Section \ref{sec:blockingflow},
a key observation is
that the number of different lengths of shortest augmenting paths during the algorithm is $O(\sqrt{p})$.
For now, it is not clear whether we can obtain a matroid partition algorithm that runs a breadth first search $o(\sqrt{p})$ times.
Then the blocking flow approaches are now stuck at $\Omega(\min\{k, p\} n \sqrt{p})$ independence oracle queries.
}

To overcome this barrier and improve upon the algorithm that uses $\tilde{O}(\min\{k, p\} n \sqrt{p})$ independence oracle queries given by Cunningham~\cite{cunningham1986improved} and Theorem \ref{thm:matroid_partition_ind_oracle},
we introduce a new approach called \emph{edge recycling augmentation}, 
which can perform breadth first searches with fewer total independence oracle queries.
Our new approach can be attained through new ideas: 
an efficient utilization of the binary search procedure \texttt{FindOutEdge} and
a careful analysis of the number of independence oracle queries by using Lemma \ref{lem:total_bound_matroid_partition}.
By combining an algorithm by the blocking flow approach and an algorithm by the edge recycling augmentation approach,
we obtain the following theorem, which implies Theorem \ref{faster_matroid_partition_ind_oracle}.

\begin{theorem}\label{thm:faster_matroid_partition_ind_oracle_2}
There is an algorithm that uses $O(k'^{1 / 3} n p \log p + k n)$ independence oracle queries and solves the matroid partitioning problem, where $k' = \min \{k, p\}$.
\end{theorem}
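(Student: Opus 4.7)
The plan is to build the algorithm from two subroutines -- the blocking-flow phase of Section \ref{sec:blockingflow} and a new \emph{edge recycling} phase -- and to pick a threshold $d^{*}$ on the shortest augmenting-path length that balances the work spent in each regime.

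For the edge-recycling subroutine I would proceed as follows. At the start of a phase, compute the internal edge set $E^{*}$ of the compressed exchange graph directly: for each $u \in S$ and each $v \in V \setminus S$, test whether both $S_{\pi(u)} + v - u \in \mathcal{I}_{\pi(u)}$ and $S_{\pi(u)} + v \notin \mathcal{I}_{\pi(u)}$ hold. Since $|S| \le p$, this costs $O(np)$ independence queries. The $T$-edges are maintained separately in Cunningham's incremental style (Remark \ref{rem:cunningham_blocking_flow}), at a global cost of $O(kn + np)$. Within the phase, alternately run a BFS and augment along a shortest $s$--$T$ path via \texttt{Update}. During each BFS, for edges from a vertex $v$ into $S_i$, read them off $E^{*}$ if $S_i$ has not been modified since $E^{*}$ was computed, and otherwise invoke \texttt{FindOutEdge} via binary search. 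The phase terminates once the number of \texttt{FindOutEdge} invocations reaches $\Theta(np)$, or no $s$--$T$ path exists.

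The key lemma to prove is: if every augmenting path in the phase has length in $[\ell, 2\ell]$, then the phase can carry out $r = \Theta(\sqrt{p/\ell})$ augmentations while keeping the binary-search cost under the $O(np)$ budget. The reason is that each augmentation modifies at most $O(\ell)$ parts $S_i$, so after $j$ augmentations at most $O(j \ell)$ distinct parts have been modified, and the $(j+1)$st BFS then issues at most $O(n \cdot j \ell)$ calls to \texttt{FindOutEdge}. Summing gives $\sum_{j \le r} O(n j \ell) = O(n r^{2} \ell)$, and setting this equal to $np$ yields $r = \Theta(\sqrt{p/\ell})$.

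The overall algorithm then reads: run blocking-flow phases using Lemma \ref{lem:blockflowind} when $k \le p$ and Cunningham's implementation (Remark \ref{rem:cunningham_blocking_flow}) when $k > p$ -- in either regime the per-phase cost is $O(k' n)$ with $k' = \min\{k, p\}$ -- until the shortest-path length exceeds $d^{*} := p / k'^{2/3}$. By Lemma \ref{lem:total_bound_matroid_partition}, at this point only $O(p/d^{*})$ augmentations remain, and each has length at least $d^{*}$; dyadic grouping of the subsequent path lengths $\ell \ge d^{*}$, combined with $T_{\ell} \le O(p/\ell)$ and the $r = \Theta(\sqrt{p/\ell})$ bound per phase, yields $O(\sqrt{p/d^{*}})$ edge-recycling phases in total. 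Summing all costs and plugging in $d^{*} = p/k'^{2/3}$ gives
\[
O\!\bigl(k' n d^{*} + n p \sqrt{p/d^{*}} + k n\bigr) \;=\; O\!\bigl(k'^{1/3} n p + k n\bigr),
\]
which, together with the $O(\log p)$ factor from \texttt{FindOutEdge}, matches the statement.

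The hard part will be the edge-recycling analysis: carefully bounding the union of modified indices across a phase when path lengths are heterogeneous, and making the dyadic path-length grouping tight enough to guarantee $O(\sqrt{p/d^{*}})$ ER phases rather than something weaker like $O(\sqrt{p})$. A secondary but essential subtlety is that the two blocking-flow implementations must be invoked in the regimes for which they are cheaper -- Lemma \ref{lem:blockflowind}'s implementation pays $O(kn)$ per phase, while Cunningham's pays $O(np)$ -- so that the uniform per-phase bound $O(k' n)$ actually holds and the $k'^{1/3}$ factor survives the balancing.
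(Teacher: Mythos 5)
Your proposal is correct and follows essentially the same route as the paper: precompute $E^{*}$, recycle edges into any $S_i$ not yet touched by an augmentation, cap the binary-search budget at $O(np)$ per phase so that a phase whose paths have length about $\ell$ performs $\Omega(\sqrt{p/\ell})$ augmentations, and balance against the blocking-flow prefix at $d = p/k'^{2/3}$. The heterogeneity issue you flag is handled in the paper not by dyadic grouping but by charging each phase to the length $l_i$ of its \emph{last} path (so $c_i^2 l_i \geq p$ forces $c_i = \Omega(\sqrt{s_i})$, with $s_i = O(p/l_i)$ the number of augmentations still remaining) and then telescoping $m - 1 \leq O\left(\sum_{i} \int_{s_i}^{s_{i-1}} x^{-1/2}\, dx\right) = O(\sqrt{p/d})$.
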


\tatsuya{This theorem implies that we obtain a matroid partition algorithm that uses $o(\min\{k, p\} n \sqrt{p})$ independence oracle queries
when $k = \omega(p^{3/4})$.
We note that this algorithm requires $O(k'^{2/3} n p)$ time complexity other than independence oracle queries.
}

In Section \ref{subsec:faster_augumentation}, we present our new approach edge recycling augmentation, and in Section \ref{subsec:combining},  we present our faster matroid partition algorithm for large $k$ and give a proof of Theorem \ref{thm:faster_matroid_partition_ind_oracle_2}.

\subsection{Edge Recycling Augmentation} \label{subsec:faster_augumentation}

In order to select appropriate parameters for our algorithm, 
we have to determine the value of $p$.
However, the size $p$ of a largest partitionable set is unknown before running the algorithm.
Instead of using the exact value of $p$, we use a $\frac{1}{2}$-approximation $\bar{p}$ for $p$ (that is $\bar{p} \leq p \leq 2 \bar{p}$), which can be computed using $O(k n)$ independence oracle queries.
It is well known that a $\frac{1}{2}$-approximate solution for the matroid intersection problem can be found by the following simple greedy algorithm; see \cite[Proposition 13.26]{korte2011combinatorial}.
We begin with an empty set. For each element in the ground set, we check whether adding it to the set would result in a common independent set. If it does, we add it to the set.
Finally, we obtain a maximal common independent set.
We convert this algorithm into the following $\frac{1}{2}$-approximation algorithm (Algorithm \ref{alg:approximation_matroid_partition}) for the matroid partitioning problem by utilizing 
the reduction from matroid partition to the intersection of two matroids $\hat{\M}' = (\hat{V}, \hat{\mathcal{I}}')$ and $\hat{\M}'' =(\hat{V}, \hat{\mathcal{I}}'')$
given in subsection \ref{subsec:compressed_exchange_graph}.

\begin{algorithm}[H]
    For all $i \in [k]$ let $S_i \gets \emptyset$ \\
    \For{$i \leftarrow 1$ \KwTo $k$} {
        \For{$v \in V \setminus \left( \bigcup_{j = 1}^{i - 1} S_j \right) $} {
            \If{$S_i + v \in \mathcal{I}_i$} {
                $S_i \gets S_i + v$ \\
            }
        }
    }
    \Return{ $\bar{p} = |\bigcup_{i = 1}^{k} S_i|$ }
    \caption{\texttt{$\frac{1}{2}$-ApproximationMatroidPartition}}\label{alg:approximation_matroid_partition}
\end{algorithm}

Now we present our new approach \emph{Edge Recycling Augmentation}.
Our new approach edge recycling augmentation is applied in each phase of the algorithm.
One phase of edge recycling augmentation is described as \texttt{EdgeRecyclingAugmentation} (Algorithm \ref{alg:FasterFindingPaths}).

In \texttt{EdgeRecyclingAugmentation}, we first compute the edges $E^* ( \subseteq V \times S)$ in the compressed exchange graph $G(\Ss)$,
which uses $O(n p)$ independence oracle queries.
Note that the compressed exchange graph may be changed by augmentations, 
that is, augmentations may add or delete several edges in the compressed exchange graph, 
and so, taking one augmenting path may destroy the set $E^*$ of the edges.
However, we notice that we can \emph{recycle} some part of the edge set $E^*$ after the augmentations,
which is peculiar to the matroid partition.


In \texttt{EdgeRecyclingAugmentation}, we simply repeat to run a breadth first search and then to augment the partitionable set.
Unlike \texttt{GetDistanceIndependence} (Algorithm \ref{alg:getdistanceind}) in section \ref{subsec:blockingflow_ind}, our BFS \emph{recycles} the precomputed edge set $E^*$.
In one phase, we keep a set $J$ of all indices $i$ such that $S_i$ was updated by the augmentations.
Our crucial observation is that no edges, in the compressed exchange graph, entering a vertex in $S_i$ are changed by the augmentations
unless augmenting paths contain a vertex in $S_i \cup \{ t_i \}$.
In contrast to \texttt{GetDistanceIndependence} that uses the binary search procedure \texttt{FindOutEdge} for all indices $i \in [k]$,
our new BFS procedure uses \texttt{FindOutEdge} only for indices $i \in J$.
We can use $E^*$ to search edges entering a vertex in $S_i$ with $i \notin J$.
Then, the BFS based on the ideas described above can be implemented as \texttt{EdgeRecyclingBFS} (Algorithm \ref{alg:FasterBFS}).


We also provide a new significant analysis of the number of independence oracle queries in entire our matroid partition algorithm.
In \texttt{EdgeRecyclingAugmentation},
we repeat to run the breadth first search \texttt{EdgeRecyclingBFS} so that the total calls of the binary search procedure is $O(n p)$.
Then, the number of independence oracle queries used by \texttt{EdgeRecyclingBFS} in one call of \texttt{EdgeRecyclingAugmentation} is almost equal to the one used by the precomputation of $E^*$.
Hence, one call of \texttt{EdgeRecyclingAugmentation} uses $\tO(n p)$ independence oracle queries.
The number of calls of \texttt{EdgeRecyclingBFS} in \texttt{EdgeRecyclingAugmentation} depends on how many edges can not be recycled.
Thus, to determine the number of calls of \texttt{EdgeRecyclingBFS}, we use the value $sum$ in the implementation of \texttt{EdgeRecyclingAugmentation} (Algorithm \ref{alg:FasterFindingPaths}).
In the entire matroid partition algorithm, we apply \texttt{EdgeRecyclingAugmentation} repeatedly.
Then,
we can obtain a matroid partition algorithm that uses $\tO(np^{3/2} + k n)$ independence oracle queries.
\tatsuya{
Furthermore, by combining this with the blocking flow approach, 
the number of total calls of \texttt{EdgeRecyclingAugmentation} in the entire matroid partition algorithm can be $O(\min\{k^{1/3}, p^{1/3} \})$.
This leads to obtain a matroid partition algorithm that uses $\tO(\min\{k^{1/3}, p^{1/3} \} n p + k n)$ independence oracle queries. 
}
This analysis differs significantly from that of existing faster matroid intersection algorithms.




For $i \in [k]$, let $F_i (\subseteq V )$ denote the set of vertices adjacent to $t_i \in T$.
We first compute the set $F_i$ for all $i \in [k]$ using $O(k n)$ independence oracle queries.
Note that, after one augmentation, they can be updated using only $O(n)$ independence oracle queries.

In the following two lemmas, we show the correctness and the independence query complexity of the procedure \texttt{EdgeRecyclingAugmentation} (Algorithm~\ref{alg:FasterFindingPaths}).

\begin{lemma}\label{lem:faster_matroid_partition_correctness}
Given a partition $(S_1, \dots, S_k)$ of $S (\subseteq V)$ such that $S_i \in \mathcal{I}_i$ for all $i \in [k]$,
the procedure \textup{\texttt{EdgeRecyclingAugmentation}} (Algorithm~\ref{alg:FasterFindingPaths}) outputs 
a partition $(S'_1, \dots, S'_k)$ of $S' (\subseteq V)$ such that $S'_i \in \mathcal{I}_i$ for all $i \in [k]$ and $|S'| \geq |S|$.
\end{lemma}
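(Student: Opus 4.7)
The plan is to decompose the correctness claim into two components: first, that each path returned by \texttt{EdgeRecyclingBFS} and passed to \texttt{Update} is a genuine shortest augmenting path in the compressed exchange graph of the \emph{current} partition, and second, that \texttt{Update} correctly augments along such a path. Granted both, the cardinality bound $|S'| \geq |S|$ is immediate, since the only modification to $S$ during the procedure is the strict increase coming from a successful \texttt{Update}; and if no path is found in some iteration the procedure exits without altering the partition.

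The central step is what I will call the \emph{recycling invariant}. Let $J \subseteq [k]$ denote the set of indices $i$ for which $S_i$ has been modified since the start of the call to \texttt{EdgeRecyclingAugmentation}. By Definition~\ref{def:compressed_exchange_graph}, an edge $(v,u) \in E'$ with $u \in S_i$ depends only on $\M_i$ and $S_i$; it is entirely independent of $S_j$ for $j \neq i$. Consequently, for every $i \notin J$, the edges of the current compressed exchange graph entering vertices of $S_i$ coincide exactly with those in the initial graph, and these are precisely the ones recorded in $E^*$. For $i \in J$, however, the edges may have changed and must be recomputed, which is exactly what \texttt{EdgeRecyclingBFS} does via \texttt{FindOutEdge}. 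Similarly, the sink-adjacency sets $F_i$ can be refreshed after each augmentation by touching only $O(n)$ elements, so edges into $T$ are always up-to-date. I would formalize the invariant by induction on the number of completed iterations of the outer loop of \texttt{EdgeRecyclingAugmentation}, verifying that after each augmentation, $J$ is correctly extended to include the indices $\pi(v_2),\dots,\pi(v_{l-1}),j$ appearing on the path, which are exactly the partition classes whose membership has changed.

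Combining the recycling invariant with the standard correctness of BFS shows that \texttt{EdgeRecyclingBFS} produces a correct shortest $(s,T)$-path in the current compressed exchange graph whenever one exists. Lemma~\ref{lem:augmenting_path_matroid_partition} then guarantees that the resulting path is a valid augmenting path, and Algorithm~\ref{alg:Update} constructs a partition $(S_1',\dots,S_k')$ of $S+v_1$ with $S_i' \in \mathcal{I}_i$ for every $i \in [k]$. This closes the induction and delivers the two desired properties.

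The main obstacle is establishing the recycling invariant rigorously across many augmentations: although the underlying observation, that the edges entering $S_i$ are a function of $(\M_i, S_i)$ alone, is clean, one must carefully check that the bookkeeping of $E^*$, the sets $F_i$, and the set $J$ together faithfully encode the edge set used by each successive BFS call. Once this is in place, the appeal to Lemma~\ref{lem:augmenting_path_matroid_partition} and the inequality $|S'| \geq |S|$ follow routinely.
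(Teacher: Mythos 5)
Your overall strategy matches the paper's: an induction over iterations of the outer while loop maintaining that (a) for $i \notin J$ the precomputed set $E^*$ still describes exactly the edges of the current compressed exchange graph entering $S_i$, (b) the sets $F_i$ still describe the edges into $T$, and (c) each $S_i$ remains independent; BFS correctness plus Lemma~\ref{lem:augmenting_path_matroid_partition} then close the induction. Part (a) you argue correctly: an edge into $u \in S_i$ depends only on $(\M_i, S_i)$, and $S_i$ is untouched for $i \notin J$.

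The genuine gap is in your treatment of the sets $F_i$. The algorithm recomputes only $F_j$ after an augmentation along $s, v_1, \dots, v_{l-1}, t_j$ (Line~\ref{line:fasteraugmentations4}); it does \emph{not} recompute $F_{\pi(v_2)}, \dots, F_{\pi(v_{l-1})}$, even though the corresponding sets $S_{\pi(v_2)}, \dots, S_{\pi(v_{l-1})}$ all change (each has one element swapped out and another swapped in). Your claim that the $F_i$ ``can be refreshed after each augmentation by touching only $O(n)$ elements, so edges into $T$ are always up-to-date'' silently assumes that only one $F_i$ needs refreshing; refreshing every modified $F_i$ would cost $O(n \cdot |J|)$ queries per augmentation and is not what the algorithm does. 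The missing ingredient is the span-preservation fact the paper invokes (Cunningham's Lemma~4.4): when $|S_i^{\mathrm{old}}| = |S_i^{\mathrm{new}}|$ and the update comes from a shortest augmenting path, $S_i^{\mathrm{new}}$ has the same span as $S_i^{\mathrm{old}}$, hence $\{v \notin S_i \mid S_i + v \in \mathcal{I}_i\}$ is unchanged for every $i \neq j$. Without this, your induction cannot certify that the BFS sees the correct edges into $T$, so the path it returns need not be a shortest augmenting path, and the appeal to Lemma~\ref{lem:augmenting_path_matroid_partition} (which requires shortness) breaks down. Adding this one lemma to your argument for invariant (b) makes the proof complete and essentially identical to the paper's.
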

\begin{proof}
To prove the correctness of \texttt{EdgeRecyclingAugmentation}, we prove the following invariants at the beginning of any iteration of the while loop.
\begin{itemize}
 \item[(i)]  For all $i \in [k] \setminus J$ and all $(v, u) \in V \times S_i$, we have $(v, u) \in E^*$ if and only if $S_i + v - u \in \mathcal{I}_i$ and $S_i + v \notin \I_i$.
 \item[(ii)] For all $i \in [k]$ and all $v \in V$, we have $v \in F_i$ if and only if $S_i + v \in \mathcal{I}_i$ and $v \notin S_i$.
 \item[(iii)] For all $i \in [k]$, we have $S_i \in \mathcal{I}_i$.
\end{itemize}


The invariant is true before the execution of \texttt{EdgeRecyclingAugmentation}.
Now, assume that the invariants (i)--(iii) hold true at the beginning of an iteration of the while loop.
Let a partition $(S_1^{{\rm old}}, \dots, S_k^{{\rm old}})$ of $S^{{\rm old}}$ be the partition before the execution of Line \ref{line:fasteraugmentations5} and
a partition $(S_1^{{\rm new}}, \dots, S_k^{{\rm new}})$ of $S^{{\rm new}}$ be the partition after the execution of Line \ref{line:fasteraugmentations5}.
For all $i \in [k] \setminus J$, we have $S_i^{{\rm old}} = S_i^{{\rm new}}$.
Then, invariant (i) remains true. 
For all $i \in [k] \setminus \{ j \}$, we have $|S_i^{{\rm old}}| = |S_i^{{\rm new}}|$.
It is well known that $S_i^{{\rm new}}$ has the same span as $S_i^{{\rm old}}$ when $|S_i^{{\rm old}}| = |S_i^{{\rm new}}|$; see \cite[Lemma 4.4]{cunningham1986improved}.
Then, invariant (ii) remains true. 
The procedure \texttt{EdgeRecyclingBFS} simply finds a BFS-tree rooted at $s$ by a breadth first search.
Thus, if the invariants (i)--(iii) are true, then the procedure \texttt{EdgeRecyclingBFS} correctly computes BFS-tree rooted at $s$.
Then, the path $P$ that \texttt{EdgeRecyclingBFS} outputs in Line~\ref{line:fasteraugmentations6} is a shortest augmenting path.
Hence, by Lemma \ref{lem:augmenting_path_matroid_partition}, invariant (iii) remains true.
\end{proof}

\begin{lemma}\label{lem:faster_matroid_for_k_inequality_1}
The procedure \textup{\texttt{EdgeRecyclingAugmentation}} (Algorithm~\ref{alg:FasterFindingPaths}) uses $O(n p \log p)$ independence oracle queries.
\end{lemma}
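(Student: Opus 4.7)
The plan is to decompose the independence queries performed by one execution of \texttt{EdgeRecyclingAugmentation} into three sources and bound each separately: the precomputation of the edge set $E^*$, the iterated calls to \texttt{EdgeRecyclingBFS}, and the maintenance of $F_i$ under augmentations.

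First, I would analyze the precomputation of $E^*$. For each pair $(v,i) \in V \times [k]$ with $v \notin S_i$, the algorithm iterates \texttt{FindOutEdge}$(\M_i, S_i, v, B)$, starting from $B = S_i$ and removing each discovered $u$ from $B$ after it is returned. Using the cached sets $F_i$, one may skip pairs with $v \in F_i$, since in that case $S_i + v \in \mathcal{I}_i$ and hence no edge from $v$ into $S_i$ exists in the compressed exchange graph. For every remaining pair, $S_i + v \notin \mathcal{I}_i$ and $v \notin S_i$, so there is a circuit of $S_i+v$ through $v$, which guarantees at least one exchangeable element. Consequently the number of \texttt{FindOutEdge} invocations, including the single terminal unsuccessful one per pair, is at most $O(|E^*|) = O(np)$, and by Lemma~\ref{lem:binary_search_matroid_partition_ind} this costs $O(np\log p)$ independence queries in total.

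Second, I would bound the cost of the iterated \texttt{EdgeRecyclingBFS} calls inside the while loop. By design, each such BFS scans edges into $S_i$ with $i \notin J$ by direct lookup in $E^*$, which uses no independence queries; oracle calls are issued only when scanning edges into $S_i$ with $i \in J$, via \texttt{FindOutEdge}. The procedure \texttt{EdgeRecyclingAugmentation} maintains the counter $sum$ that accumulates these \texttt{FindOutEdge} invocations across all BFS runs of the phase, and the outer loop is halted as soon as $sum$ reaches the threshold $\Theta(np)$. By Lemma~\ref{lem:binary_search_matroid_partition_ind}, each invocation costs $O(\log p)$ queries, so the total BFS cost in one phase is $O(np\log p)$. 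The $F_i$ updates cost $O(n)$ per augmentation, and since each augmentation enlarges $|S|$ by one and $|S|\le p$, the phase performs at most $p$ augmentations, contributing $O(np)$ queries, which is absorbed into $O(np\log p)$.

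The main obstacle is justifying that $sum$ faithfully reflects the total BFS work so that the stopping rule actually yields the claimed bound. Concretely, I would argue that every \texttt{FindOutEdge} invocation inside \texttt{EdgeRecyclingBFS} is of one of two kinds: a successful call that discovers a new edge into some $S_i$ with $i \in J$ and costs $O(\log p)$ queries, or the terminal unsuccessful call for the pair $(v,i)$ that costs a single query; both kinds increment $sum$, and Lemma~\ref{lem:faster_matroid_partition_correctness} guarantees that the BFS invariants remain intact across augmentations so that the analysis may be applied uniformly to each BFS run. Adding the three contributions, the entire execution of \texttt{EdgeRecyclingAugmentation} uses $O(np\log p)$ independence oracle queries, as required.
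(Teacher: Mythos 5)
Your decomposition into the three sources of queries (computing $E^*$, the iterated BFS calls, and the $F_j$ maintenance) matches the paper, and the first and third parts are fine. The gap is in the second part: you have misread what the counter $sum$ counts and what the stopping rule is. In Algorithm~\ref{alg:FasterFindingPaths}, $sum$ is incremented by $|J|$ once per augmentation (where $J$ is the cumulative set of indices $i$ whose $S_i$ has been touched so far), and the while loop stops when $sum \geq 2\bar{p} = O(p)$ --- not when a count of \texttt{FindOutEdge} invocations reaches $\Theta(np)$. Individual \texttt{FindOutEdge} calls do not touch $sum$ at all, so your claims that ``both kinds increment $sum$'' and that the loop ``is halted as soon as $sum$ reaches the threshold $\Theta(np)$'' describe a different algorithm, and your $O(np\log p)$ bound for the BFS cost does not follow for the procedure as written.

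The argument actually needed is a two-level count. Within a single call of \texttt{EdgeRecyclingBFS} with index set $J$, each vertex enters $Q$ at most once and each element of $S$ is removed from $B_{\pi(v)}$ at most once, so the number of \texttt{FindOutEdge} calls that return an element is $O(p)$ and the number that return $\emptyset$ is $O(n \cdot |J|)$ (one terminal failure per processed pair $(v,i)$ with $i \in J$); hence one BFS makes $O(p + n|J|)$ calls. Then, letting $J_1, \dots, J_c$ be the index sets over the successive BFS calls, the actual loop condition $sum < 2\bar{p}$ yields both $c = O(p)$ and $\sum_{i=1}^{c} |J_i| = O(p)$, so the total number of \texttt{FindOutEdge} calls is $O\left(\sum_{i=1}^{c} (p + n|J_i|)\right) = O(p^2 + np) = O(np)$, and Lemma~\ref{lem:binary_search_matroid_partition_ind} converts this into $O(np\log p)$ independence queries. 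This per-BFS bound of $O(p + n|J|)$ and the use of the true termination condition are the missing steps in your write-up; without them the stopping rule gives you no control over the BFS work.
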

\begin{proof}
The number of independence oracle queries used in Line \ref{line:fasteraugmentations1} is $O(n p)$.
Furthermore, the number of independence oracle queries used in Line \ref{line:fasteraugmentations4} is $O(n p)$,
because the number of iterations of the while loop is bounded by $p$.
It remains to show that the number of \texttt{FindOutEdge} calls in the entire procedure \texttt{EdgeRecyclingAugmentation} is $O(n p)$.

In the procedure \texttt{EdgeRecyclingBFS}($(S_1, \dots, S_k), E^*, J, \bigcup_{i = 1}^{k}F_i$), each vertex $v \in V$ is added to $Q$ at most once and each vertex $v \in S$ is removed from $B_{\pi(v)}$ at most once,
where $\pi(v)$ is the index such that $v \in S_{\pi(v)}$.
This means that the number of \texttt{FindOutEdge} calls that do not output $\emptyset$ is bounded by $p$, and the number of \texttt{FindOutEdge} calls that output $\emptyset$ is bounded by $n \cdot |J|$.
Then, the number of \texttt{FindOutEdge} calls in the procedure \texttt{EdgeRecyclingBFS} is $O(p + n \cdot |J|)$.

Suppose that \texttt{EdgeRecyclingBFS} is called for $J = J_1, J_2, \dots, J_c$ in the procedure \texttt{EdgeRecyclingAugmentation}.
Obivously, $c \leq 2\bar{p} = O(p)$.
Furthermore, by the condition of the while loop in \texttt{EdgeRecyclingAugmentation}, $\displaystyle \sum_{i = 1}^{c}|J_i| = O(p) $.
Thus, the number of \texttt{FindOutEdge} calls in the entire procedure \texttt{EdgeRecyclingAugmentation} is $\displaystyle O \left( \sum_{i = 1}^{c}(p + n \cdot |J_i|) \right)$, which is $O(n p)$.
Hence, by Lemma \ref{lem:binary_search_matroid_partition_ind}, the number of independence oracle queries by \texttt{FindOutEdge} in the entire procedure \texttt{EdgeRecyclingAugmentation} is $O(n p \log p)$, which completes the proof.
\end{proof}

\begin{algorithm}[H]
    \KwInput{a partition $(S_1, \dots , S_k)$ of $S$ $(\subseteq V)$ such that $S_i \in \mathcal{I}_i$ for all $i \in [k]$,
    a set $E \subseteq V \times S$,
    a set $J \subseteq [k]$,
    a set $F = \{ v \in V \mid \exists i \in [k], v \notin S_i, S_i + v \in \mathcal{I}_i \}$.}
    \KwOutput{an augmenting $(s, T)$-path in $G(S_1, \dots , S_k)$ if one exists.}
    $Q \gets \{ v \in V \setminus S \}$ \tcp{$Q :$ queue} 
    $B_i \gets S_i$ for all $i \in [k]$ \\
    \While{$Q \neq \emptyset$}{
        Let $v$ be the element added to $Q$ earliest \\
        $Q \gets Q - v$ \\
        \If{$v \in F$} {
            \Return the shortest augmenting path in the BFS-tree
        }
        \For{$i \in J$}{
            \While{$u$ = $\text{\textup{\texttt{FindOutEdge}}}$($\M_i, S_i, v, B_i$) satisfies $u \neq \emptyset$} {
                $Q \gets Q + u$ \\
                $B_i \gets B_i - u$
            }
        }

        \For{$i \in [k] \setminus J$} {
            \For{$u \in B_i$ such that $(v, u) \in E$} {
                $Q \gets Q + u$ \\
                $B_i \gets B_i - u$ \\
            }
        }
    }
    \Return NO PATH EXISTS
    \caption{\texttt{EdgeRecyclingBFS}}\label{alg:FasterBFS}
\end{algorithm}

\begin{algorithm}[H]
    \KwInput{a partition $(S_1, \dots , S_k)$ of $S$ $(\subseteq V)$ such that $S_i \in \mathcal{I}_i$ for all $i \in [k]$,
    sets $F_i = \{ v \in V \setminus S_i \mid S_i + v \in \mathcal{I}_i \}$ for all $i \in [k]$}
    \KwOutput{a partition $(S'_1, \dots , S'_k)$ of $S'$ $(\subseteq V)$ such that $S'_i \in \mathcal{I}_i$ for all $i \in [k]$ and $|S'| \geq |S|$.}
    $sum \gets 0$ \\
    $J \gets \emptyset$ \\
    $E^* \gets \{(v, u) \in V \times S \mid \exists i \in [k], u \in S_i, S_i + v \notin \mathcal{I}_i, S_i + v - u \in \mathcal{I}_i \}$ \\ \label{line:fasteraugmentations1}
    \While{$sum < 2 \bar{p}$}{
        $P \gets $ \texttt{EdgeRecyclingBFS}($(S_1, \dots, S_k), E^*, J, \bigcup_{i = 1}^{k}F_i$) \label{line:fasteraugmentations6} \\
        \If{$P$ = \rm{NO PATH EXISTS}} {
            \Break \label{line:fasteraugumentation_break}
        }
        For $v \in S$ denote by $\pi(v)$ the index such that $v \in S_{\pi(v)}$ \\
        Denote by $V(P) = \{s, v_1, \dots , v_{l - 1}, t_j\}$ the vertices in the path $P$ \\
        \For{ $i \leftarrow 2$ \KwTo $l - 1$ } { \label{line:fasteraugmentations2}
            $J \gets J + \pi(v_i)$ \\
        }
        $J \gets J + j$  \label{line:fasteraugmentations3} \\
        $(S_1, \dots , S_k) \gets \text{\texttt{Update}}((S_1, \dots , S_k), P)$ \label{line:fasteraugmentations5} \\
        $F_j \gets \{v \in V \mid v \notin S_j$ and $S_j + v \in \mathcal{I}_j \}$ \label{line:fasteraugmentations4} \\
        $sum \gets sum + |J|$ \\
    }
    \Return{ $(S_1, \dots , S_k)$}
    \caption{\texttt{EdgeRecyclingAugmentation}}\label{alg:FasterFindingPaths}
\end{algorithm}

At this point, we can obtain a matroid partition algorithm that uses $O(n p^{3/2} \log p + k n)$ independence oracle queries.
In the algorithm, we first compute $F_i = \{ v \in V \setminus S_i \mid S_i + v \in \mathcal{I}_i \}$ for all $i \in [k]$.
Next, we apply \texttt{EdgeRecyclingAugmentation} repeatedly
to augment the current partition $(S_1, \dots, S_k)$ of $S$ 
until no $(s, T)$-path can be found in the compressed exchange graph $G(S_1, \dots, S_k)$.
As we will show later in Lemma \ref{lem:analysis_matroid_parition_for_large_k_2},
the number of independence oracle queries in this algorithm is $O(n p^{3/2} \log p + k n)$.
In the next subsection,
we improve this by combining the algorithm by the blocking flow approach and the algorithm by the edge recycling augmentation approach.

\subsection{Going Faster for Large \texorpdfstring{$k$}{TEXT} by Combining Blocking Flow and Edge Recycling Augmentation} \label{subsec:combining}

We have already presented two algorithms to solve the matroid partitioning problem in the independence oracle model.
We combine the algorithm by the \emph{blocking flow} approach and the one by the \emph{edge recycling augmentation} approach.
When the distance from $s$ to $T$ in the compressed exchange graph is small, we use the blocking flow approach.
On the other hand, when the distance from $s$ to $T$ in the compressed exchange graph is large, we use the edge recycling augmentation approach.
The implementation is described as Algorithm \ref{alg:FasterMatroidPartitionForLaergek}.
Then we obtain a matroid partitioning algorithm that uses \tatsuya{$o(\min \{k, p \} n \sqrt{p})$} independence oracle queries
when $k = \omega(p^{3/4})$.
This improves upon both Cunningham's algorithm and the algorithm given in Theorem \ref{thm:matroid_partition_ind_oracle}, both of which use only the blocking flow approach.


\begin{algorithm}[H]
    Compute a $\frac{1}{2}$-approximation $\bar{p}$ for $p$ by running \texttt{$\frac{1}{2}$-ApproximationMatroidPartition} (Algorithm \ref{alg:approximation_matroid_partition}) and determine the value of $d$. \label{line:faster_matorid_partition_4} \\
    For all $i \in [k]$ let $S_i \gets \emptyset$ \\
    Apply \texttt{BlockFlowIndependence} (Algorithm \ref{alg:blockflowind}) repeatedly
    to augment the current partition $(S_1, \dots, S_k)$ of $S$ 
    until the distance from $s$ to $T$ in the compressed exchange graph $G(S_1, \dots, S_k)$ is at least $d$. 
    (We adopt the implementation mentioned in Remark \ref{rem:cunningham_blocking_flow} if it is better than Algorithm \ref{alg:blockflowind}.) \label{line:faster_matorid_partition_1} \\
    For all $i \in [k]$ let $F_i \gets \{ v \in V \setminus S_i \mid S_i + v \in \mathcal{I}_i \}$ \label{line:faster_matorid_partition_3} \\
    Apply \texttt{EdgeRecyclingAugmentation} (Algorithm~\ref{alg:FasterFindingPaths}) repeatedly
    to augment the current partition $(S_1, \dots, S_k)$ of $S$ and to update $F_j$ with $j \in [k]$ 
    until no $(s, T)$-path can be found in the compressed exchange graph $G(S_1, \dots, S_k)$. \label{line:faster_matorid_partition_2}  \\
    \caption{Faster Matroid Partition Algorithm for Large $k$}\label{alg:FasterMatroidPartitionForLaergek}
\end{algorithm}

The algorithm is parametrized by an integer $d$, which we set in the end.
To analyze the independence query complexity of Algorithm \ref{alg:FasterMatroidPartitionForLaergek}, 
we first show that Line \ref{line:faster_matorid_partition_1} uses \tatsuya{$\tilde{O}(\min\{k n d , p n d + k n\})$} independence oracle queries and
Line \ref{line:faster_matorid_partition_2} uses $\displaystyle \tilde{O}\left(\frac{p^{3 / 2} n}{d^{1 / 2}} \right)$ independence oracle queries.

\begin{lemma}\label{lem:analysis_matroid_parition_for_large_k_1}
Line \ref{line:faster_matorid_partition_1} of Algorithm \ref{alg:FasterMatroidPartitionForLaergek} uses 
\tatsuya{$O(\min\{k n d \log p , p n d + k n\})$}
independence oracle queries.
\end{lemma}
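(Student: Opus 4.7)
The plan is to decompose the analysis into two parts, one for each of the two available implementations of \texttt{BlockFlowIndependence}, and then take the minimum. First, I would use the key fact from Lemma \ref{lem:blockflowind} that every invocation of \texttt{BlockFlowIndependence} strictly increases $d_{G(\Ss)}(s, T)$ by at least $1$ (or certifies that $d_{G(\Ss)}(s,T) = \infty$). Since Algorithm \ref{alg:FasterMatroidPartitionForLaergek} starts from $S = \emptyset$ where $d_{G(\Ss)}(s,T) \geq 1$, repeating the procedure until the distance first reaches $d$ requires at most $d$ executions in total.

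Next I would bound the independence query cost for each of the two implementations separately. Using Algorithm \ref{alg:blockflowind}, Lemma \ref{lem:blockflowind} tells us that a single execution uses $O(kn + p\log p)$ independence queries. Multiplying by the at-most-$d$ executions yields $O((kn + p\log p)\,d) = O(knd + pd\log p)$ queries in total. Since $p \leq n$ and $k \geq 1$, we have $pd\log p \leq nd\log p \leq knd\log p$, so this bound simplifies to $O(knd\log p)$. For the alternative implementation, Remark \ref{rem:cunningham_blocking_flow} states directly that performing $d$ executions costs $O(npd + kn)$ independence queries in total, where the $kn$ term covers the one-time initialization of the edges adjacent to $T$ and subsequent augmentations only incur $O(n)$ updates each.

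Because the algorithm explicitly ``adopts the implementation mentioned in Remark \ref{rem:cunningham_blocking_flow} if it is better than Algorithm \ref{alg:blockflowind},'' the total query complexity of Line \ref{line:faster_matorid_partition_1} is bounded by the minimum of the two estimates, namely $O(\min\{knd\log p,\, npd + kn\})$, as claimed.

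The argument is essentially just bookkeeping on top of two previously-established bounds, so I do not expect any genuine technical obstacle; the only subtle point is the simplification $kn + p\log p = O(kn\log p)$, which follows from $p \leq n$, and the observation that exactly $d$ executions suffice thanks to the strict-monotonicity guarantee of the \texttt{BlockFlowIndependence} subroutine.
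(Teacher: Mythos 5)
Your proof is correct and follows essentially the same route as the paper: bound the number of \texttt{BlockFlowIndependence} calls by $d$ via the strict distance increase from Lemma \ref{lem:blockflowind}, multiply by the per-call cost $O(kn + p\log p) = O(kn\log p)$ for one implementation, invoke Remark \ref{rem:cunningham_blocking_flow} for the other, and take the minimum. The explicit justification of the simplification $kn + p\log p = O(kn\log p)$ via $p \le n$ is a small added detail the paper leaves implicit, but the argument is otherwise identical.
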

\begin{proof}


Lemma \ref{lem:blockflowind} implies that the distance from $s$ to $T$ in the compressed exchange graph increases by at least 1 after the execution of \texttt{BlockFlowIndependence}.
Then, the number of calls of \texttt{BlockFlowIndependence} is bounded by $d$.
Since Lemma \ref{lem:blockflowind} implies that the number of independence oracle queries in one call of \texttt{BlockFlowIndependence} is $O(k n \log p)$, Line \ref{line:faster_matorid_partition_1} of Algorithm \ref{alg:FasterMatroidPartitionForLaergek} can be implemented as an algorithm that uses $O(k n d \log p)$ independence oracle queries.
Furthermore, by Remark~\ref{rem:cunningham_blocking_flow}, Line \ref{line:faster_matorid_partition_1} of Algorithm \ref{alg:FasterMatroidPartitionForLaergek} can also be implemented as an algorithm that uses $O(p n d + k n)$ independence oracle queries, which completes the proof.
\end{proof}

\begin{lemma}\label{lem:analysis_matroid_parition_for_large_k_2}
Line \ref{line:faster_matorid_partition_2} of Algorithm \ref{alg:FasterMatroidPartitionForLaergek} uses $\displaystyle O\left(\frac{p^{3 / 2} n}{d^{1 / 2}} \log p\right)$ independence oracle queries.
\end{lemma}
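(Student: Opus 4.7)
The plan is to upper bound the number of calls $q$ that Line \ref{line:faster_matorid_partition_2} makes to \texttt{EdgeRecyclingAugmentation} and then multiply by the per-call query bound $O(n p \log p)$ established in Lemma \ref{lem:faster_matroid_for_k_inequality_1}.

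First I collect two global estimates about Line \ref{line:faster_matorid_partition_2}. By Lemma \ref{lem:monotonicity_matroid_partition}, the distance from $s$ to $T$ in the compressed exchange graph is monotone nondecreasing under augmentations, so after Line \ref{line:faster_matorid_partition_1} every augmenting path used inside Line \ref{line:faster_matorid_partition_2} has length at least $d$. Applying Lemma \ref{lem:total_bound_matroid_partition} at the start of Line \ref{line:faster_matorid_partition_2} therefore shows that only $M = O(p/d)$ augmentations remain, and applying the same lemma at every intermediate state bounds the length $l_j$ of the $j$-th augmenting path by $l_j \leq c p / (p - |S|_{j-1})$; summing the harmonic series with $p - |S|_0 = O(p/d)$ gives $L := \sum_{j=1}^{M} l_j = O(p \log (p/d))$.

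Next I analyze a single call $c$, letting $m_c$ be the number of augmentations it performs and $L_c := \sum_{j \in c}(l_j - 1)$ their total length. Since $J$ is reset to $\emptyset$ at the start of the call and each augmentation along a path of length $l_j$ inserts at most $l_j - 1$ indices into $J$, the monotone running size satisfies $|J_j| \leq \sum_{k=1}^{j}(l_k-1) \leq L_c$, hence
\[
\mathrm{sum} \;=\; \sum_{j=1}^{m_c}|J_j| \;\leq\; m_c L_c.
\]
Unless $c$ is the last call (which may exit through the NO PATH branch at Line \ref{line:fasteraugumentation_break}), the while-loop condition forces $\mathrm{sum} \geq 2\bar{p} \geq p$, and thus $m_c \geq p / L_c$.

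Finally I combine these estimates via Cauchy-Schwarz: summing over the $q$ calls,
\[
M \;\geq\; \sum_c m_c \;\geq\; p \sum_c \frac{1}{L_c} \;\geq\; \frac{p\,q^2}{\sum_c L_c} \;\geq\; \frac{p\,q^2}{L},
\]
so $q \leq \sqrt{M L / p} = O\!\left(\sqrt{(p/d)\log(p/d)}\right) = \tO(\sqrt{p/d})$. Multiplying by the per-call cost $O(n p \log p)$ from Lemma \ref{lem:faster_matroid_for_k_inequality_1} gives the claimed $O(p^{3/2} n \log p / d^{1/2})$ bound, absorbing the mild $\sqrt{\log(p/d)}$ factor into the single $\log p$ factor. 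The main obstacle is the per-call inequality $m_c \geq p / L_c$: this is precisely the guarantee that the threshold $2\bar p$ in the while loop of \texttt{EdgeRecyclingAugmentation} was designed to enforce, and it is this guarantee together with the harmonic bound on total path length that converts the per-call amortization into a global $\sqrt{p/d}$ bound on the number of phases, strictly improving upon the $\sqrt{p}$-phase bound of the blocking-flow approach.
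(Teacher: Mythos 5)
Your overall strategy is sound and genuinely different from the paper's. The paper also reduces the lemma to bounding the number $m$ of calls of \texttt{EdgeRecyclingAugmentation} by $O(\sqrt{p/d})$ and then invokes Lemma~\ref{lem:faster_matroid_for_k_inequality_1}, but it gets there without ever summing path lengths globally. Writing $c_i$ for the number of augmentations in the $i$-th call, $l_i$ for the length of the \emph{last} path in that call, and $s_i$ for the number of augmentations remaining after the $i$-th call, the paper observes that (a) all $c_i$ paths in call $i$ have length at most $l_i$, so the accumulated value $sum \leq c_i^2 l_i$, whence $c_i^2 \geq p/l_i$ from the loop condition, while (b) Lemma~\ref{lem:total_bound_matroid_partition} applied at the \emph{end} of call $i$ gives $s_i = O(p/l_i)$. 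Combining, $c_i = \Omega(\sqrt{s_i})$, and since $s_{i-1}-s_i = c_i$, the telescoping integral $\int_{s_i}^{s_{i-1}} dx/\sqrt{x} = \Omega(1)$ yields $m = O(\sqrt{s_0}) = O(\sqrt{p/d})$ with no logarithmic loss. Your route instead lower-bounds $m_c \geq p/L_c$ per call and controls $\sum_c L_c$ by the harmonic bound $L = O(p\log(p/d))$ before applying Cauchy--Schwarz; all of these individual steps (the bound $|J_j| \leq \sum_{k\leq j}(l_k-1)$, the exclusion of only the final call, $\sum_c m_c \leq M = O(p/d)$) are correct.

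The one genuine defect is the last sentence: the $\sqrt{\log(p/d)}$ factor cannot be ``absorbed into the single $\log p$ factor.'' Your argument gives $q = O\bigl(\sqrt{(p/d)\log(p/d)}\bigr)$ calls and hence $O\bigl(\tfrac{p^{3/2}n}{d^{1/2}}\,(\log p)\sqrt{\log(p/d)}\bigr)$ queries, i.e.\ up to $\log^{3/2}p$ rather than the single $\log p$ claimed in the lemma. This costs nothing for the paper's headline $\tilde O$ bounds in Theorem~\ref{thm:faster_matroid_partition_ind_oracle_2}, but it does not establish the lemma as literally stated, and the loss is intrinsic to your decomposition: the total augmenting-path length really can be $\Theta(p\log(p/d))$, so any argument routed through $\sum_c L_c$ pays this factor. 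The fix is the paper's pivot from ``total length within a call'' to ``length of the last path in a call,'' which ties $c_i$ directly to the number of remaining augmentations $s_i$ rather than to a global length budget.
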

\begin{proof}

Let $m$ denote the number of calls of \texttt{EdgeRecyclingAugmentation} in Line \ref{line:faster_matorid_partition_2} of Algorithm~\ref{alg:FasterMatroidPartitionForLaergek}.
By Lemma \ref{lem:faster_matroid_for_k_inequality_1}, we only have to show that $\displaystyle m = O\left(\sqrt{\frac{p}{d}}\right)$.
For $i \in [m]$, let $c_i$ denote the number of augmenting paths found in the $i$-th call of \texttt{EdgeRecyclingAugmentation}.
For $i \in [m - 1] \cup \{ 0 \}$, we write $\displaystyle s_i = \sum_{j = i + 1}^{m} c_j$.

We first show the following two claims.
\begin{claim}\label{claim:1}
\tatsuya{
For all $i \in [m - 1]$, we have $\sqrt{s_i} = O(c_i)$.
}
\end{claim}
\begin{proof}
Let $i \in [m - 1]$.
We denote by $l_i$ the length of the augmenting path found in the last \texttt{EdgeRecyclingBFS} in the $i$-th call of \texttt{EdgeRecyclingAugmentation}.
Lemma \ref{lem:total_bound_matroid_partition} implies that 
$\displaystyle s_i = O \left(\frac{p}{l_i}\right) $.
We note that, by Lemma \ref{lem:monotonicity_matroid_partition}, the length of shortest augmenting paths never decreases as the partitionable set size increases.

In the $i$-th call of \texttt{EdgeRecyclingAugmentation},
the sum of the sizes of $J$ is upper bounded by $c_i^2 \cdot l_i$, because the size of $J$ is upper bounded by $c_i \cdot l_i$.
Furthermore, by the condition of the while loop in \texttt{EdgeRecyclingAugmentation},
the sum of the sizes of $J$ is at least $2 \bar{p} ( \geq p)$.
Thus, we obtain $c_i^2 \cdot l_i \geq p$, and then we have $\displaystyle c_i^2 \geq \frac{p}{l_i}$.
Since $\displaystyle s_i = O\left(\frac{p}{l_i}\right) $,
we have $\sqrt{s_i} = O(c_i)$, which is the desired conclusion.
\end{proof}

\tatsuya{
By Claim~\ref{claim:1}, 
there is a positive constant $C$ such that $c_i \geq C \sqrt{s_i}$ for all $i \in [m - 1]$.
}

\begin{claim}\label{claim:2}
For all $i \in [m - 1]$, we have
$\displaystyle \frac{C}{\sqrt{1 + C}} \leq \int_{s_i}^{s_{i - 1}} \frac{dx}{\sqrt{x}}$.
\end{claim}
\begin{proof}
Let $i \in [m - 1]$.
Since $c_i \geq C \sqrt{s_i}$,
we obtain 
\begin{align*}
\int_{s_i}^{s_{i - 1}} \frac{dx}{\sqrt{x}} = 
\int_{s_i}^{s_i + c_i} \frac{dx}{\sqrt{x}} \geq &
\int_{s_i}^{s_i + C \sqrt{s_i}} \frac{dx}{\sqrt{x}} 
\geq \int_{s_i}^{s_i + C \sqrt{s_i}} \frac{dx}{\sqrt{s_i + C \sqrt{s_i}}} \\
= & \frac{C\sqrt{s_i}}{\sqrt{s_i + C \sqrt{s_i}}}
= \frac{C}{\sqrt{1 + C \frac{1}{\sqrt{s_i}}}} \geq \frac{C}{\sqrt{1 + C}},
\end{align*}
which completes the proof.
\end{proof}

By Claim \ref{claim:2},
$ \displaystyle
m - 1 = \sum_{i = 1}^{m - 1} 1 \leq \frac{\sqrt{1 + C}}{C} \sum_{i = 1}^{m - 1} \int_{s_{i}}^{s_{i - 1}} \frac{dx}{\sqrt{x}} = O\left( \int_{s_{m - 1}}^{s_{0}} \frac{dx}{\sqrt{x}} \right) = O\left( \sqrt{s_0} \right).
$

Since Lemma \ref{lem:total_bound_matroid_partition} implies that $\displaystyle s_0 = O\left( \frac{p}{d} \right)$,
the number of calls of \texttt{EdgeRecyclingAugmentation} in Line \ref{line:faster_matorid_partition_2} of Algorithm \ref{alg:FasterMatroidPartitionForLaergek} is $\displaystyle O\left(\sqrt{\frac{p}{d}}\right)$.
By Lemma \ref{lem:faster_matroid_for_k_inequality_1}, the proof is complete.
\end{proof}


In Algorithm \ref{alg:FasterMatroidPartitionForLaergek},
we set a parameter $d$ in order to balance the number of independence oracle queries used in Lines \ref{line:faster_matorid_partition_1} and \ref{line:faster_matorid_partition_2}.
Thus we obtain the following proof.

\begin{proof}[Proof of Theorem~\ref{thm:faster_matroid_partition_ind_oracle_2}]
Let $k' = \min \{k, p\}$.
By Lemma \ref{lem:analysis_matroid_parition_for_large_k_1},
we observe that the number of independence oracle queries used in Line \ref{line:faster_matorid_partition_1} of Algorithm \ref{alg:FasterMatroidPartitionForLaergek} is bounded by $O(k' n d \log p + k n)$.
Here, we set $\displaystyle d = \frac{\bar{p}}{k'^{2/3}}$ and run Algorithm \ref{alg:FasterMatroidPartitionForLaergek}.
Then, by Lemmas \ref{lem:analysis_matroid_parition_for_large_k_1} and \ref{lem:analysis_matroid_parition_for_large_k_2}, 
the number of independence oracle queries used in Lines \ref{line:faster_matorid_partition_1} and \ref{line:faster_matorid_partition_2} is $O(k'^{1/3} n p \log p + k n)$.
Furthermore, the number of independence oracle queries used in Lines \ref{line:faster_matorid_partition_4} and \ref{line:faster_matorid_partition_3} is $O(k n)$, which completes the proof.
\end{proof}

Note that Algorithm \ref{alg:FasterMatroidPartitionForLaergek} requires \tatsuya{$\displaystyle O\left(n p \cdot \frac{p}{d}\right) = O(k'^{2/3} n p)$} time complexity other than independence oracle queries.
This is because we use the edge set $E^*$ of size $n p$ in \texttt{EdgeRecyclingBFS} and the number of total \texttt{EdgeRecyclingBFS} calls in Algorithm \ref{alg:FasterMatroidPartitionForLaergek} is $\displaystyle O\left( \frac{p}{d}\right)$.

\section{Concluding Remarks}\label{sec:conclusion}
By simply combining Cunningham's algorithm~\cite{cunningham1986improved} and the binary search technique proposed by
\cite{nguyen2019note, chakrabarty2019faster},
we can not break the $O(n^{5/2})$-independence-query bound for the matroid partitioning problem.
However, we introduce a new approach \emph{edge recycling augmentation} and break this barrier and obtain an algorithm that uses $\tO(n^{7/3})$ independence oracle queries.
This result will be a substantial step forward understanding the matroid partitioning problem.

Our main observation is that some edges of the compressed exchange graph will remain the same after an augmentation, and then we need not query again to find them.
This yields a matroid partition algorithm whose independence query complexity is sublinear in $k$.
This idea is quite simple,
and we believe that our new approach edge recycling augmentation will be useful in the design of future algorithms.

In a recent breakthrough,
Blikstad-Mukhopadhyay-Nanongkai-Tu~\cite{blikstad2021breaking_STOC} broke the $\tO(n^2)$-independence-query bound for matroid intersection.
Then it is natural to ask whether we can make a similar improvement for the matroid partition algorithm.
However, such an improvement is impossible.
As one anonymous reviewer of ICALP 2023 pointed out, it is easy to show that the matroid partitioning problem requires $\Omega(k n)$ independence oracle queries, which is $\Omega(n^2)$ when $k = \Theta(n)$.\footnote{
    Let $M_1, \ldots, M_k$ be matroids of rank $1$ defined over a common ground set $V$ of $n$ elements.
    Now, we construct a bipartite graph $G = (L \cup R, E)$ with $|L| = n$, $|R| = k$ where $(v, i) \in E$ if and only if $\{v\}$ is independent in $M_i$.
    Here, the maximum size of a partitionable set is equal to the maximum size of a matching in $G$.
    It can be viewed as having edge-query access to this graph $G$, since it does not make sense to use an independence query to a set of size at least $2$.
    It is well-known that it needs $\Omega(k n)$ edge queries to find the size of a maximum matching; see \cite{yao1988monotone} for details.
}
Then, there is a clear difference between these two problems.



We also consider a matroid partition algorithm in the rank oracle model and present a matroid partition algorithm that uses $\tO(n^{3/2})$ rank oracle queries when $k \leq n$.
Blikstad et al.~\cite{blikstad2021breaking_STOC} asks whether the tight bounds of the matroid intersection problem are the same under independence oracle model and rank oracle model.
The same kind of problem is natural for the matroid partitioning problem.
Unlike the matroid intersection problem,
we believe there exists a difference between independence oracle and rank oracle in terms of query complexity of the matroid partitioning problem.



\section*{Acknowledgements}
The author thanks Yusuke Kobayashi for his generous support and helpful comments on the manuscript.
The author also thanks the three anonymous reviewers of ICALP 2023 for their valuable comments.
This work was partially supported by the joint project of Kyoto University and Toyota Motor Corporation, titled ``Advanced Mathematical Science for Mobility Society''.

\bibliography{biblio}

\end{document}